\definecolor{mygray}{gray}{0.85}
\begin{document}
\begin{spacing}{1.5}
%
% paper title
% Titles are generally capitalized except for words such as a, an, and, as,
% at, but, by, for, in, nor, of, on, or, the, to and up, which are usually
% not capitalized unless they are the first or last word of the title.
% Linebreaks \\ can be used within to get better formatting as desired.
% Do not put math or special symbols in the title.

\title{
\huge
Offloading Optimization in Edge Computing for Deep Learning Enabled Target Tracking by Internet-of-UAVs
%\huge
 % Computation Offloading-Assisted Framework for 
%Visual Target Tracking for Internet-of UAVs: An Edge Learning-assisted Offloading Framework    %Autonomous Pedestrian Collision Avoidance
%\thanks{This research work is supported in part by the U.S. Office of the Under Secretary of Defense for Research and Engineering (OUSD(R\&E)) under agreement number FA8750-15-2-0119. The U.S. Government is authorized to reproduce and distribute reprints for governmental purposes notwithstanding any copyright notation thereon. The views and conclusions contained herein are those of the authors and should not be interpreted as necessarily representing the official policies or endorsements, either expressed or implied, of the Office of the Under Secretary of Defense for Research and Engineering (OUSD(R\&E)) or the U.S. Government.}
}

\author{Bo~Yang,~\IEEEmembership{Member,~IEEE}, Xuelin~Cao, %~\IEEEmembership{Student Member,~IEEE,} 
  Chau Yuen,~\IEEEmembership{Senior Member,~IEEE}, and~Lijun Qian,~\IEEEmembership{Senior Member,~IEEE}% <-this % stops a space
\thanks{Bo Yang, Xuelin Cao and Chau Yuen are with the Engineering Product Development Pillar, Singapore University of Technology and Design, Singapore, 487372 (e-mail: bo$\_$yang, xuelin$\_$cao, yuenchau@sutd.edu.sg).}
\thanks{Lijun Qian is with the Department of Electrical and Computer Engineering and CREDIT Center, Prairie View A$\&$M University, Texas A$\&$M University System, Prairie View, TX 77446, USA (e-mail: liqian@pvamu.edu)}% <-this % stops a space
%\thanks{Zhu Han is with the Department of Electrical and Computer Engineering, University of Houston, TX 77004, USA (e-mail: zhan2@uh.edu).}

%\thanks{A preliminary conference version of this work was published at the IEEE International Conference on Computer Communications (INFOCOM) WKSHPS 2019~\cite{Infocom_YB}.}
%\thanks{Manuscript received XXX, XX, 2019; revised XXX, XX, 2019.}
}

\maketitle

\begin{abstract}
The empowering unmanned aerial vehicles (UAVs) have been extensively used in providing intelligence such as target tracking. In our field experiments, a pre-trained convolutional neural network (CNN) is deployed at the UAV to identify a target (a vehicle) from the captured video frames and enable the UAV to keep tracking. However, this kind of visual target tracking demands a lot of computational resources due to the desired high inference accuracy and stringent delay requirement. This motivates us to consider offloading this type of deep learning (DL) tasks to a mobile edge computing (MEC) server due to limited computational resource and energy budget of the UAV, and further improve the inference accuracy. Specifically,  we propose a novel hierarchical DL tasks distribution framework, where the UAV is embedded with lower layers of the pre-trained CNN model, while the MEC server with rich computing resources will handle the higher layers of the CNN model. An optimization problem is formulated to minimize the weighted-sum cost including the tracking delay and energy consumption introduced by communication and computing of the UAVs, while taking into account the quality of data (e.g., video frames) input to the DL model and the inference errors. Analytical results are obtained and insights are provided to understand the tradeoff between the weighted-sum cost and inference error rate in the proposed framework. Numerical results demonstrate the effectiveness of the proposed offloading framework.
\end{abstract}

% Note that keywords are not normally used for peerreview papers.
\begin{IEEEkeywords}
Unmanned aerial vehicle, mobile edge computing, deep learning, visual target tracking, offloading.
\end{IEEEkeywords}

\IEEEpeerreviewmaketitle

\section{Introduction}
\IEEEPARstart{D}{uring} the past decade, the unmanned aerial vehicles (UAVs), also commonly known as drones, have been extensively used in providing assorted appealing applications by leveraging UAVs for wireless communications for civilian, commercial and military services \cite{UAV_application}. Noticeably, visual tracking for UAV-captured target has recently gained much attention and has been extensively applied to anticipate crimes by remotely surveilling and tracking suspicious humans or vehicles at
places of interest~\cite{UAV_application02}. In visual target tracking scenarios, UAVs (or drones) collect video data from the cameras (e.g.,  high-resolution digital cameras) and try to detect and lock the target by processing the data frames in near-real-time, with the aid of digital signal processors (DSPs)~\cite{UAV_application03}. Since UAVs generally have severely limited power supply and low computing capability, they can hardly be able to complete the tasks requiring intensive computing by themselves, which impose great challenges on computing capability, low latency as well as requirement on the inference accuracy~\cite{Infocom_YB}.

\begin{figure}[t]
  \captionsetup{font={footnotesize }}
\centerline{ \includegraphics[width=4.9in, height=2.2in]{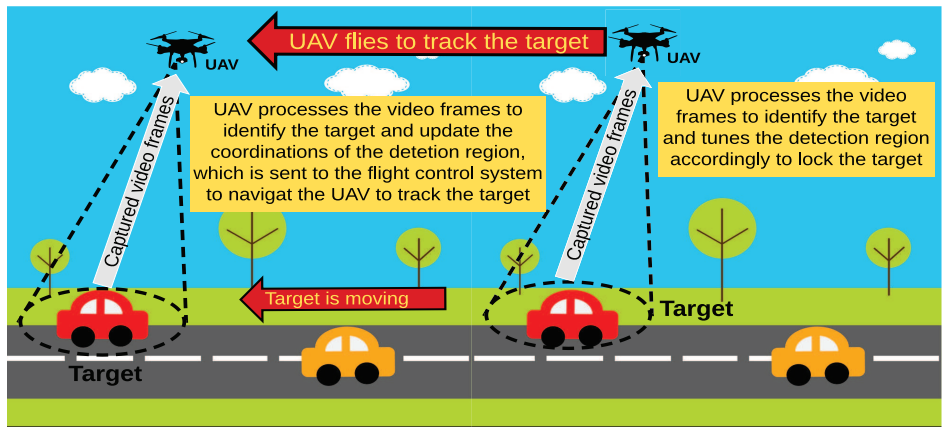}}
\caption{A typical visual target tracking scenario, where \textit{an embedded machine learning (ML) unit at UAV} identifies the target by processing the captured video frames and locks the target by tuning the detection region in near-real-time. Once the target moves fast, the ML unit not only needs to detect the target but also infer and update the coordinates of the detection region from the captured video frames. After that, the inference results are sent to the flight control system (FCS), which navigates the UAV to track the moving target.}
\label{scene}
\end{figure}

In the UAV-enabled aerial surveillance and visual target tracking scenarios, the UAVs usually need to detect and track the targets by processing the streamed video frames captured by the cameras mounted on UAVs in near-real-time, e.g., $20$ frames per second (fps).  A typical scenario of UAV-enabled aerial surveillance and tracking is highlighted in Fig.~\ref{scene}, where the camera is able to follow the target, and actively change its orientation and detection region to optimize for tracking performance based on the visual feedback results. Because those small-scale UAVs intrinsically have limited capabilities, processing a high volume of video streaming with high inference accuracy becomes infeasible. In such a situation, mobile/multi-access edge computing (MEC) is considered a promising technology to address these challenges by offloading the video images to a MEC server (MES) with much more computing capabilities (including computing resources and storage space)~\cite{MEC_survey}. Meanwhile, machine learning (ML) especially deep learning (DL), becomes increasingly popular in many computer vision-based applications~\cite{{UAV_AI01},{UAV_AI02}}. Since the traditional feature engineering is not always well suited for aerial tracking in complex environments, the deep neural networks (DNN)~\cite{survey-1},\cite{vtc}, especially the convolutional neural networks (CNNs) achieving state-of-the-art performance on image classification and recognition~\cite{cvpr}, will be applied to extract critical features from the captured video frames. Hence, it has been a general trend to construct an artificial intelligence (AI) based visual target tracking infrastructure for small-scale Internet-of-UAVs.

\subsection{Motivating Experiments} 
%\textit{Our Vision:} 
To demonstrate the necessity of tasks offloading for visual target tracking, we have designed and trained a modified CNN model for video processing and performed preliminary experiments~\cite{Wu}. In our experiments, we tested a UAV (DJI S1000 drone \cite{DJDrone}) tracking a specific vehicle, as shown in Fig.~\ref{expriment}(a). The UAV is equipped with an embedded GPU (NVIDIA Jetson TX2 \cite{TX2}), a camera (GoPro Hero 4) and some peripherals, as highlighted in Fig.~\ref{expriment}(b). Specifically, the camera captures the video in real-time ($24$ frames per second (fps)) and the video frames are input into the NVIDIA Jetson TX2 for processing. We build the ARM version of TensorFlow and install it on TX2. A pre-trained CNN model is running on TX2, which would process the received video frames and run the target tracking algorithm. The output is transformed into a control command and delivered from TX2 to the flight controller (Pixhawk 4 drone controller). We test the CNN model with $10,000$ video frames and measure the processing time. The experimental results show that TX2 can reach up to $7$ fps, which is far below the $24$ fps required for real-time video frames processing. As can be observed from the field experiments, how to support the neural networks to process the collected data in a timely fashion achieving a certain degree of inference accuracy becomes a challenging issue for UAV tracking due to its constrained carrying capacity. 
To address this concern, MEC provides a promising alternative solution by offloading the computation-intensive DL tasks (e.g., inferring the target by processing video frames within a tolerable delay) to the MES via the underlying radio access technology (RAT) such as WiFi or cellular networks (e.g., LTE) in 5G era~\cite{MEC_survey}.

\begin{figure}[t]
\centering
  \captionsetup{font={footnotesize }}
\subfigure[]{
\includegraphics[width=2.25in,height=1.65in]{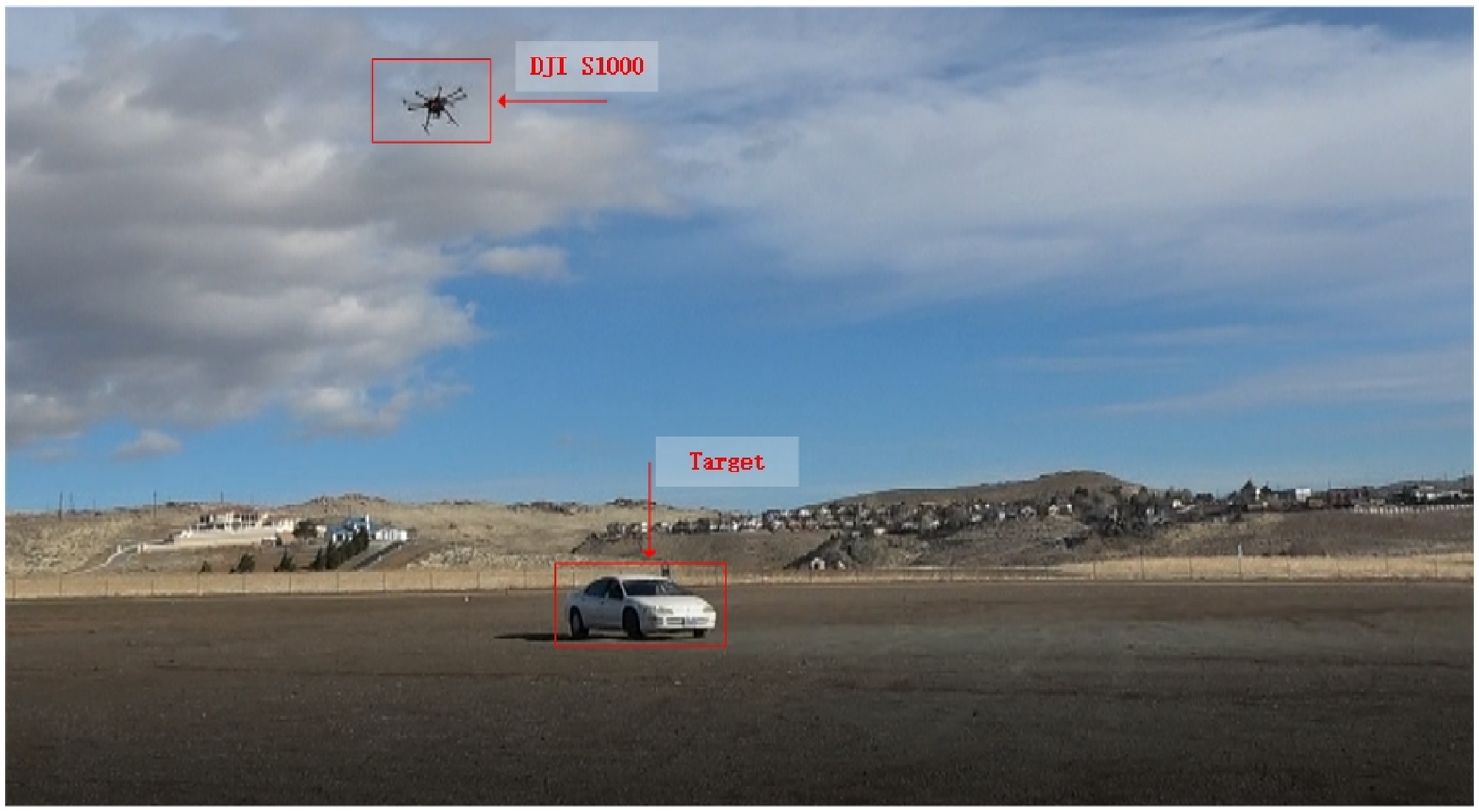}}
\hspace{-0.14in}
\subfigure[]{
\includegraphics[width=2.25in,height=1.65in]{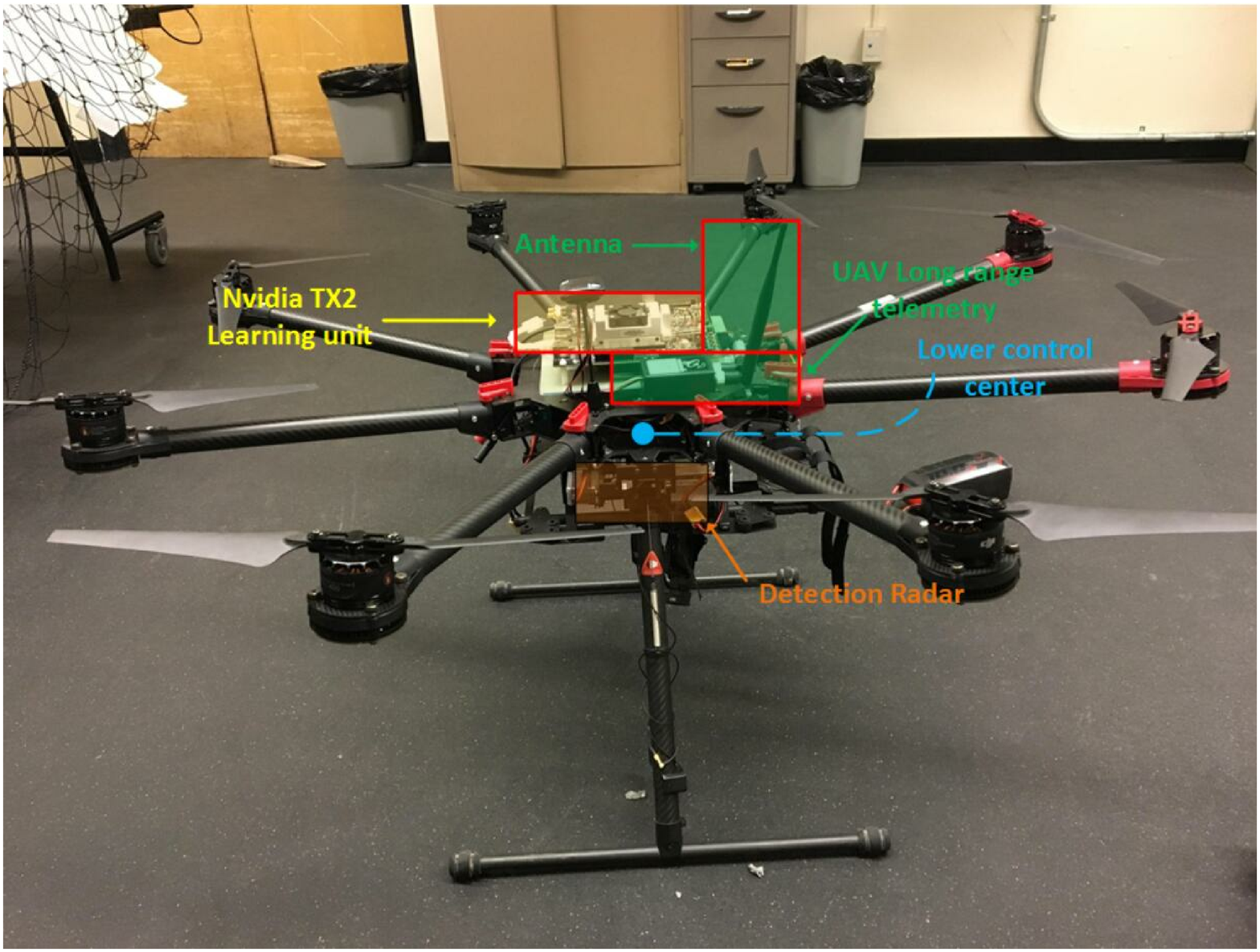}}
\caption{An envisioned visual target tracking test is shown in (a), where the testing UAV is tracking a white car, conducted in San Rafael park in Reno, Nevada, USA. The field testing UAV and its peripherals are detailed in (b).} 
\label{expriment}
\end{figure}

 %The proposed HMTD framework consists of an MES co-located with an Base Station (BS) providing computation services for the UAV to improve the inference accuracy, as highlighted in Fig.~\ref{system_model}. %Considering that the intermediate data size can be significantly scaled down by each layer in the deep learning model, which therefore is well applicable to the MEC setup since it is feasible to offload parts of layers (e.g., higher level learning layers) from the UAV to the MES. With the proposed HMTD framework applied to a multi-UAV single MES system, we study the optimization of the task offloading ratio to minimize the total delay while taking into account the communications delay, computing time and inference error. 

\subsection{Contributions and Paper Organization}
Motivated by the envisioned field testing, how to process DL tasks for the visual target tracking in a Internet-of-UAVs setup through computation offloading will be the focus of this work. In this paper, we propose a hierarchical machine learning tasks distribution (HMTD) framework, which aims to minimize the total weighted-sum cost of the UAVs with the inference error rate constraint by jointly considering the quality of data input to the DL model\footnote{This is hereafter abbreviated as ``data quality".}, computing capability at the UAV and the MES, and communications bandwidth. The main contributions of this paper are summarized as follows.

\begin{itemize}
\item \textbf{An HMTD framework:} We propose an HMTD framework for the deep learning based visual target tracking system to minimize the weighted-sum cost with the inference error rate constraint. In the proposed framework, the lower-level layers of the deep learning model are implemented at the UAV, while the higher-level layers are deployed at the MES. As a whole, the intermediate results generated by the UAV can be used directly or can be offloaded to the MES for further processing to improve the inference accuracy and decrease the total processing delay. After processed by the MES, the final results will be transferred back to the UAV to help in target tracking.

\item \textbf{Weighted-sum cost minimization problem and solution:} We formulate a weighted-sum cost minimization problem for both of the binary offloading and partial offloading schemes while taking into account the data quality, computing capability of the UAVs and the MES, and communications bandwidth. A closed-form optimal offloading probability and optimal offloading ratio is derived analytically for the binary offloading and partial offloading schemes, respectively. %This could provide guidance for the implementation of the proposed HTMD framework.

\item \textbf{Insights and results:} Some Insights are provided illustrating the effects of key parameters in the proposed offloading framework. This enables us to take advantage of the knowledge from ML research field for realistic visual target tracking scenarios. Numerical results are given to demonstrate the effectiveness of the proposed HMTD framework with the optimized offloading scheme.
\end{itemize}

\begin{table}[t]  
% increase table row spacing, adjust to taste
%\renewcommand{\arraystretch}{1.2}
  \captionsetup{font={footnotesize}} 
\caption{ List of Key Notations } 
\small
\label{notations}  
\centering  
\begin{tabular}{|c | l|}  
\hline
\textbf{Notation} & \textbf{Description} \\
\hline 
$U_i$ & The $i$-th offloading UAV \\
\hline
$n$ & Total number of offloading UAVs \\
\hline
$\bf N$ & Offloading UAVs set \\
\hline
$l_L$ & Number of lower-level layers \\
\hline
$l_H$ & Number of higher-level layers \\
\hline
$K$ & Total number of deep learning layers \\
\hline
$\epsilon_L$ & Inference error rate given by UAVs \\
\hline
$\epsilon_H$ & Inference error rate given by the MES \\
\hline
$\epsilon_T^i$ & Inference error rate threshold of $U_i$ \\
\hline
$\eta$ & Percentage of data with bad quality \\
\hline
${\cal J}_i$ & The deep learning task of $U_i$\\
\hline
$s_i$ & Size of ${\cal J}_i$\\ 
\hline
$c_i$ & CPU cycles required to process  ${\cal J}_i$\\ 
 \hline
$\sigma_i$ & Maximum tolerable delay of ${\cal J}_i$\\ 
 \hline
$R_i$ & Achieved data rate between $U_i$ and the MES\\ 
 \hline
%$D$ & Offloading decision of the UAV\\% \hline
$f_l^i$ & CPU cycle frequency of $U_i$\\ \hline
$f_m$ & CPU cycle frequency of the MES \\
 \hline
%${\cal O}_{total}$ & Sum cost of all devices\\ \hline
$\mu_i$ & Offloading probability of $U_i$\\ \hline
$\mu_i^*$ & Optimal offloading probability of $U_i$\\ \hline
$\beta_i$ & Offloaded ratio of $U_i$ \\ \hline
$\gamma_i$ & Scale coefficient of data size output from $U_i$\\ \hline
$h_0$ & Channel gain at the reference distance of $1\ m$ \\
\hline
$\lambda_i$ & Distance between $U_i$ and the MES \\
\hline
$P_t^i$ & Transmission power of $U_i$\\ \hline
$P_I^i$ & Idle power of $U_i$\\ \hline
${\cal H}_i$ & Channel power gain for $U_i$ connecting with MES\\ \hline
$\rho_i$ & Failing penalty on the delay of $U_i$ \\
\hline
$\xi_i$ & Failing penalty on the energy consumption of $U_i$ \\
\hline
$\theta$ & UAV's preference on processing delay \\
\hline
$\tau_{l}^i$ &  Local execution delay of $U_i$\\ \hline
$\tau_{o}^i$ &  Execution delay of $U_i$ using offloading\\ \hline
$\varepsilon_{l}^i$ & Energy consumption processing ${\cal J}_i$ locally\\ \hline
$\varepsilon_{o}^i$ & Energy consumption processing ${\cal J}_i$ using offloading\\ \hline
${\cal O}_{l}^i$ & Weighted-sum cost processing ${\cal J}_i$ locally\\ \hline
%$T_{u}$ & Time to upload ${\cal J}$ to the MES\\ \hline
%$T_{p}$ & Time to execute ${\cal J}$ by the MES\\\hline
%$T_{d}$ & Time to transmit the execution result back to $d_i$\\ \hline
%$e_{u}$ & UAV's energy consumption during the data uploading\\ %\hline
%$e_{I}$ & UAV's energy consumption during the data processing at MES\\ %\hline
%$e_{d}$ & UAV's energy consumption  during downloading the result\\ %\hline
 %\hline
%$\varepsilon_{o}$ & Energy consumption processing ${\cal J}$ using offloading\\ %\hline
${\cal O}_{o}^i$ & Weighted-sum cost processing ${\cal J}_i$ using offloading\\ \hline
${\cal O}_{i}^B$ & Weighted-sum cost of $U_i$ using binary offloading\\ \hline
${\cal O}_{total}^B$ & Total cost of all the UAVs using binary offloading\\ \hline
${\cal O}_{i}^P$ & Weighted-sum cost of $U_i$ using partial offloading\\ \hline
${\cal O}_{total}^P$ & Total cost of all the UAVs using partial offloading\\ \hline
\end{tabular}  
\end{table} 

%\subsection{Organization and Notations}
 The remainder of this paper is organized as follows. Section~\ref{review} reviews the related works. In Section~\ref{system_model_section}, we present the system model and illustrate the proposed HMTD framework. In Section~\ref{binary_offloading_framework}, we present the optimization of the binary offloading framework, followed by the partial offloading optimization in Section~\ref{partial_offloading_framwork}. Some implementation issues are discussed and numerical results are presented in Section~\ref{results}. Finally, Section~\ref{conclusion} concludes this paper.

\textit{Notations:} As per the traditional notation, a bold letter indicates a vector and an upper case letter indicates a random variable or random parameter. $\max \{  \cdot \}$ and $\min \{  \cdot \}$ represent the maximum value and the minimum value, respectively.  Given a vector $\mathbf{x}$, then $\mathbf{x}^T$ denotes its transpose and $\left \| \mathbf{x} \right \|$ denotes its Euclidean norm. For ease of reference, Table \ref{notations} list some key notations.

\section{Literature Review} \label{review}

Recently, many pieces of literature concern the implementation of UAVs intending to improve the performance of the wireless communication system, where UAVs play the role of aerial surveillance and monitoring \cite{{UAV_Surveillance_survey1}, {UAV_Surveillance_survey2}, {UAV_tracking03}}, or as mobile relaying and ubiquitous coverage \cite{UAV_relay01, UAV_relay02, UAV_cover00,UAV_cover01}. 
To elaborate a little further, in \cite{UAV_Surveillance_survey1}, N. H. Motlagh \textit{et al.} introduced
the case of UAV-based crowd surveillance and developed a testbed using a built-in UAV along with a real-life LTE network. G. Ding \textit{et al.} \cite{UAV_Surveillance_survey2} developed an amateur drone surveillance system based on cognitive IoT, named Dragnet, tailoring the emerging cognitive internet of things framework for amateur drone surveillance. To fully explore the potential of multi-UAV sensor networks, in \cite{UAV_tracking03}, J. Gu \textit{et al.} proposed a new cooperative network platform and system architecture of multi-UAV surveillance. Moreover, Y. Zeng \textit{et al.} \cite{UAV_relay01} studied the throughput maximization problem in UAV relaying systems by optimizing the source/relay transmit power along with the relay trajectory. In \cite{UAV_relay02}, H. Wang  \textit{et al.} investigated the spectrum sharing planning problem for a full-duplex UAV relaying systems with underlaid D2D communications, where a mobile UAV employed as a full-duplex
relay assists the communication between separated nodes without a direct link. Furthermore, S.A.R. Naqvi \textit{et al.} \cite{UAV_cover00} presented a routing protocol for UAVs in disaster-resilient networks and presented a case study that incorporated UAVs in a wireless network equipped with both high- and low-power BSs. And in \cite{UAV_cover01}, M. Mozaffari \textit{et al.}  investigated the performance of a UAV that acts as a flying base station in an area in which users are engaged in the D2D communication.

With the development of MEC, there are assorted appealing applications by leveraging MEC techniques for wireless communications assisted by UAV. For example, a range of researchers in \cite{{UAV_MEC_1},{UAV_MEC_2},{UAV_MEC_3},{UAV_MEC_4}} proposed UAV-aided offloading systems, where the ground devices can be served by the flying UAVs. 
However, this kind of works is restricted by the capability of UAVs and can only be applicable for the large endurance UAVs endowed with computing capabilities to offer computation offloading services. 
With the rapid rise of small-scale commercially available UAVs which have several advantages in terms of cost, scalability, and survivability, the empowering smart UAVs with automated computer vision capabilities (e.g., object detection and tracking, etc.) is becoming a very promising research topic which has attracted the attention of industry and academia in the field \cite{UAV_Surveillance_survey1}.

In this context, with the aid of advantages of MEC, the tasks can be fully offload to MES or processed locally at the UAVs (denoted as binary offloading \cite{{Full-1},{Full-2},{YB_WCL}}), or only a portion of tasks is offloaded, which is indicated as partial offloading \cite{Partial-1}. However, binary offloading may neither satisfy the inference accuracy requirements due to constraints on the limited computing resources of the UAVs nor meet the demands of tolerable delay since additional wireless communication delay is introduced during the offloading. To meet the stringent delay requirement as well as achieve the inference accuracy of target tracking, it is desired to offload a proper portion of tasks to the MES. Furthermore, although DNN and MEC techniques are widely applied to enable delay-sensitive applications such as in industry settings~\cite{{UAV_DL1},{YC_TII}}, and vehicular networks~\cite{{XK_TVT},{XK_ITS}}, the inference error introduced by the DL model is rarely considered. However, in the practical DL applications, the inference errors will be affected by the quality of the data input to the neural networks. This is still an open issue for research of DL~\cite{error}.

\section{System Model and the Proposed Framework}\label{system_model_section}
%\subsection{Overview}
As illustrated in Fig.~\ref{system_model}, a multi-UAV single-MES system is considered, where the UAVs are devoted to tracking a specific target, e.g., a vehicle or pedestrian. Suppose that there are total $n$ UAVs that may offload the tasks to the MES through LTE cellular network and the set of offloading UAVs is denoted as ${\bf N}=\{U_1,U_2,...,U_n\}$. 

\begin{figure}[t]
  \captionsetup{font={footnotesize }}
\centerline{ \includegraphics[width=3.75in, height=2.75in]{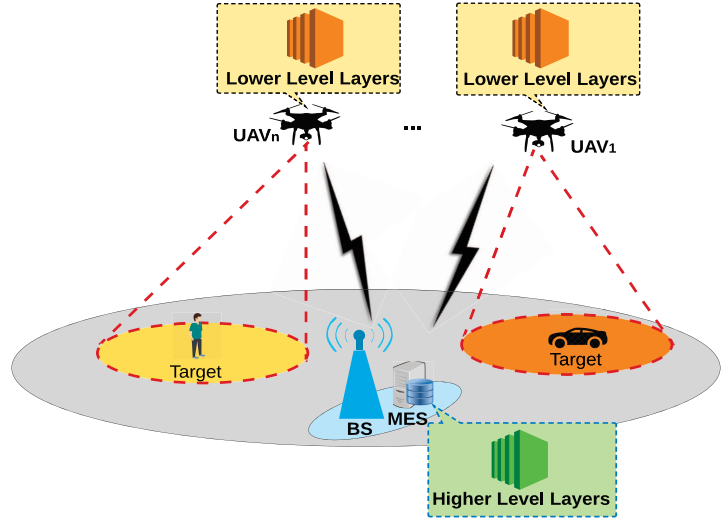}}
\caption{MEC based UAV tracking system model using the proposed hierarchical machine learning tasks distribution (HMTD) framework. Taking the CNN model as an example, the lower-level layers and higher-level layers consist of  the $Convolution + ReLU$ layers and $Pooling$ layers for the feature extraction, and the fully connected ($FC$) layers are deployed to generate the classification and regression results.}
\label{system_model}
\end{figure}

\subsection{The Proposed HMTD Framework}
In the visual target tracking system, we aim to optimize the system performance by designing an HMTD framework, where a deep learning model (e.g., the CNN model) is first pre-trained offline and then the trained model is further divided into two parts: {lower-level layers} and {higher-level layers}. The lower-level layers are deployed at the deep learning unit of the UAVs (e.g., NVIDIA Jetson TX2 unit in DJI S1000 done) and the higher-level layers are implemented at the MES co-located with an Base Station (BS).
The tasks are first processed locally with the lower-level layers saving the wireless bandwidth, whilst some of the intermediate data can be further offloaded to the MES with higher-level layers improving the inference accuracy.
To make the proposed HMTD framework easier to follow, some essential concepts are detailed as follows.

\theoremstyle{Definition} 

\newtheorem{definition}{\textit{Definition}}

%\subsection{Deep Learning Model}

\begin{definition}
\label{D1}
\textit{\textbf{Deep Learning Layers:} }For a pre-trained deep learning model (e.g., CNN in this paper), we define the layers of the DL model near input data as \textbf{lower-level layers} while the layers near output data is considered as \textbf{higher-level layers}. \end{definition}

\begin{figure*}[t]
  \captionsetup{font={footnotesize }}
\centerline{ \includegraphics[width=7.25in, height=3.55in]{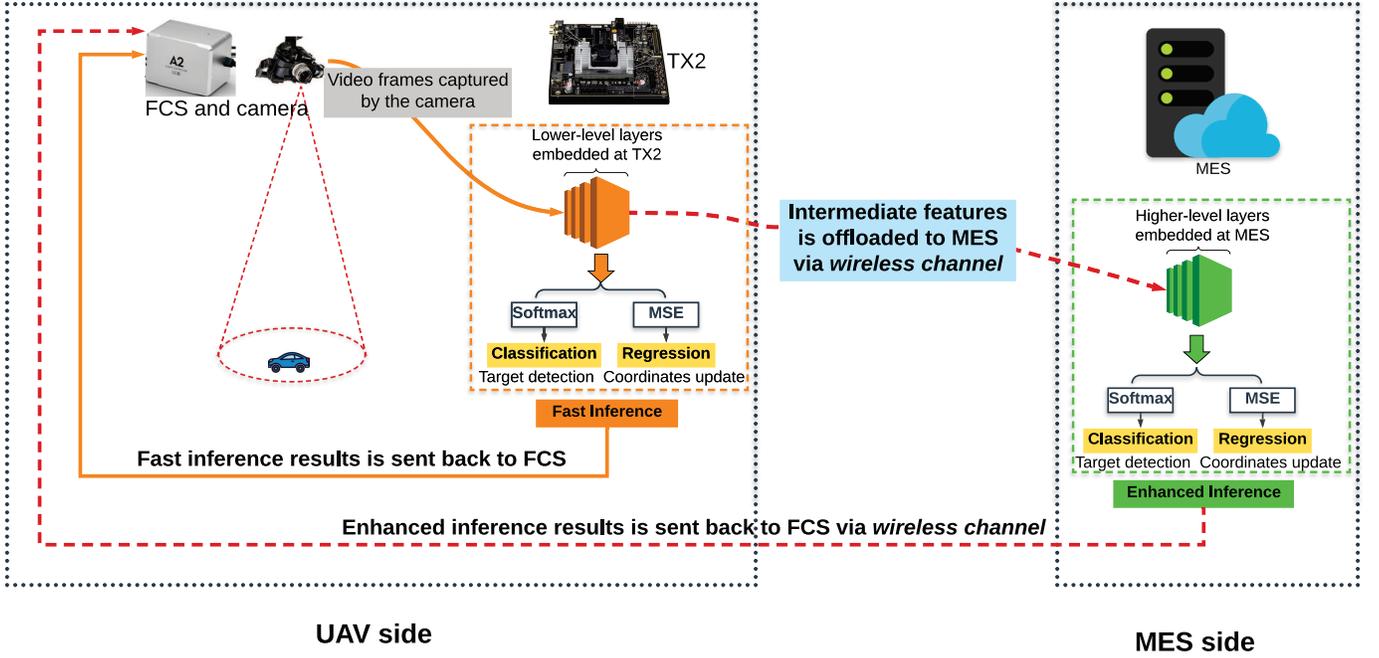}}
\caption{The proposed HMTD framework. In this framework, two inference modes (fast inference and enhanced inference) can be selected, where the fast inference is given by the lower-level layers embedded at the TX2 of the UAV and the enhanced inference is given by the higher-level layers deployed at the MES. The intermediate features data is offloaded to the MES and the inferring results can be transmitted back to the FCS of the UAV via the wireless channel to help the UAV to track the target.}
\label{NN_framework}
\end{figure*}

 As shown in Fig.~\ref{system_model}, taking the CNN model as an example, both of the lower-level layers and higher-level layers consist of $Convolution + ReLU$  layers and $Pooling$ layers for the feature extraction. Due to the constrained computation resources available in the UAV, it is reasonably assumed that lower-level layers are embedded at the UAV, which, however, makes it difficult to achieve inference with relatively high accuracy \cite{Harvard}, especially when the captured images are with low quality. The part with higher-level layers is deployed at the MES with more powerful computation resources. In this framework, the collected video frames are first fed into the lower-level layers and processed by the UAV. To further improve the inference accuracy, the intermediate features output from the UAV could be further offloaded to the MES with higher-level layers. %Without loss of generality, we assume that UAVs perform the detection region tuning as soon as the inference results are achieved. 

\begin{definition}
\textbf{Inference Modes:} The inference given by the lower-level layers is called \textbf{fast inference}, and the inference performed by the higher-level layers is denoted as \textbf{enhanced inference}.
\end{definition}

In general, the enhanced inference outperforms the fast inference in terms of inference accuracy at the expense of introducing additional wireless transmission delay. In the visual target tracking system, there exist two branches for the inference: \textbf{1) Target detection}, which can be considered as a \textit{multi-class classification problem}, and 2) \textbf{Inferring the coordinates of the detection region}, which can be considered as a \textit{regression problem}. In the proposed HMTD framework, the multi-task learning can be adopted to optimize the loss functions of the two branches together~\cite{ICC_YB}, as illustrated in Fig.~\ref{NN_framework}.

\begin{definition}
\textbf{Offloading Strategy:} Two kinds of offloading modes are considered for each DL task (${\cal J}_i$) of $U_i$, i.e., \textbf{binary offloading} and \textbf{partial offloading}. In the binary offloading strategy,  the inference results are obtained either using the fast inference or using the enhanced inference. In the partial offloading strategy, the inference results could be obtained by both of the two inference modes.
\end{definition}

\theoremstyle{Remark} 
\newtheorem{remark}{{\textit{Remark}}}

\begin{definition}
\label{D3} 
 \textit{\textbf{Inference Error rate:}} Given a trained DL model (denoted as ${\cal D}$) and input data with a certain quality\footnote{In this paper, the input data is the video frames, the quality of which can be evaluated with Peak-Signal-to-Noise-Ratio (PSNR)~\cite{psnr}. Given a PSNR threshold evaluating the worst quality of image that can tolerate, denoted as $\xi$, then an image indexed $f$ meeting the condition ${\rm PSNR}_f \geq \xi$ can be considered as ``Good", and vice versa.} (denoted as $\cal Q$), the inference error rate is defined as $\epsilon \!=\! g(\mathcal {Q,D})$, where $\epsilon \in [0,1]$, $g(\cdot)$ is a mapping function. In the proposed HMTD framework, the Intersection-over-Union (IoU, denoted as $\omega$) is used to characterize the inference error rate, i.e.,  $\epsilon \!=\! 1-\omega$, where $\omega= \frac{r_D \bigcap r_G}{r_D \bigcup r_G}$, $r_D$ and $r_G$ indicates the detection region and the ground truth region, respectively~\cite{IoU}. Specifically, there is no inference error when $\omega=1$ (i.e., detection region and the ground truth region totally match) while the inference is totally wrong when $\omega=0$. 
\end{definition}

As illustrated in Fig.~\ref{NN_framework}, the number of lower-level layers and higher-level layers is denoted as $l_L$ and $l_H$, respectively. Denote the total layers of the DL model as $K$, we have $l_L\!+\!l_H\!=\!K$. Since it is still a challenge for UAVs to recognize targets from low-quality video frames due to the limited ability of image processing, in this paper, we assume that there exists a certain probability that the fast inference and enhanced inference fail. Without loss of generality, we assume that the image quality may vary from one video frame to another\footnote{The image quality may be affected by the clarity of each video frame and the distance between UAVs and targets~\cite{sensing}.}. As a result, the video image quality maybe sometimes not sufficiently good for the UAVs and the MES to achieve the correct inference. In this case, to improve the inference accuracy and meet the latency demand of the target tracking, the intermediate data from the lower-level layers could be further offloaded to the MES to improve the inference accuracy.
Although it is difficult to obtain an exact analytic formula for the mapping function $\epsilon \!=\! g(\mathcal {Q,D})$ in reality, the observations and conclusions in the paper do {NOT} depend on the exact formula and would not change even the exact formula changes because the trends will remain similar: 
\textbf{good data quality, stronger DL model and more DL layers will have less inference error} \cite{{quality_on_DL01},{quality_on_DL02}}.

\begin{remark}
In the proposed HMTD framework, there exists a \textit{\textbf{trade-off}} between the achieved inference accuracy and the introduced delay. In other words, the two inference modes have merits and shortcomings, i.e., the achievement of a low inference error rate is at the expense of inference delay. For example, when a large portion of the data is with ``Bad" quality, we may not be able to keep the overall inference delay small enough because the inference error rate constraint should also be satisfied. Therefore, the target losing may still occur. This suggests that the UAVs should combine the fast inference with the enhanced inference smartly and allow ample time to ``learn'' the sensing data and take proper actions during challenging environment such as bad weather or the high mobility of the target.
\end{remark}

\subsection{DL Tasks Model}
In the proposed HMTD framework, the UAVs capture video sequence with the embedded camera and then the captured video frames are required to be processed to infer and update the coordinates of the detection region in near-real-time. 

\begin{definition}
\label{D2}
\textit{\textbf{Deep Learning Tasks:}} In this paper, the DL task is defined as the task processed by the pre-trained DL model. The input of DL tasks is the captured images and the output include two folds: target detection given by the classification branch and detection region coordinates inferred by the regression branch, as illustrated in Fig.~\ref{NN_framework}.
\end{definition}

For $U_i$, the DL task can be characterized by a three-tuple of parameters, i.e., ${\cal J}_i(s_i, c_i, \sigma_i)$. Specifically, $s_i$ [bits] denotes the size of computation input data, $c_i$ [cycles] denotes the total number of CPU cycles required to accomplish the computation of $s_i$, and $\sigma_i$ [secs] denotes the maximum tolerable delay.  Due to the environmental changes between UAV and the MES, the wireless channel condition may vary accordingly, which may lead to the unavailability of the wireless channel in some cases. Specifically, if the wireless link is available, then the UAVs can offload the DL tasks to the MES and can also receive the results from the MES via the wireless link. Otherwise,  the DL tasks cannot be offloaded to the MES due to the wireless channel between UAVs and the MES is unavailable (e.g., wireless channel suffers deep fading). 

\begin{table}[t] 
  \captionsetup{font={footnotesize}} 
\caption{Mapping Relationship between Inference Error Rate and Data Quality}
\label{mapping}
\small
\begin{center}
\begin{tabular}{c|c|c}
  \hline
 \diagbox{\textbf{Data Quality}}{\textbf{Error Rate}}  {\textbf{Inference}}%& ES 
 & Fast & Enhanced\\
 \hline
 Good & $\epsilon_L$ & $0$ \\
 \hline
 Bad  & $1$ & $\epsilon_H$ \\
  \hline
\end{tabular}
\label{tab1}
\end{center}
\end{table}
 
For simplicity, in this paper, suppose that the quality of the video frames captured by the UAV falls into two categories: ``Good" or ``Bad"\footnote{In the UAV tracking context, the video frames' quality could be affected by the distance and the surrounding environment, e.g., the severe weather such as thunderstorm and sand dust, etc.}. The mapping relationship between the inference error rate of the DL model and the input data quality is illustrated in Table~\ref{mapping}\footnote{{The mapping relationship $g(\cdot)$ given in Table~\ref{mapping} may be roughly in practice. To solve this problem, we can trace the mapping relationship via curve-fitting based on the testing experimental results. Specifically, we collect the images data and then evaluate the data quality $\cal Q$ with the PSNR metric. The normalized metric $\cal D$, ${\cal D} \in [0,1]$, is used to evaluate the capability of the DL model, and we suppose that a larger value $\cal D$ indicates a stronger DL model. Then we perform the testing experiments and calculate the inference error rate with varying combinations of $\cal D$ and $\cal Q$. According to the testing results, we could use curve-fitting technique to fit the inference error rate mapping curve and determine the related coefficients correspondingly~\cite{YB_TWC}. With the fitted mapping curve, the inference error rate $\epsilon$ can be inferred continuously once $\cal D$ and $\cal Q$ are given.}}. Specifically, the lower-level layers embedded at the UAV infer the ``Good" frames with an average error rate of $\epsilon_L \in (0, 1)$, while error probability $1$ is assumed when faced with the frames with ``Bad" quality. For the higher-level layers implemented at the MES, the average inference error rate  $\epsilon_H \in (0,1)$ is achieved when processing the ``Bad" frames while no error occurring is assumed on processing the ``Good" frames. 

\subsection{Communication Model}
A three-dimensional Cartesian coordinate system is used to characterize the communication link between the UAV in aerial and the MES on the ground, as illustrated in Fig.~\ref{fig3}. It is assumed that the MES is located at position $\textbf{\textit{p}}_M=[x_M, y_M]^T\in \mathbb{R}^{2\times 1}$, the UAV flies along a horizon trajectory with a fixed altitude $H$ and communicates with its associated BS in a time-division manner. In this case, the position of $U_i$ can be characterized by the discrete-time locations, i.e., $\textbf{\textit{p}}^i_U=[ x^i_U, y^i_U ]^T\in \mathbb{R}^{2\times 1}$. As the altitude of the UAV is much higher than that of the MES on the ground, it is reasonably assumed that the communication channels between MES and UAV are dominated by line-of-sight (LOS)\cite{UAV_MEC_1}. In this case, it is reasonably assumed that the channel condition does not change within each offloading procedure. The channel gain between $U_i$ and the MES can be obtained as
\begin{equation}\label{e01}
{\cal H}_i=h_0 \lambda_i^{-2}=\frac{h_0}{H^2+\left \| \textbf{\textit{p}}_M - \textbf{\textit{p}}^i_U \right \|^2}, \ i\in {\bf N},
\end{equation}
where $h_0$ represents the channel gain at the reference distance of $1\ m$, $\lambda_i$ denotes the distance between $U_i$ and the MES.

%Besides, according to the NOMA scheme which is used to bear the offloading traffic from the UAVs and MES\footnote{Although an uplink NOMA scenario is assumed in this paper, our work can be also extended into other orthogonal multiple access schemes with a minor modification on the communication model.}, the received signal from $U_i$ at MES can be obtained as
%\begin{equation}\label{e02}
%y_i=\sqrt{P_t^i}h_ix_s+\sum_{j\neq i, j\in{\cal N}}{\sqrt{P_t^j}h_jx_j}+z_i,
%\end{equation}
%where $P_t^i$ is the transmission power of $U_i, \ \forall i \in \cal N$. $z_i$ denotes the noise power which can be considered as the white Gaussian noise with zero mean and variance $\chi^2$. 

%It is known that the successive interference cancellation (SIC) modular is used in NOMA to separate and decode the overlapped signals \textit{in descend order} according to the channel gains \cite{NOMA}, i.e., $h_1 \ge h_2 \ge h_3 \ge ... \ge h_N, \ \forall i\in \cal N$. Suppose that the MES adopts the non-preemptive CPU allocation scheme and thus can only devote to process only one task at a time until it is completed. Therefore, the queueing delay in the MES cannot be ignored. Denote ${\cal Q}=\left \{ Q_1, Q_2, ..., Q_N  \right \}$ as the process sequence at MES, and tasks are executed in the ascending order of $Q_i$. Then the queuing delay of ${\cal J}_i$ served by the MES can be calculated as $t_{Q}^{i}=\sum_{j,Q_j<Q_i}^{N}t_{P}^{j}, \ j \in \cal N$, where $t_{P}^{j}$ denotes the time to process $s_j$ of ${\cal J}_j$ at MES.

\begin{figure}[t]
  \captionsetup{font={footnotesize }}
\centerline{ \includegraphics[width=2.25in,height=1.65in]{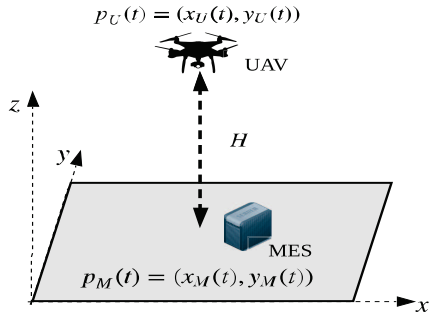}}
\caption{Communication model between UAVs and the MES.}
\label{fig3}
\end{figure}

We consider the MEC system with OMA (e.g., OFDMA) as the multiple access scheme in the offloading, in which the UAVs offload their DL tasks to the MES via orthogonal sub-bands simultaneously\footnote{ In the cellular IoVs based visual target tracking  system, the uplink transmission of the intermediate data from the lower-level layers to the higher-level layers dominates~\cite{zhang}. Therefore, in this article, we focus on the uplink transmission from the UAVs to the MES. Although an OMA scenario is assumed in this paper, our proposed offloading framework can be also extended into scenarios using more advanced non-orthogonal multiple access (NOMA) with a minor modification on the communication model.}. Denote the noise power as the white Gaussian noise with zero mean and variance $\chi^2$, $P_t$ is the transmission power of the UAVs, then the received signal-to-interference-plus-noise ratio (SINR) at the MES can be calculated as
$\frac{P_t |{\cal H}_i|^2}{\chi^2}$.
Therefore, the available transmission rate between $U_i$ and the MES can be calculated as
\begin{equation}\label{e02}
R_i=\frac{B}{n} log_2{\left ( 1+ \frac{P_t |{\cal H}_i|^2}{\chi^2} \right )}, \ i \in {\bf N},
\end{equation}
where $B$ stands for the transmission bandwidth between the UAV and MES, which can be further divided into $n$ sub-bands for the offloading communication.
%Denote the maximum speed of the UAV as $V_{max}$. Then the UAV’s trajectory constraints is given by $\left \| \textbf{\textit{p}}^u_{i+1} - \textbf{\textit{p}}^u_i \right \| \leq V_{max}$

 % Moreover, it is assumed that the non-preemptive central processing unit (CPU) allocation is adopted at the MES and thus the total computing resources will be allocated to only one device each time until its tasks complete \cite{Full-2}. 

%\subsection{Computing Model}
Without loss of generality, two kinds of offloading strategies, i.e., binary offloading and partial offloading, are considered to optimize the proposed offloading framework, which is detailed in Section~\ref{binary_offloading_framework} and Section~\ref{partial_offloading_framwork}, respectively.

\section{Optimization for Binary Offloading Framework}\label{binary_offloading_framework}
\subsection{Problem Formulation}
For the binary offloading scheme, the offloading probability is first introduced as below.

\begin{definition}
\label{offloading_p}
 \textit{\textbf{Offloading Probability:}}  For the $i$-th offloading UAV $U_i$, the offloading probability ($\mu_i$) is defined as the probability that the UAV offloading the tasks to the MES. Suppose that the offloading UAVs can evaluate the inference error rate obtained by the DL lower-layers in near-real-time~\cite{Vehicle01}, denoted as ${\epsilon_L}$. If ${\epsilon_L}$ is above a certain threshold $\epsilon_T^i$, then the data needs to be offloaded to the MES to guarantee the inference accuracy. Thus, the offloading probability of $U_i$ equals the probability that $\epsilon_L \geq \epsilon_T^i$, i.e., 
\begin{equation}\label{P_offloading} 
\mu_i= {\rm{Pr}}\left \{\epsilon_L \geq \epsilon_T^i \right \} =\int_{\epsilon_{\rm th}^i}^{\infty } e^{-x}dx=e^{-\epsilon_T^i}.
\end{equation}
\end{definition}

\subsubsection{\textbf{Local Computing}}
Denote $f^i_{l}$ as the CPU-cycle frequency (i.e., CPU cycles per second) of $U_i$, $i \in {\bf N}$, the local computation delay is calculated as 
\begin{equation}\label{binary_local_delay}
\tau^i_{l} = \frac{c_i}{f_l^i}+ \rho_i  \left (\left (1-\eta\right )\epsilon_L + \eta \right ),
\end{equation}
where $\eta$ denotes the probability that the DL tasks data is considered as ``Bad", $1-\eta$ denotes the probability that the data is considered as ``Good". $\rho_i$ is introduced as all of the DL tasks failing and dropping penalty of delay\footnote{Suppose that the UAV may lose the target due to the tasks processing latency or the fast moving speed of the target. This is because the target might appear easily outside of the searching region used for the tracking, and enlarging the searching region can cause the delay and energy performance degradation to the visual target tracking.}, which is generally no smaller than the tasks processing delay, i.e., $\rho_i > {\rm max} \{\frac{c_i}{f_l^i}, \frac{c_i}{f_i} \}$, $f_i$ denotes the allocated CPU computation resource to $U_i$ by the MES.

According to the widely adopted model of the energy consumption\cite{k1}, the energy consumption processing ${\cal J}_i$ locally with the CPU clock speed $f _{l}^{i}$ can be calculated as
\begin{equation}\label{binary_local_energy}
\varepsilon_{l}^{i}=\kappa \left ( f_{l}^{i} \right )^2  c_i + \xi_i  \left (\left (1-\eta\right )\epsilon_L + \eta \right ),
\end{equation}
where $\kappa$ denotes the energy efficiency parameter that is mainly depends on the chip architecture\cite{k}, $\xi_i$ is introduced as all of the DL tasks failing and dropping penalty of energy consumption.

Based on (\ref{binary_local_delay}) and (\ref{binary_local_energy}), the weighted-cost for computing ${\cal J}_i$ locally for binary offloading scheme is achieved as
\begin{equation}\label{binary_local_cost}
{\cal O} _{l}^{i}=\theta \tau _{l}^{i} + \left(1-\theta \right) \varepsilon_{l}^{i},
\end{equation}
where $\theta$, $0\le \theta\le 1$, specifies the UAV's preference on processing delay, while $1-\theta$ specifies the UAV's preference on energy consumption. 

Substituting (\ref{binary_local_delay}) and (\ref{binary_local_energy}) into (\ref{binary_local_cost}), ${\cal O} _{l}^{i}$ is derived as 
\begin{equation}
\label{binary_local_cost_1}
{\cal O} _{l}^{i}= \theta\frac{c_i}{f^i_l}+\left(1-\theta \right) \kappa \left ( f_{l}^{i} \right )^2 c_i + 
 \tilde{\epsilon_L} \left (\theta \rho_i + \left(1-\theta \right) \xi_i \right),
\end{equation}
where $\tilde{\epsilon_L}=\left (1-\eta\right )\epsilon_L + \eta$.

\begin{remark}
An UAV with short battery life is prone to decrease the coefficient $\theta$ so as to save more energy at the expense of longer tasks processing delay, and vice versa.
\end{remark}

%\begin{figure*}[t]
% % ensure that we have normalsize text
%\normalsize
%% Store the current equation number.
%\setcounter{MYtempeqncnt}{\value{equation}}
%% Set the equation number to one less than the one
%% desired for the first equation here.
%% The value here will have to changed if equations
%% are added or removed prior to the place these
%% equations are referenced in the main text.
%\setcounter{equation}{6}
%\begin{equation}\label{binary_local_cost_1}
%{\cal O} _{l}^{i}= \theta\frac{c_i}{f^i_l}+\left(1-\theta \right) \kappa \left ( f_{l}^{i} \right )^2 c_i +  \left (\left (1-\eta\right )\epsilon_L + \eta \right ) \left (\theta \rho_i + \left(1-\theta \right) \xi_i \right), \; \forall i \in {\cal N}.
%\end{equation}
%% Restore the current equation number.
%\setcounter{equation}{\value{MYtempeqncnt}+1}
%% IEEE uses as a separator
%\hrulefill
%% The spacer can be tweaked to stop underfull vboxes.
%\end{figure*}

\subsubsection{\textbf{Offloading Computing}} For the offloading computing, in case that $U_i$ offloads ${\cal J}_i$ to the MES, the incurred delay and energy consumption comprise the following two items\footnote{Since the size of the execution results is generally much smaller compared to that of input data, so the corresponding delay and energy consumption is ignored\cite{Full-1}.}: (1) the delay and energy offloading ${\cal J}_i$ to the MES via the wireless link, and (2) the delay and energy executing ${\cal J}_i$ at the MES. %, which allocates the computational resources accordingly and executes ${\cal J}_i$ instead. %In the following, we describe the offloading delay and processing delay in detail.
Suppose that the MES can provide computation offloading service to multiple UAVs concurrently, the queuing delay at the MES is ignored~\cite{CPU_parallel}. 
During the execution of the tasks at the MES, the computation resources available at the MES are shared among the associating UAVs and quantified by the allocated computational resources expressed in terms of the number of CPU cycles-per-second%\footnote{Suppose that the MES provides computation offloading service to multiple UAVs concurrently, then we have $f_i=\frac{F}{n}$.}
, i.e., $f_{i}, \forall i \in {\bf N}$. The computing resource constraint should be satisfied, which is expressed as
$\sum_{i\in{\bf N}}^{ }f_{i} \leq F$, where $F$ denotes the entire computational resources of the MES.

Therefore, the delay for offloading the task  ${\cal J}_i$ to the MES is given by
 \begin{equation}\label{binary offloading_delay}
 \tau^i_{o} = \frac{c_i}{f_l^i} +\gamma_i \left (  \frac{s_i}{R_i}+\frac{c_i}{f_i} \right )+ \rho_i \eta {\epsilon}_H,
\end{equation}
%&=\frac{\alpha_i \beta_i s_i}{R_i}+t_{Q}^{i}+(1-\eta)\frac{\alpha_i \beta_i c_i}{f_m}+\eta \left [ \left ( 1-{\cal P}_D \right )  \frac{\alpha_i \beta_i c_i}{f_m} + {\cal P}_D \xi  \right ] \\
where $\gamma_i$ denotes the scale coefficient of data size output from the lower-level layers of $U_i$, i.e., $\gamma_i=\frac{s^i_{out}}{s_i}$, $s^i_{out}$ is the data size output from $U_i$.

The energy consumption of $U_i$ using offloading computing is calculated as 
\begin{equation}\label{binary_offloading_energy}
\varepsilon_{o}^{i}=\kappa \left ( f_{l}^{i} \right )^2 c_i +\gamma_i \left (P_t^i  \frac{s_i}{R_i}+ P^i_I \frac{c_i}{f_i} \right )+ \xi_i \eta {\epsilon}_H,
\end{equation}
where $P_t^i$ is the transmission power of $U_i$. $P_t^i$ denotes the power consumption of $U_i$ staying idle while waiting for the execution results from the MES.

According to (\ref{binary offloading_delay}) and (\ref{binary_offloading_energy}), the weighted-cost for offloading ${\cal J}_i$ to the MES is given by
\begin{equation}\label{binary_cost_offloading}
{\cal O}_{o}^{i}=\theta  \tau_{o}^{i} + (1-\theta) \varepsilon_{o}^{i}, \  \forall i \in {\bf N}.
\end{equation}

Substituting (\ref{binary offloading_delay}) and (\ref{binary_offloading_energy}) into (\ref{binary_cost_offloading}), ${\cal O}_{o}^{i}$ can be rewritten as 
\begin{equation}
\label{binary_cost_offloading1}
\begin{aligned}
{\cal O} _{o}^{i}&\!=\! \theta \frac{c_i}{f_l^i} \!+\! \left(1-\theta \right) \kappa \left ( f_{l}^{i} \right )^2 c_i \!+\! \tilde{\epsilon_H} \left(\theta \rho_i \!+\! (1-\theta)\xi_i \right) \!+\! \gamma_i \left (\frac{s_i}{R_i} \left(\theta\!+\!(1\!-\!\theta)P^i_t \right)\!+\!\frac{c_i}{f_i} \left(\theta \!+\! (1\!-\!\theta)P^i_I \right) \right ),
\end{aligned}
\end{equation}
where $\tilde{\epsilon_H}=\eta \epsilon_H$.

%According to the offloading probability, the overall delay and energy consumption of the $i$-th UAV is obtained as
%\begin{equation}\label{delay}
%\tau_i=\left(1-\varrho_i\right)\tau^{i}_l+\varrho_i \tau^{i}_o=\tau^{i}_l+\varrho_i \left(\tau^{i}_o-\tau^{i}_l\right),
%\end{equation}
%\begin{equation}\label{inference}
%\varepsilon_i=\left(1-\varrho_i\right)\varepsilon^{i}_l+\varrho_i \varepsilon^{i}_o=\varepsilon^{i}_l+\varrho_i \left(\varepsilon^{i}_o-\varepsilon^{i}_l\right).
%\end{equation}
Based on (\ref{P_offloading}), (\ref{binary_local_cost_1}) and (\ref{binary_cost_offloading1}), the overall cost of $U_i$ using binary offloading scheme is obtained as
\setcounter{equation}{11}  
\begin{equation}\label{binary_overall_cost}
{\cal O}_i^B=\left(1-\mu_i\right){\cal O}_l^i+\mu_i {\cal O}_o^i={\cal O}_l^i+\mu_i \left({\cal O}_o^i-{\cal O}_l^i\right).
\end{equation}

The weighted-sum cost of all the offloading UAVs is calculated as
\begin{equation}\label{binary_total_cost}
{\cal O}_{total}^B=\sum_{i \in{\bf N}} \left(1-\mu_i\right){\cal O}_l^i+\mu_i {\cal O}_o^i,
\end{equation}
with ${\cal O}_{l}^{i}$ and ${\cal O}_{o}^{i}$ defined in (\ref{binary_local_cost_1}) and (\ref{binary_cost_offloading1}), respectively, and $0 \le \mu_i \le 1$ specifying the offloading probability of $U_i$.

Given the binary offloading system model described previously, our goal is to develop an optimal offloading probability (denoted as $\mu_i^*$) for UAVs to minimize the total weighted-sum cost. Here, we formulate the optimal offloading as a weighted-sum cost minimization problem (denoted as $\mathscr{P}_B$), subject to individual UAV's delay and power supply constraints and the computational resource limit of the MES.

$\mathscr{P}_B$ ({\textit{Binary Offloading Problem}}):
\begin{subequations} \label{binary_original_problem}
\begin{align}
& \;\;\;\;\underset{\mu_i \in [0,1]}{\rm minimize}\;\; {\cal O}_{total}^B \notag \\
& \;\;\;\;\;\;{\rm{s}}{\rm{.t}}{\rm{.}}\;\;\;\mathbf{C1}: \; \epsilon_i \le \epsilon_{T}^i, \ \forall i \in {\bf N}, \\
& \;\;\;\;\;\;\;\;\;\;\;\;\;\;\mathbf{C2}: \; 0 \le f_{i} \le F, \ \forall i \in {\bf N}, \\
& \;\;\;\;\;\;\;\;\;\;\;\;\;\;\mathbf{C3}: \; \sum_{i=1}^{n} f_{i} \le F.
 \end{align}
\end{subequations}

The constraints in the formulation of $\mathscr{P}_B$ are detailed as follows. $\mathbf{C1}$ makes sure that the average inference error rate processing ${\cal J}_i$ should not exceed the maximum tolerable threshold. $\mathbf{C2}$ and $\mathbf{C3}$ guarantee that the computational resource allocated to $U_i$ and the sum of the computational resources allocated to all the offloading UAVs should not exceed the computation resource limit of the MES. 

In the following, we analyze the optimal offloading solutions of the binary offloading problem based on the availability of wireless channel between UAVs and the MES, as described in the following parts B-C.

\subsection{Wireless Channel is Unavailable}
%\subsubsection{Problem Analysis}
In the UAV tracking system, when the wireless channel between the UAV and the MES is under the thunderstorms and other extreme weather conditions, the wireless channel could be unavailable. In this case, all the DL tasks can be only processed at the UAVs. 
The overall cost of $U_i$ and the total cost using local computing can be rewritten as ${\cal O}_i^B={\cal O}_l^i$ and  ${\cal O}_{total}^B=\sum_{i \in{\bf N}} {\cal O}_l^i$. Therefore, the problem $\mathscr{P}_B$ can be reformulated as $\mathscr{P}_B^1$.

$\mathscr{P}_B^l$ \textit{(Binary - Local Computing Problem)}:
\begin{subequations} \label{e18}
\begin{align}
& \;\;\;\;\underset{\mu_i=0}{\rm minimize}\;\; \sum_{i \in{\bf N}} {\cal O}_l^i \notag \\
& \;\;\;\;\;\;{\rm{s}}{\rm{.t}}{\rm{.}}\;\;\;\mathbf{C1.1}: \; \theta, \eta \in [0,1], \\
& \;\;\;\;\;\;\;\;\;\;\;\;\;\;\mathbf{C1.2}: \; \epsilon_L \in (0,1),
 \end{align}
\end{subequations}
%& \;\;\;\;\;\;\;\;\;\;\;\;\;\;\mathbf{C1.2}: \; \theta, \eta \in [0,1], \ \forall i \in {\cal N}, \\
where the condition %${\rm{C1.1}}$ makes sure that the time cost processing the DL tasks ${\cal J}_i$ should not exceed the maximum tolerable delay $\theta_i$. The condition 
$\rm{C1.1}$ accounts for the range of the weight coefficient and the ``Bad" data probability. The condition $\rm{C1.2}$ specifies the range of inference error rate at UAVs with ``Good" input data. %For simplicity, we assume that ${\cal E}_L$ keeps stable in the long term. The condition $\rm{C1.3}$  the ML tasks dropping penalty is larger than the tasks processing time at the UAV.

\begin{figure}[t]
\centering
 \captionsetup{font={footnotesize }}
\subfigure[]{
\includegraphics[width=2.25in,height=1.55in]{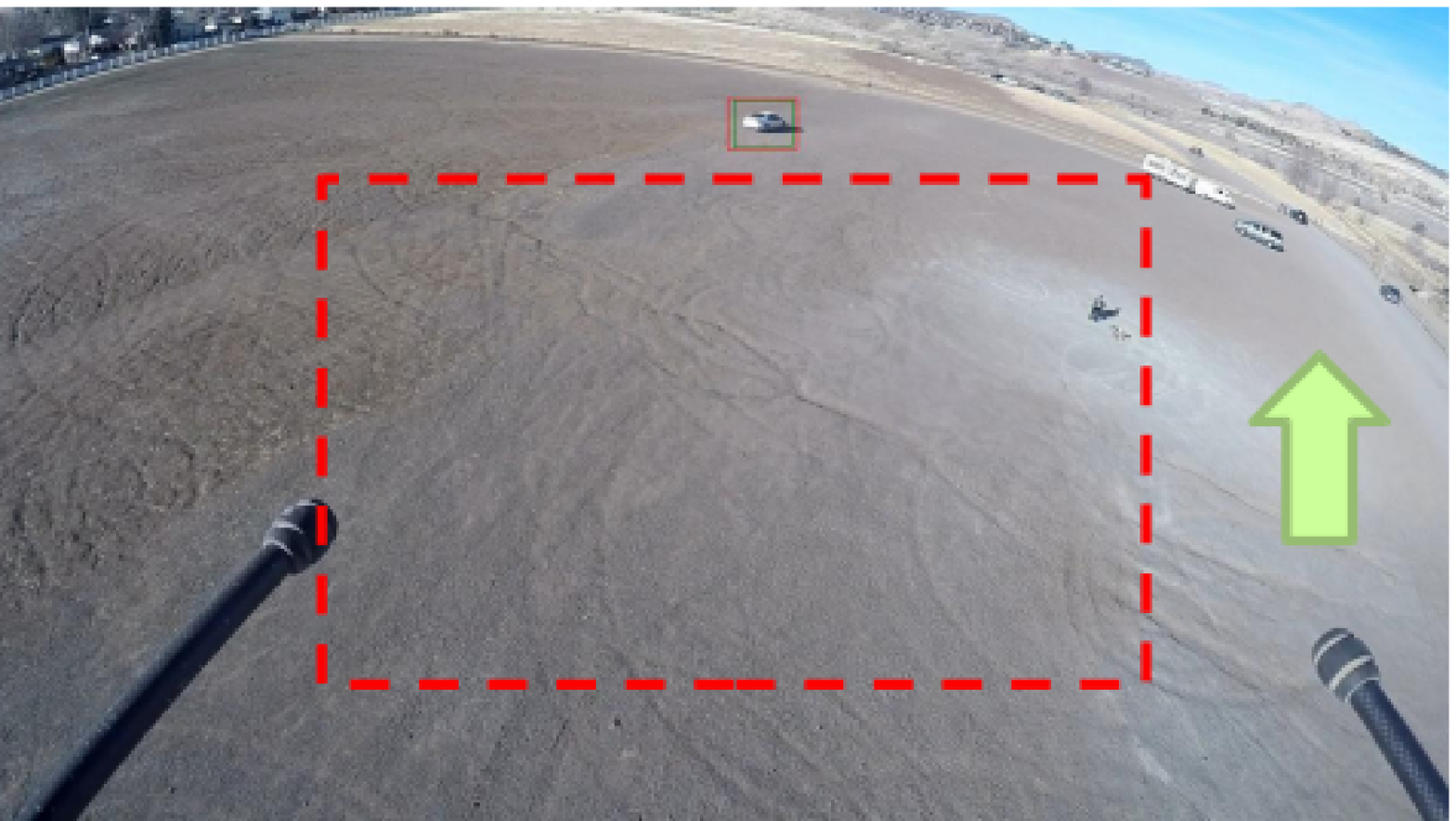}}
\hspace{-0.1in}
\subfigure[]{
\includegraphics[width=2.25in,height=1.55in]{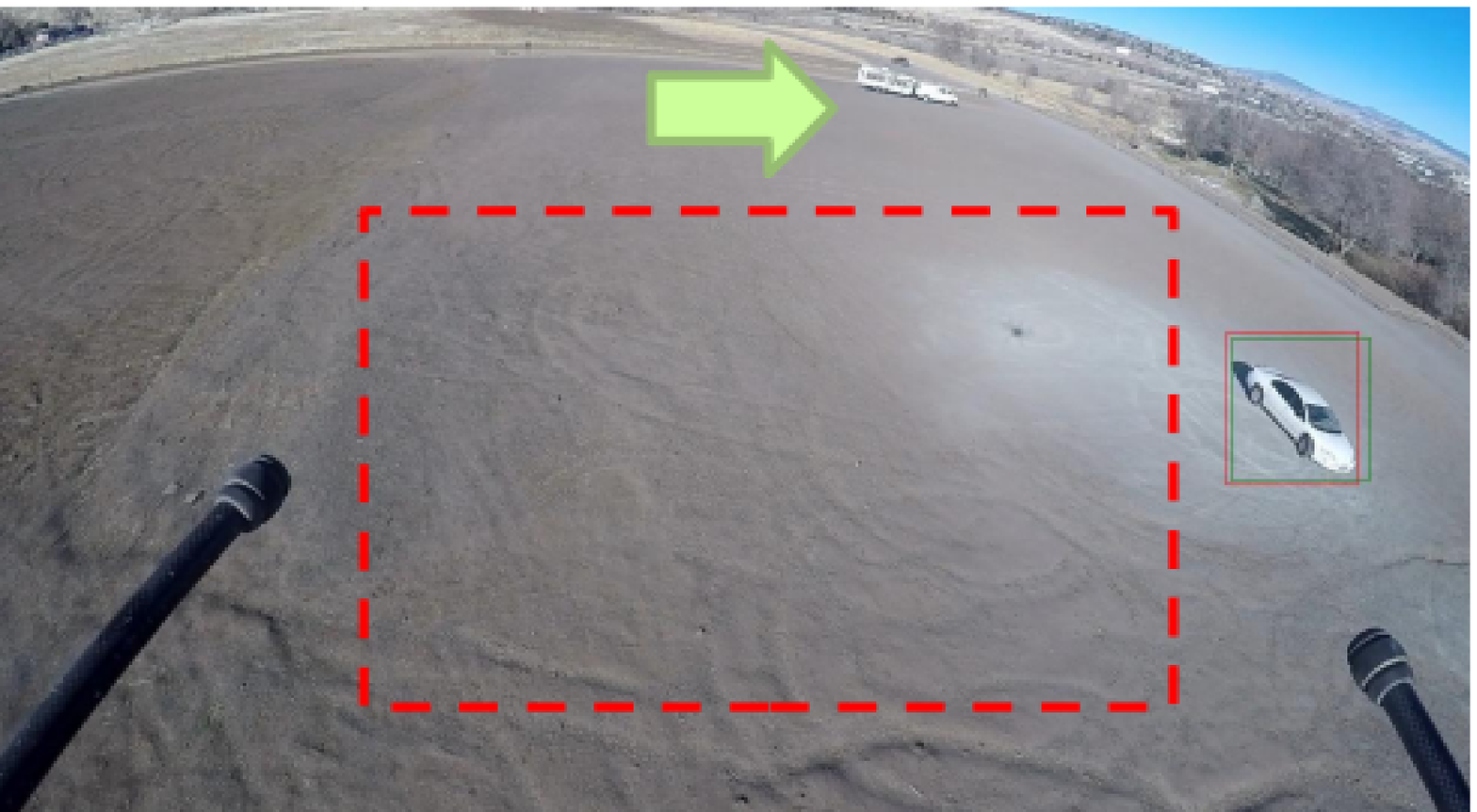}}
\caption{The target (e.g., a white car) is detected outside of the detection region, the UAV would move forward (shown in (a)) or move right (shown in (b)), and try to make the target be inside the detection region.} 
\label{detect}
\end{figure}
 
Owing that the UAV's inference error rate is obtained as $\epsilon_L=g(\mathcal {D}, \eta)$. Once a DL model $\cal D$ and ``Bad" data percentage $\eta$ are given, $\epsilon_L$ can be calculated accordingly. It is observed from (\ref{binary_local_cost_1}) that once $\eta$ and $\theta$ are given, the local tasks processing cost (${\cal O}_l^i$) is proportional to the tasks dropping penalty ($\rho_i$ and $\xi_i$), which is mainly determined by the object detecting and navigation overhead of the UAV.

\begin{remark}
During the target tracking, there exists a {trade-off} between the detection region size and the weight-sum cost of UAVs. Specifically, a detection region is set up that if the object appears inside, the UAV would stay around without taking any action. If the size of the region is too small, it is easy that the object disappears from view. As a result, the UAV needs to relocate the target, which introduces additional penalty costs. On the other hand, if the size is too large, the UAV would cause more waste of computing power and battery to search the target. 
\end{remark}

During the target tracking, if the target is detected outside of the {detection region}, the UAV would make adjustments and try to make the target detected inside the region, as illustrated in Fig.~\ref{detect}. On the contrary, if the target is detected still inside of the detection region, the UAV would stay around without taking too many actions. Suppose that the size of the detection region is fixed, the tasks dropping penalty can be decreased when the target moves slowly.

\subsection{Wireless Channel is Available}
When the wireless channel between the UAV and the MES is available, the DL tasks can be either processed totally at the UAVs or further offloaded to the MES. The overall cost of $U_i$ using binary offloading scheme is given by ${\cal O}_i^B={\cal O}_l^i+\mu_i \triangle {\cal O}_i$, where $\triangle{\cal O}_i = {\cal O}_o^i-{\cal O}_l^i$. In this case, the problem $\mathscr{P}_B$ can be reformulated as $\mathscr{P}_B^2$.

$\mathscr{P}_B^2$ ({\textit{Binary - Offloading Computing Problem}}):
\begin{equation} \label{binary_offloading_problem}
\begin{aligned}
& \;\;\;\;\underset{\mu_i \in [0,1]}{\rm minimize}\;\; \sum_{i \in{\bf N}} {\cal O}_l^i+\mu_i \triangle {\cal O}_i  \\
& \;\;\;\;\;\;{\rm{s}}{\rm{.t}}{\rm{.}}\;\;\;\mathbf{C1-C3}.
 \end{aligned}
\end{equation}

%$-\frac{{\cal O}_l^i}{\triangle {\cal O}_i<0} \ \triangle {\cal O}_i>0 $

In order to derive the optimal offloading probability (denoted as $\mu_i^*$) to minimize the overall cost of $\mathscr{P}_B^2$, $\triangle {\cal O}_i$ is analyzed first.

Substituting (\ref{binary_local_cost_1}) and (\ref{binary_cost_offloading1}) into $\triangle {\cal O}_i$, we have
\begin{equation}\label{delta_O}
\triangle {\cal O}_i = \gamma_i C_i+ \left(\tilde{\epsilon_H} -\tilde{\epsilon_L} \right) \left(\theta \rho_i+(1-\theta)\xi_i \right),
\end{equation}
where $C_i \!=\! \frac{s_i}{R_i} \left(\theta+(1-\theta)P^i_t \right)\!+\!\frac{c_i}{f_i} \left(\theta + (1-\theta)P^i_I\right)$ indicates the weight-sum cost including delay and energy consumption.

For clarity, let $\alpha\!=\! \tilde{\epsilon_H} - \tilde{\epsilon_L}\!=\!\eta {\epsilon}_H - \left (\left (1-\eta\right )\epsilon_L + \eta \right)$, we can obtain $\frac{\alpha}{\eta}\!=\! {\epsilon}_H - \left ( \frac{1-\eta}{\eta} \epsilon_L + 1 \right)$, which is negative. This indicates that the average inference error rate at UAV is larger than that of MES. Let $\gamma_i^{B}=\frac{\left(\tilde{\epsilon_L} - \tilde{\epsilon_H} \right) \left(\theta \rho_i+(1-\theta)\xi_i \right)}{C_i}$, then we derive the optimal offloading probability to minimize ${\cal O}_i^B$ in the following two cases, as illustrated in Fig.~\ref{solution}.

% In general, the penalty delay and energy consumption is larger than the tasks processing delay and energy, i.e., $\rho_i>{\rm max}\{\frac{s_i}{R_i}, \frac{c_i}{f_i}\}$ and $\xi_i>{\rm max}\{P_t^i \frac{s_i}{R_i}, P_I^i\frac{c_i}{f_i}\}$ hold, then we have $\triangle {\cal O}_i < p_i \left(2\gamma_i + \alpha \right)$, where $\alpha$ is negative. Therefore, we derive the optimal offloading probability to minimize ${\cal O}_i^B$ in the following two cases, as illustrated in Fig.~\ref{solution}.

\begin{itemize}
\item Case 1: $\gamma_i < \gamma_i^{B}$. In this case, $\triangle {\cal O}_i <0$ holds and thus ${\cal O}_i^B$ varies inversely with the offloading probability $\mu_i$.

\item Case 2: $\gamma_i > \gamma_i^{B}$.  In this case, $\triangle {\cal O}_i >0$ is achieved. Therefore, ${\cal O}_i^B$ varies proportionally to the offloading probability $\mu_i$.
\end{itemize}

\begin{figure}[t]
 \captionsetup{font={footnotesize }}
\centerline{ \includegraphics[width=2.55in, height=1.35in]{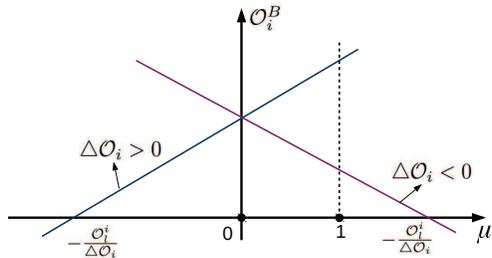}}
\caption{Offloading probability ($\mu_i$) versus the overall system cost (${\cal O}_i^B$) using binary offloading scheme.}
\label{solution}
\end{figure}

According to Table~\ref{mapping}, the average inference error rate of $U_i$ using binary offloading scheme can be achieved as 
\begin{equation}\label{binary_inference_error}
\epsilon_i = \left(1-\mu_i\right)\tilde{\epsilon_L} + \mu_i \tilde{\epsilon_H} = \tilde{\epsilon_L} - \mu_i \left(\tilde{\epsilon_L} - \tilde{\epsilon_H}  \right).
\end{equation}
 
Since $\tilde{\epsilon_L} > \tilde{\epsilon_H}$ generally holds, substituting (\ref{binary_inference_error}) into (\ref{binary_original_problem}a), we can achieve $\mu_i \ge \tilde{\mu_i}$, where $\tilde{\mu_i} \!=\! \cfrac{\tilde{\epsilon_L}-\epsilon_T^i}{\tilde{\epsilon_L}-\tilde{\epsilon_H}} \!\in\! [0, 1]$ is achieved because $\epsilon_T^i \in [\tilde{\epsilon_H},\tilde{\epsilon_L}]$ generally holds. 
According to Fig.~\ref{solution}, in order to minimize the cost ${\cal O}^B_i$, we can obtain the optimal offloading probability in the two cases above as follows: In case 1, since ${\cal O}_i^B$ varies inversely with $\mu_i$, we can obtain $\mu_i^*={\rm max}\{\tilde{\mu_i},1 \}=1$. In case 2, since ${\cal O}_i^B$ varies proportionally to $\mu_i$, we can obtain $\mu_i^*={\rm max}\{0, \tilde{\mu_i} \}=\tilde{\mu_i}$.

Therefore, the optimal offloading probability for the binary offloading scheme is calculated as 
\begin{equation} \label{binary_optimal_p}
\mu_i^*=\left\{\begin{matrix}
1, & {\rm If\ } \gamma_i < \gamma_i^{B}, \ (\text{Con. A}) \\ 
 \tilde{\mu_i}, & \text {If} \ \gamma_i > \gamma_i^{B}, \ (\text{Con. B})
\end{matrix}\right..
\end{equation}

\begin{remark}
It is observed from (\ref{binary_optimal_p}) that if $\gamma_i$ is small, e.g., $\gamma_i < \gamma_i^{B}$ denoting the size of the intermediate data output from the lower-level layer (denoted as $s_{out}^i$) is relatively small, so $\mu_i^*$ is prone to be as large as possible (i.e., $\mu_i^*=1$) to improve the inference accuracy without introducing too much communication delay. On the contrary, once $\gamma_i > \gamma_i^{B}$ holds, which indicates that $s_{out}^i$ could be very large compared to the size of the original data. In this situation, offloading the DL tasks to the MES will introduce too much communication delay and thus $\mu_i^*$ should not be too large, i.e., $\mu_i^*=\tilde{\mu_i}$. Furthermore, it is observed that $\mu_i^*$ is in inverse proportion to the inference error threshold $\epsilon_T^i$. That is to say, the larger $\epsilon_T^i$ the smaller $\mu_i^*$, which indicates that the DL tasks are prone to be processed locally. On the contrary, the limited computation resource of the UAV becomes a bottleneck. Therefore, it is straightforward to offload DL tasks to fully explore the computation resources of the MES.
\end{remark}

\section{Optimization for Partial Offloading Framework}\label{partial_offloading_framwork}
 In this section, the partial offloading model is investigated, which is more general in practice in that it can fully utilize the computation resources in both of the UAV and the MES. 
In the following, we first formulate the latency-minimization problem as a piecewise-convex problem and then derive the optimal offloading ratio (i.e., $\beta_i^*$). Then, two cases are considered where we analyze $\beta_i^*$ to minimize energy consumption and total cost, respectively. Finally, a special scenario is considered assuming that the UAV can distribute the ML tasks according to the data quality $\cal Q$, and the optimal tasks segmentation ratio is derived accordingly.

\subsection{Problem Formulation}
For the partial offloading scheme, the offloading ratio is first defined as below.

\begin{definition}
\label{offloading_ratio}
 \textbf{Offloading Ratio:} For $U_i$, the offloading ratio ($\beta_i$) is defined as the ratio (or portion) of the DL tasks that is offloaded to the MES. Therefore, denote $\beta_i$ as the ratio of data offloaded to the MES of $U_i$, whereas, $1-\beta_i$ indicates the ratio of data to be processed locally. Suppose that the time-interdependency between each video frame within the DL task is ignored, then ${\cal J}_i$ can be divided into two parts, i.e., $\beta_i s_i$ (bits) is offloaded to the MES while $(1-\beta_i)s_i$ (bits) is processed locally at the UAV.
 \end{definition}
%Suppose that the partital offloading is utilized in the UAV tracking system, $\alpha\beta$ indicate the partition ratio of tasks need to be offloaded to the MES, $1-\alpha\beta$ denotes the ratio of tasks processed by the UAV.
%\begin{itemize}
%\item \textit{Local Computing.}

%\begin{figure*}[t]
% % ensure that we have normalsize text
%\normalsize
%% Store the current equation number.
%\setcounter{MYtempeqncnt}{\value{equation}}
%% Set the equation number to one less than the one
%% desired for the first equation here.
%% The value here will have to changed if equations
%% are added or removed prior to the place these
%% equations are referenced in the main text.
%\setcounter{equation}{26}
%\begin{equation}\label{partial_offloading_cost}
%{\cal O} _{o}^{i}=\beta_i\left [ \theta \frac{c_i}{f_l^i} + \left(1-\theta \right) \kappa \left ( f_{l}^{i} \right )^2 c_i +\gamma_i \left (\frac{s_i}{R_i} \left(\theta+(1-\theta)P^i_t \right)+\frac{c_i}{f_i} \left(\theta + (1-\theta)P^i_I \right) \right )  \right ]+  \tilde{\epsilon_H} \left(\theta \rho_i+(1-\theta)\xi_i \right).
%\end{equation}
%% Restore the current equation number.
%\setcounter{equation}{\value{MYtempeqncnt}}
%% IEEE uses as a separator
%\hrulefill
%% The spacer can be tweaked to stop underfull vboxes.
%\end{figure*}
 
\subsubsection{\textbf{Local Computing}} For the local computing, $(1-\beta_i) s_i$ of ${\cal J}_i$ is processed at the UAV. Let $\tilde{\epsilon_L}=\left (1-\eta\right )\epsilon_L + \eta$, %Considering that the more DL lower-layers the longer time processing the tasks at the UAV, denote $f^i_{l}$ as the CPU-cycle frequency (i.e., CPU cycles per second) of the $i$-th UAV, $i \in {\cal N}$, 
the local computation delay and energy consumption is respectively calculated as 
\begin{equation}\label{partial_local_delay}
\tau^i_{l} = \left ( 1-\beta_i \right ) \left( \frac{c_i}{f_l^i}+ \rho_i  \tilde{\epsilon_L} \right),
\end{equation}
and
%where $\eta$ denotes the probability that the DL tasks data is considered as ``Bad", $1-\eta$ denotes the probability that the data is considered as ``Good". $\rho_i$ is introduced as the DL tasks failing and dropping penalty of delay\footnote{Suppose that the drone might lose the target due to the tasks processing latency or the fast moving speed of the target. This is because the target might appear easily outside of the searching region used for the tracking, and enlarging the searching region can cause the delay and energy performance degradation of the UAVs.}, which is no smaller than the tasks processing delay in general, i.e., $\rho_i > {\rm max} \{\frac{c_i}{f_l^i}, \frac{c_i}{f_i} \}$, $f_i$ denotes the allocated CPU computation resource to the $i$-th UAV by the MES.
%According to the widely adopted model of the energy consumption\cite{k1}, the energy consumption processing ${\cal J}_i$ with the CPU clock speed $f _{l}^{i}$ can be calculated as 
\begin{equation}\label{partial_local_energy}
\varepsilon_{l}^{i}=\left ( 1-\beta_i \right ) \left( \kappa \left ( f_{l}^{i} \right )^2  c_i + \xi_i  \tilde{\epsilon_L} \right).
\end{equation}
%where $\kappa$ denotes the energy efficiency parameter that is mainly depends on the chip architecture\cite{k}, $\xi_i$ is introduced as the DL tasks failing and dropping penalty of energy consumption.

Based on (\ref{partial_local_delay}) and (\ref{partial_local_energy}), the weighted-cost for the local computing is calculated as
\begin{equation}\label{local_cost}
{\cal O} _{l}^{i}=\theta \tau _{l}^{i} + \left(1-\theta \right) \varepsilon_{l}^{i}, \ \forall i \in {\bf N}.
\end{equation}
%where $\theta$, $0\le \theta\le 1$, specifies the UAV's preference on processing delay, and $1-\theta$ specifies the UAV's preference on energy consumption. For example, an UAV with short battery life can decrease the coefficient $\theta$ so as to save more energy at the expense of longer job processing delay, and vice versa.

Substituting (\ref{partial_local_delay}) and (\ref{partial_local_energy}) into (\ref{local_cost}), the weighted-cost for computing ${\cal J}_i$ locally can be rewritten as 
\begin{equation}\label{partial_local_cost}
\begin{aligned}
{\cal O} _{l}^{i}&=\left ( 1-\beta_i \right )  \left(\frac{\theta c_i}{f^i_l}+\left(1-\theta \right) \kappa \left ( f_{l}^{i} \right )^2 c_i \right) +  \tilde{\epsilon_L} \left (\theta \rho_i + \left(1-\theta \right) \xi_i \right).
\end{aligned}
\end{equation}

\subsubsection{\textbf{Offloading Computing}}   
For the offloading computing,  $\beta_i s_i$ of ${\cal J}_i$ is offloaded to the MES. Let  $\tilde{\epsilon_H}=\eta \epsilon_H$, the total delay and energy consumption introduced by the offloading computing is given by
\setcounter{equation}{23}
 \begin{equation}\label{partial_delay_offloading}
 \tau^i_{o} = \beta_i\left ( \frac{c_i}{f_l^i} +\gamma_i \left (  \frac{s_i}{R_i}+\frac{c_i}{f_i} \right ) + \rho_i \tilde{\epsilon_H}\right ),
\end{equation}

\begin{equation}\label{partial_energy_offloading}
\varepsilon_{o}^{i}\!=\! \beta_i\left (\kappa \left ( f_{l}^{i} \right )^2 c_i \!+\!\gamma_i \left (P_t^i  \frac{s_i}{R_i}\!+\! P^i_I \frac{c_i}{f_i} \right )  + \xi_i \tilde{\epsilon_H}\right ),
\end{equation}
%&=\frac{\alpha_i \beta_i s_i}{R_i}+t_{Q}^{i}+(1-\eta)\frac{\alpha_i \beta_i c_i}{f_m}+\eta \left [ \left ( 1-{\cal P}_D \right )  \frac{\alpha_i \beta_i c_i}{f_m} + {\cal P}_D \xi  \right ] \\
where $\gamma_i$ is the scale coefficient of data size output from the lower-level layers of $U_i$, and $P_t^i $ is the UAV's transmission power. Suppose that UAVs staying idle while waiting for the execution results from the MES and the power consumption of $U_i$ staying the idle state is $P_{I}^{i}$.

According to (\ref{partial_delay_offloading}) and (\ref{partial_energy_offloading}), the weighted-cost for the offloading computing is given by
\begin{equation}\label{cost_offloading}
{\cal O}_{o}^{i}=\theta  \tau_{o}^{i} + (1-\theta) \varepsilon_{o}^{i}, \  \forall i \in {\bf N}.
\end{equation}

Substituting (\ref{partial_delay_offloading}) and (\ref{partial_energy_offloading}) into (\ref{cost_offloading}), the weighted-cost for computing ${\cal J}_i$ locally can be calculated accordingly. %rewritten as (\ref{partial_offloading_cost}). 

\begin{remark}
\label{R5}
Different from the offloading framework where the deep learning models with different capabilities are deployed at the sensing devices and the edge server in IIoT~\cite{IoT_YB}, in the proposed HMTD framework, the lower-level layers and the higher-level layers of the same trained CNN model are deployed at the UAV and MES, respectively. The data privacy preserving is achieved since only the intermediate features are offloaded from the UAVs to the MES. In this context, the total inference delay at the MES not only includes the communication and computing delay but also includes the tasks processing delay at the UAV.
\end{remark}

Based on Remark~\ref{R5}, the total delay introduced by partial offloading scheme is calculated as
\setcounter{equation}{27}
\begin{equation}\label{partial_overall_delay}
\tau_i^P=\frac{c_i}{f_l^i} + {\rm max} \{\delta_p^i,  \delta_{op}^i  \},
\end{equation}
where $\delta_p^i = \left(1-\beta_i \right) \rho_i \tilde{\epsilon_L}$ denoting the delay penalty processing $\left(1-\beta_i \right)$ of ${\cal J}_i$ at the UAV. $\delta_{op}^i=\beta_i \left( \gamma_i \left (  \frac{s_i}{R_i}+\frac{c_i}{f_i} \right)  + \rho_i \tilde{\epsilon_H} \right )$ including the transmission delay of intermediate data, processing delay at the MES, and the delay penalty processing $\beta_i$ of ${\cal J}_i$ at the MES.

Based on the analysis above, the overall cost of $U_i$ using partial offloading scheme is obtained as
\begin{equation}\label{partial_overall_cost}
{\cal O}_i^P=\theta \left( \frac{c_i}{f_l^i} + {\rm max} \{\delta_p^i,  \delta_{op}^i  \} \right) + \left(1-\theta \right) \left(\varepsilon_{l}^{i} + \varepsilon_{o}^{i} \right).
\end{equation}

Therefore, the weighted-sum cost of all the offloading UAVs using partial offloading scheme is calculated as
\begin{equation}\label{partial_total_cost}
{\cal O}_{total}^P = \sum_{i \in{\bf N}} {\cal O}_i^P,
\end{equation}
where ${\cal O}_i^P$ is given in (\ref{partial_overall_cost}).

%where $\tau^i_{l}$ and $\tau^i_{o}$ is defined in (\ref{partial_local_delay}) and (\ref{partial_delay_offloading}), respectively.
%
%The overall cost of the $i$-th UAV using partial offloading scheme is obtained as
%%\setcounter{equation}{11}  
%\begin{equation}\label{partial_overall_cost}
%{\cal O}_i^P=\theta \cdot {\rm max} \{\tau_l^i, \tau_o^i \} + \left(1-\theta \right) \left(\varepsilon_{l}^{i} + \varepsilon_{o}^{i} \right).
%\end{equation}
%
%Therefore, the weighted-sum cost of all the offloading UAVs using partial offloading scheme is calculated as
%\begin{equation}\label{partial_total_cost}
%\begin{aligned}
%{\cal O}_{total}^P &= \sum_{i \in{\cal N}} {\cal O}_i^P \\
%&= \sum_{i \in{\cal N}} \left( \theta \cdot {\rm max} \{\tau_l^i, \tau_o^i \} + \left(1-\theta \right) \left(\varepsilon_{l}^{i} + \varepsilon_{o}^{i} \right) \right).
%\end{aligned}
%\end{equation}
%%with ${\cal O}_{i}^P$ defined in (\ref{partial_overall_cost}).

Given the partial offloading system model described previously, our goal is to develop an optimal offloading ratio (denoted as $\beta_i^*$) for UAVs to minimize the total weighted-sum cost combining execution delay and energy consumption of all
UAVs under the constrain of maximum tolerable inference error rate.
In this case, we formulate the cost minimization as a piecewise-convex optimization problem ($\mathscr{P}_P$).

$\mathscr{P}_P$ ({\textit{Partial Offloading problem}}):
\begin{equation} \label{partial_offloading_problem}
\begin{aligned}
& \;\;\;\;\underset{\beta_i \in [0,1]}{\rm minimize}\;\; {\cal O}_{total}^P  \\
& \;\;\;\;\;\;{\rm{s}}{\rm{.t}}{\rm{.}}\;\;\;\mathbf{C1-C3},
 \end{aligned}
\end{equation}
with ${\cal O}_{total}^P$ defined in (\ref{partial_total_cost}).

In the following, the closed-form expressions for $\beta_i^*$ is devised in different scenarios with specific objectives. 

\subsection{Case 1: Delay-Sensitive Objective}
Suppose that the power consumption is not a critical concern for UAVs since some technologies can be introduced to provide convenient and sustainable energy supply to the UAVs, e.g., wireless power transfer~\cite{charge_UAV01} and laser-beamed power supply~\cite{charge_UAV02}. In this case, the delay introduced (including computing delay and communication delay) is the main concern, i.e., $\theta=1$. The total system cost can be simplified as 
\begin{equation}
{\cal O}_{total}^{P_{case1}} =\sum_{i=1}^{n} \left( \frac{c_i}{f_l^i} + {\rm max} \{\delta_p^i,  \delta_{op}^i  \} \right).
\end{equation}
 
Therefore, $\mathscr{P}_P$ can be transformed as $\mathscr{P}_P^1$, which is described as follows.
 
$\mathscr{P}_P^1$ (\textit{Partial - Delay Minimization  Problem}):
\begin{equation}
\begin{aligned} \label{delay_problem}
&\underset{\beta_i}{\rm minimize}\;\; \sum_{i=1}^{n} \left( \frac{c_i}{f_l^i} + {\rm max} \{\delta_p^i,  \delta_{op}^i  \} \right) \\
& \;\;\;\;\;\;{\rm{s}}{\rm{.t}}{\rm{.}}\;\;\; \mathbf{C1-C3}.
 \end{aligned}
 \end{equation}
  
 \begin{figure}[t]
  \captionsetup{font={footnotesize }}
\centerline{ \includegraphics[width=2.55in, height=1.5in]{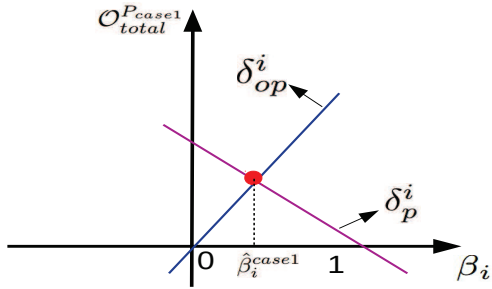}}
\caption{Offloading ratio ($\beta_i$) versus the overall system cost (${\cal O}_{total}^{P_{case1}}$) using partial offloading scheme in case 1.}
\label{optimal_beta}
\end{figure}

To analyze the problem $\mathscr{P}_P^1$, the following Lemma 1 is introduced. 

\newtheorem{lemma}{\textit{Lemma}}
\begin{lemma}
When $\eta$, $\epsilon_L$ and $\epsilon_H$ are given, the offloading ratio for $\mathscr{P}_P^1$ can be obtained as
\begin{equation}\label{optimal_beta_c1}
\centering
\hat{\beta}_i^{case1}=\cfrac{\rho_i \tilde{\epsilon_L} }{\gamma_i \left(\frac{s_i}{R_i}+\frac{c_i}{f_i} \right) + \rho_i \left(\tilde{\epsilon_L} + \tilde{\epsilon_H} \right)}.
\end{equation}
\end{lemma}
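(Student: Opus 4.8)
The plan is to exploit the separable, piecewise-linear structure of the objective in $\mathscr{P}_P^1$. First I would observe that the sum $\sum_{i=1}^{n}\left(\frac{c_i}{f_l^i}+\max\{\delta_p^i,\delta_{op}^i\}\right)$ decouples across UAVs, since each $\beta_i$ enters only its own summand, and that the term $\frac{c_i}{f_l^i}$ is a constant independent of $\beta_i$. Consequently, minimizing the total delay reduces to minimizing, separately for each $i\in{\bf N}$, the function $h_i(\beta_i)=\max\{\delta_p^i,\delta_{op}^i\}$ over $\beta_i\in[0,1]$, and the global minimizer of the sum is obtained by stacking the per-UAV minimizers.

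Next I would examine the two arguments of the maximum as functions of $\beta_i$. The penalty $\delta_p^i=(1-\beta_i)\rho_i\tilde{\epsilon_L}$ is affine and strictly decreasing, falling from $\rho_i\tilde{\epsilon_L}>0$ at $\beta_i=0$ to $0$ at $\beta_i=1$, whereas $\delta_{op}^i=\beta_i\left(\gamma_i\left(\frac{s_i}{R_i}+\frac{c_i}{f_i}\right)+\rho_i\tilde{\epsilon_H}\right)$ is affine and strictly increasing, rising from $0$ at $\beta_i=0$. Hence $h_i$ is the pointwise maximum of a decreasing and an increasing affine function, which is convex and $V$-shaped: it coincides with the decreasing branch $\delta_p^i$ to the left of their unique crossing point and with the increasing branch $\delta_{op}^i$ to its right. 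The minimizer is therefore precisely the abscissa at which $\delta_p^i=\delta_{op}^i$, which is exactly the piecewise-convexity that the problem statement already anticipates.

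Finally I would solve $\delta_p^i=\delta_{op}^i$, i.e. $(1-\beta_i)\rho_i\tilde{\epsilon_L}=\beta_i\left(\gamma_i\left(\frac{s_i}{R_i}+\frac{c_i}{f_i}\right)+\rho_i\tilde{\epsilon_H}\right)$; collecting the $\beta_i$-terms gives $\rho_i\tilde{\epsilon_L}=\beta_i\left(\gamma_i\left(\frac{s_i}{R_i}+\frac{c_i}{f_i}\right)+\rho_i(\tilde{\epsilon_L}+\tilde{\epsilon_H})\right)$, which rearranges directly into the claimed $\hat{\beta}_i^{case1}$. Because the denominator exceeds the numerator by the strictly positive quantity $\gamma_i\left(\frac{s_i}{R_i}+\frac{c_i}{f_i}\right)+\rho_i\tilde{\epsilon_H}$, the root automatically lies in $(0,1)$; it is thus an interior point of $[0,1]$ and the global minimizer of $h_i$, so no boundary case arises.

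The computation is routine, and the only point genuinely requiring care is the structural justification that the pointwise maximum of these two affine pieces is unimodal with its minimum at the intersection rather than at an endpoint — this is what legitimately collapses the $\max$ objective into the single linear equation solved above. I would also flag that $\hat{\beta}_i^{case1}$ is the \emph{unconstrained} delay-minimizer (hence the hat rather than the star): since a larger $\beta_i$ lowers the average inference error rate, the constraint $\mathbf{C1}$ translates into a lower bound on $\beta_i$, and reconciling $\hat{\beta}_i^{case1}$ with that bound to obtain the feasible optimum $\beta_i^*$ is a simple projection deferred to the subsequent analysis.
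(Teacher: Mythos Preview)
Your proposal is correct and follows essentially the same approach as the paper: observe that $\delta_p^i$ is affine decreasing in $\beta_i$ while $\delta_{op}^i$ is affine increasing, so the pointwise maximum is minimized at their intersection, and then solve $\delta_p^i=\delta_{op}^i$ to obtain $\hat{\beta}_i^{case1}$. Your write-up is in fact more careful than the paper's own proof, adding the separability across $i$, the explicit algebra, the verification that the root lies in $(0,1)$, and the remark that reconciling with $\mathbf{C1}$ is deferred---all of which the paper either leaves implicit or handles in the immediately following display.
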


\begin{proof}
It is observed that $\tau _{l}^{i}$ monotonously decreases with $\beta_i$ while $\tau _{o}^{i}$ monotonously increases with $\beta_i$, as illustrated in Fig.~\ref{optimal_beta}. Thus the total system cost ${\cal O}_{total}^{P_{case1}}$ can be minimized only when $\delta_p^i=\delta_{op}^i$ holds. By solving this equation, $\beta_i^{{case1}^*}$ can be derived, as given in (\ref{optimal_beta_c1}).
\end{proof}

According to Table~\ref{mapping}, the average inference error rate of $U_i$ using partial offloading scheme is achieved as 
\begin{equation}\label{partial_inference_error}
\epsilon_i = \left(1-\beta_i\right) \tilde{\epsilon_L} + \beta_i \tilde{\epsilon_H} = \tilde{\epsilon_L} - \beta_i \left(\tilde{\epsilon_L} - \tilde{\epsilon_H}  \right).
\end{equation}
%Since the inference error rate in this case is determined by the maximum value of $\tilde{\epsilon_L}$ and $\tilde{\epsilon_H}$, i.e., $\epsilon_i={\rm max}\{\tilde{\epsilon_L},\tilde{\epsilon_H}\}$. Owing that $\tilde{\epsilon_L}>\tilde{\epsilon_H}$ holds in this paper, then $\epsilon_i=\tilde{\epsilon_L}$ is achieved. Substituting $\epsilon_i=\tilde{\epsilon_L}$ into (\ref{binary_original_problem}a), 

Substituting (\ref{partial_inference_error}) into (\ref{binary_original_problem}a), $\beta_i\ge \tilde{\beta_i}$ is achieved, where $\tilde{\beta_i} = \cfrac{\tilde{\epsilon_L}-\epsilon_T^i}{\tilde{\epsilon_L}-\tilde{\epsilon_H}}$. Therefore, the optimal offloading ratio in case 1 is obtained as
\begin{equation}\label{optimal_offloading_ratio_c1}
\beta_i^{case1*}={\rm max}\left \{ \hat{\beta}_i^{case1}, \ \tilde{\beta_i} \right \}.
\end{equation}
%with $\hat{\beta}_i^{case1}$ defined in (\ref{optimal_beta_c1}).

\subsection{Case 2: Energy-Constrained Objective}
Suppose that UAVs are with serious energy budget and no additional power supply is available. In this case, the energy consumption becomes a main concern, i.e., $\theta=0$. The total system cost can be rewritten as 
\setcounter{equation}{35}
\begin{equation}
{\cal O}_{total}^{P_{case2}} = \sum_{i=1}^{n} \left( \varepsilon_{l}^{i} + \varepsilon_{o}^{i} \right).
\end{equation}
 
 In this case, $\mathscr{P}_P$ can be transformed as $\mathscr{P}_P^2$, which is described as follows.
 
$\mathscr{P}_P^2$ (\textit{Partial - Energy Minimization Problem}):
\begin{equation}
\begin{aligned} \label{energy_problem}
&\underset{\beta_i}{\rm minimize}\;\; \sum_{i=1}^{n} \left( \varepsilon_{l}^{i} + \varepsilon_{o}^{i} \right) \\
& \;\;\;\;\;\;{\rm{s}}{\rm{.t}}{\rm{.}}\;\;\; \mathbf{C1-C3}.
 \end{aligned}
 \end{equation}
  
To analyze the problem $\mathscr{P}_P^2$, we first introduce Lemma 2 as follows.
 
 \begin{lemma}
When $\eta$, $\epsilon_L$ and $\epsilon_H$ are given, the optimal offloading ratio for $\mathscr{P}_P^2$ can be obtained as
$\beta_i^{case2*}=\tilde{\beta_i}$.
\begin{equation} \label{optimal_beta_case2}
\beta_i^{case2*}=\left\{\begin{matrix}
1, & {\rm If}\  \gamma_i < \gamma_T^{case2}, \\ 
 \tilde{\beta_i}, &  {\rm If} \ \gamma_i > \gamma_T^{case2}.
\end{matrix}\right.
\end{equation}
\end{lemma}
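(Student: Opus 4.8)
The plan is to exploit the fact that, once $\theta=0$, the $\max\{\delta_p^i,\delta_{op}^i\}$ term responsible for the piecewise structure of Case~1 drops out, and the energy objective becomes \emph{affine} in each offloading ratio. First I would substitute \eqref{partial_local_energy} and \eqref{partial_energy_offloading} into the summand $\varepsilon_l^i+\varepsilon_o^i$ and collect the $\beta_i$ terms. The two identical computing-energy contributions $\kappa(f_l^i)^2 c_i$ combine into a single $\beta_i$-independent constant, so the whole summand reduces to a constant plus $\beta_i D_i$, with slope
\[
D_i = \gamma_i\left(P_t^i \tfrac{s_i}{R_i} + P_I^i \tfrac{c_i}{f_i}\right) + \xi_i\left(\tilde{\epsilon_H} - \tilde{\epsilon_L}\right).
\]
Because ${\cal O}_{total}^{P_{case2}}$ is the sum of these per-UAV affine terms and the coupling constraints $\mathbf{C2}$--$\mathbf{C3}$ restrict the resource allocation $\{f_i\}$ rather than the ratios $\{\beta_i\}$, the problem separates into $n$ independent scalar minimizations of an affine function of $\beta_i$.

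Next I would fix the feasible interval for each $\beta_i$. The box constraint gives $\beta_i\in[0,1]$, while substituting the average error rate \eqref{partial_inference_error} into $\mathbf{C1}$ yields the lower bound $\beta_i\ge\tilde{\beta_i}$ with $\tilde{\beta_i}=(\tilde{\epsilon_L}-\epsilon_T^i)/(\tilde{\epsilon_L}-\tilde{\epsilon_H})$, exactly as in the Case~1 derivation. Each subproblem is therefore the minimization of an affine map over $[\tilde{\beta_i},1]$, whose optimum is attained at an endpoint selected solely by the sign of $D_i$.

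The crux is the sign analysis of $D_i$. Invoking the earlier fact that $\alpha=\tilde{\epsilon_H}-\tilde{\epsilon_L}<0$ (the MES error rate is strictly below the UAV error rate), the penalty-saving term $\xi_i(\tilde{\epsilon_H}-\tilde{\epsilon_L})$ is negative while the transmission/computing term is positive, so the two compete. Setting $D_i=0$ and solving for the scale coefficient defines the threshold
\[
\gamma_T^{case2} = \frac{\xi_i\left(\tilde{\epsilon_L} - \tilde{\epsilon_H}\right)}{P_t^i \tfrac{s_i}{R_i} + P_I^i \tfrac{c_i}{f_i}} > 0.
\]
When $\gamma_i<\gamma_T^{case2}$ the slope $D_i$ is negative, the affine cost is decreasing, and the minimizer is the right endpoint $\beta_i^{case2*}=1$; when $\gamma_i>\gamma_T^{case2}$ the slope is positive, the cost is increasing, and the minimizer is the left endpoint $\beta_i^{case2*}=\tilde{\beta_i}$. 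This reproduces \eqref{optimal_beta_case2}.

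The step I expect to need the most care is not the optimization---being affine, it is immediate---but making the decoupling honest: I must verify that holding the MES allocation $\{f_i\}$ fixed (so that each $\gamma_T^{case2}$ is well defined) is consistent with treating $\mathbf{C2}$--$\mathbf{C3}$ as inactive for the $\beta_i$-minimization, and that $\tilde{\beta_i}\in[0,1]$ under the standing assumption $\epsilon_T^i\in[\tilde{\epsilon_H},\tilde{\epsilon_L}]$, so that the interval $[\tilde{\beta_i},1]$ is nonempty and both claimed endpoints are admissible.
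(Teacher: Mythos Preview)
Your proposal is correct and follows essentially the same route as the paper: substitute \eqref{partial_local_energy} and \eqref{partial_energy_offloading} to write $\varepsilon_l^i+\varepsilon_o^i$ as an affine function of $\beta_i$, read off the slope, define the threshold $\gamma_T^{case2}$ from the zero of that slope, and select the appropriate endpoint of $[\tilde{\beta_i},1]$ according to its sign. The paper's proof is terser (it does not articulate the decoupling from $\mathbf{C2}$--$\mathbf{C3}$ or the nonemptiness of the feasible interval as explicitly as you do), but the argument is the same.
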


\begin{proof}
Based on (\ref{partial_local_energy}) and (\ref{partial_energy_offloading}), we can achieve
\begin{equation}
\label{sum_energy_case2}
\varepsilon_{l}^{i} \!+\! \varepsilon_{o}^{i}\!=\! \beta_i \left( \gamma_i \!\left (\!P_t^i  \frac{s_i}{R_i}\!+\! P^i_I \frac{c_i}{f_i} \right ) \!-\! \xi_i \left( \tilde{\epsilon_L} \!-\! \tilde{\epsilon_H} \right) \right) \!+\! \kappa \left ( f_{l}^{i} \right )^2 c_i.
\end{equation}

Let $\gamma_T^{case2}=\cfrac{\xi_i \left( \tilde{\epsilon_L} \!-\! \tilde{\epsilon_H} \right) }{P_t^i  \frac{s_i}{R_i}\!+\! P^i_I \frac{c_i}{f_i}}$, it is observed from (\ref{sum_energy_case2}) that $\varepsilon_{l}^{i} \!+\! \varepsilon_{o}^{i}$ monotonously increases with $\beta_i$ when $\gamma_i > \gamma_T^{case2}$ and monotonously decreases with $\beta_i$ instead. Since 
$\beta_i\ge \tilde{\beta_i}$ and $\epsilon_T^i\in [\tilde{\epsilon_H},\tilde{\epsilon_L}]$ hold, for the former case, we have
$\beta_i^{case2*}={\rm max}\left \{ 0,  \tilde{\beta_i} \right \} = \tilde{\beta_i}$. For the latter case,  we have
$\beta_i^{case2*}={\rm max}\left \{ \tilde{\beta_i} , 1\right \} = 1$. According to the two cases above, $\beta_i^{case2*}$ can be achieved as (\ref{optimal_beta_case2}).
\end{proof}

%Since the inequality 
%$\beta_i\ge \tilde{\beta_i}$ holds, the optimal offloading ratio in case 2 is obtained as
%\begin{equation}\label{optimal_offloading_ratio_c2}
%\beta_i^{case2*}={\rm max}\left \{ 0,  \tilde{\beta_i} \right \} = \tilde{\beta_i},
%\end{equation}
%where $\tilde{\beta_i}>0$ is achieved because $\epsilon_T^i \in [\tilde{\epsilon_H},\tilde{\epsilon_L}]$ generally holds.

\subsection{Case 3: Tradeoff between Delay and Energy Consumption}
Without loss of generality, in this case, a middle course is considered taking account of tracking delay and energy consumption,  where $0<\theta<1$. The total system cost in case 3 can be expressed as 
\begin{equation}\label{total_cost_case3}
\begin{aligned}
{\cal O}_{total}^{P_{case3}}  &\!=\!\sum_{i=1}^{n} \left( \theta \left(  \frac{c_i}{f_l^i} + {\rm max} \{\delta_p^i,  \delta_{op}^i  \} \right) + \left ( 1-\theta \right ) \left ( \varepsilon_{l}^{i}+\varepsilon_{o}^{i} \right ) \right)\\
&\!=\!\begin{cases}
\sum_{i=1}^{n} \!\left(\! \theta \left( \frac{c_i}{f_l^i} + \delta_p^i \right) \!+\! \left ( 1\!-\!\theta \right ) \left ( \varepsilon_{l}^{i}\!+\!\varepsilon_{o}^{i} \right ) \!\right)\!, \text{If } \delta_p^i \geq \delta_{op}^i,  \\ 
\sum_{i=1}^{n} \!\left(\! \theta \left( \frac{c_i}{f_l^i} + \delta_{op}^i \right) \!+\! \left ( 1\!-\!\theta \right ) \left ( \varepsilon_{l}^{i}\!+\!\varepsilon_{o}^{i} \right ) \!\right)\!, \text{If } \delta_p^i \leq \delta_{op}^i.
\end{cases}
\end{aligned}
\end{equation}

Therefore, $\mathscr{P}_P$ can be transformed as $\mathscr{P}_P^3$, as described below.
 
$\mathscr{P}_P^3$ (\textit{Partial - Weighted-sum Cost Minimization Problem}):
\begin{equation}
\begin{aligned} \label{cost_problem}
&\underset{\beta_i}{\rm minimize}\;\; {\cal O}_{total}^{P_{case3}} \\
& \;\;\;\;\;\;{\rm{s}}{\rm{.t}}{\rm{.}}\;\;\; \mathbf{C1-C3}.
 \end{aligned}
 \end{equation}
 
 Next, we analyze the optimal offloading ratio in the following two conditions: 1) $\delta_p^i \geq \delta_{op}^i$, and 2) $\delta_p^i \leq  \delta_{op}^i$.
 
 1) $\delta_p^i \geq \delta_{op}^i$.

We let $\gamma_i^{T1} \!=\! \cfrac{\frac{\theta \rho_i \tilde{\epsilon_L}}{1-\theta}+\xi_i \left( \tilde{\epsilon_L} \!-\! \tilde{\epsilon_H} \right)}{P_t^i  \frac{s_i}{R_i}\!+\! P^i_I \frac{c_i}{f_i}}, \notag
$
to resolve $\mathscr{P}_P^3$ when $\delta_p^i \geq \delta_{op}^i$, the following Lemma 3 is given.

\begin{lemma}
The optimal offloading ratio in case 3 when $\delta_p^i \geq \delta_{op}^i$ is achieved as
\begin{equation}
\label{Optimal_beta_case3}
\beta_i^{case3*}\!=\! \left\{\begin{matrix}
{\rm min}\left \{ \hat{\beta}_i^{case1}, \ \tilde{\beta_i} \right \}, & {\rm If \ } \gamma_i > \gamma_i^{T1}, \\ 
\hat{\beta}_i^{case1}, & {\rm If \ } \gamma_i < \gamma_i^{T1}.
\end{matrix}\right.
\end{equation}
\end{lemma}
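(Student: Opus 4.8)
The plan is to exploit the fact that, once the branch hypothesis $\delta_p^i\ge\delta_{op}^i$ is imposed, the objective of $\mathscr{P}_P^3$ reduces to the first branch of \eqref{total_cost_case3}, a sum whose $i$-th summand depends on $\beta_i$ alone; hence I would minimise ${\cal O}_{total}^{P_{case3}}$ by optimising each summand separately over its own $\beta_i$. Substituting \eqref{sum_energy_case2} for $\varepsilon_l^i+\varepsilon_o^i$ together with $\delta_p^i=(1-\beta_i)\rho_i\tilde{\epsilon_L}$, the $i$-th summand is an affine function $g_i(\beta_i)=A_i\beta_i+B_i$ with slope
\begin{equation}
A_i=-\theta\rho_i\tilde{\epsilon_L}+(1-\theta)\left(\gamma_i\Bigl(P_t^i\tfrac{s_i}{R_i}+P_I^i\tfrac{c_i}{f_i}\Bigr)-\xi_i(\tilde{\epsilon_L}-\tilde{\epsilon_H})\right),
\end{equation}
the constant $B_i$ collecting $\theta\frac{c_i}{f_l^i}+\theta\rho_i\tilde{\epsilon_L}+(1-\theta)\kappa(f_l^i)^2c_i$. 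A one-line rearrangement then gives $A_i>0\Leftrightarrow\gamma_i>\gamma_i^{T1}$ and $A_i<0\Leftrightarrow\gamma_i<\gamma_i^{T1}$, which is exactly where the threshold $\gamma_i^{T1}$ enters the statement.

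Next I would fix the feasible interval for $\beta_i$. The inference-error constraint $\mathbf{C1}$ combined with the error expression \eqref{partial_inference_error} yields the lower bound $\beta_i\ge\tilde{\beta_i}$, identical to Case~1. The branch hypothesis $\delta_p^i\ge\delta_{op}^i$ supplies the upper bound: since $\delta_p^i$ is decreasing and $\delta_{op}^i$ increasing in $\beta_i$, and the two coincide precisely at $\hat{\beta}_i^{case1}$ by Lemma~1, the hypothesis is equivalent to $\beta_i\le\hat{\beta}_i^{case1}$. Hence the admissible set is $\beta_i\in[\tilde{\beta_i},\hat{\beta}_i^{case1}]$.

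Because $g_i$ is affine, its minimum over this interval is attained at an endpoint, so the conclusion follows from a sign test on $A_i$. When $\gamma_i<\gamma_i^{T1}$ the slope is negative and $g_i$ is minimised at the right endpoint $\hat{\beta}_i^{case1}$, giving the second branch of \eqref{Optimal_beta_case3}. When $\gamma_i>\gamma_i^{T1}$ the slope is positive and the minimiser is the smallest feasible value, namely $\tilde{\beta_i}$ when $\tilde{\beta_i}\le\hat{\beta}_i^{case1}$ and the clamped boundary value $\hat{\beta}_i^{case1}$ otherwise, i.e. $\min\{\hat{\beta}_i^{case1},\tilde{\beta_i}\}$, which is the first branch.

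The step I expect to be the main obstacle is the bookkeeping that turns the branch hypothesis $\delta_p^i\ge\delta_{op}^i$ into the clean bound $\beta_i\le\hat{\beta}_i^{case1}$ and then reconciles it with the $\mathbf{C1}$ bound $\beta_i\ge\tilde{\beta_i}$, so that both possible orderings of $\tilde{\beta_i}$ and $\hat{\beta}_i^{case1}$ are subsumed by the single $\min\{\cdot,\cdot\}$ expression; by contrast the slope computation and the endpoint (affineness) argument are routine.
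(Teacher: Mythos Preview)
Your proposal is correct and follows essentially the same route as the paper's proof in Appendix~A: both substitute $\delta_p^i$ and $\varepsilon_l^i+\varepsilon_o^i$ to obtain an affine function of $\beta_i$ with slope $A_i$ (the paper's $A$), identify $A_i\gtrless 0\Leftrightarrow\gamma_i\gtrless\gamma_i^{T1}$, and then read off the endpoint minimiser on the interval delimited by $\tilde{\beta_i}$ (from $\mathbf{C1}$) and $\hat{\beta}_i^{case1}$ (from the branch hypothesis). Your write-up is in fact slightly more explicit than the paper's in justifying why $\delta_p^i\ge\delta_{op}^i$ translates to $\beta_i\le\hat{\beta}_i^{case1}$ via the monotonicity observed in Lemma~1.
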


\begin{proof}
See Appendix A.
\end{proof}

\newtheorem{theorem}{\textbf{Theorem}}

 2) $\delta_p^i \leq \delta_{op}^i$. 
 
 Similarly, let 
$
\gamma_i^{T2} \!=\! \cfrac{\left(1-\theta \right)\xi_i \left( \tilde{\epsilon_L} \!-\! \tilde{\epsilon_H} \right)-\theta \rho_i \tilde{\epsilon_H}}{\theta \left(  \frac{s_i}{R_i}\!+\! \frac{c_i}{f_i} \right) + \left(1-\theta \right) \left( P_t^i  \frac{s_i}{R_i}\!+\! P^i_I \frac{c_i}{f_i} \right)}, \notag
$
 to resolve $\mathscr{P}_P^3$ when $\delta_p^i \leq \delta_{op}^i$, the following Lemma 4 is given.

\begin{lemma}
The optimal offloading ratio in case 3 when $\delta_p^i \leq \delta_{op}^i$ is achieved as
\begin{equation}
\label{Optimal_beta_case3}
\beta_i^{case3*}\!=\! \left\{\begin{matrix}
{\rm max}\left \{ \hat{\beta}_i^{case1}, \ \tilde{\beta_i} \right \}, & {\rm If \ } \gamma_i > \gamma_i^{T2}, \\ 
1, & {\rm If \ } \gamma_i < \gamma_i^{T2}.
\end{matrix}\right.
\end{equation}
\end{lemma}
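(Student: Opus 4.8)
The plan is to mirror the argument used for Lemma~3, exploiting that on the branch $\delta_p^i \le \delta_{op}^i$ the objective ${\cal O}_{total}^{P_{case3}}$ is \emph{affine} in each $\beta_i$ and that the sum over $i$ decouples into $n$ independent scalar minimisations (constraints $\mathbf{C2}$ and $\mathbf{C3}$ involve only the $f_i$ and do not couple the $\beta_i$). First I would substitute $\delta_{op}^i = \beta_i\left(\gamma_i\left(\frac{s_i}{R_i}+\frac{c_i}{f_i}\right)+\rho_i\tilde{\epsilon_H}\right)$ together with the combined energy $\varepsilon_l^i+\varepsilon_o^i$ from (\ref{sum_energy_case2}) into the lower branch of (\ref{total_cost_case3}) and collect the terms multiplying $\beta_i$. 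Writing the resulting per-UAV cost as a $\beta_i$-free constant plus $A_i\beta_i$, the slope is
\[
A_i = \gamma_i\left[\theta\left(\tfrac{s_i}{R_i}+\tfrac{c_i}{f_i}\right)+(1-\theta)\left(P_t^i\tfrac{s_i}{R_i}+P_I^i\tfrac{c_i}{f_i}\right)\right] + \theta\rho_i\tilde{\epsilon_H} - (1-\theta)\xi_i\left(\tilde{\epsilon_L}-\tilde{\epsilon_H}\right).
\]
Recognising the bracketed factor as the (strictly positive) denominator $D_i$ of $\gamma_i^{T2}$ and the remaining terms as $-N_i$, with $N_i$ its numerator, gives $A_i = D_i\left(\gamma_i-\gamma_i^{T2}\right)$, so that $\mathrm{sign}(A_i)=\mathrm{sign}(\gamma_i-\gamma_i^{T2})$ because $D_i>0$.

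Next I would pin down the admissible interval of $\beta_i$ on this branch. Since $\delta_p^i=(1-\beta_i)\rho_i\tilde{\epsilon_L}$ decreases and $\delta_{op}^i$ increases in $\beta_i$, and by Lemma~1 they coincide exactly at $\hat{\beta}_i^{case1}$, the branch condition $\delta_p^i\le\delta_{op}^i$ is equivalent to $\beta_i\ge\hat{\beta}_i^{case1}$. Combining this with the inference-error constraint $\mathbf{C1}$, which through (\ref{partial_inference_error}) and (\ref{binary_original_problem}a) yields $\beta_i\ge\tilde{\beta_i}$, the feasible set is the closed interval $\left[\max\{\hat{\beta}_i^{case1},\tilde{\beta_i}\},\,1\right]$.

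Finally, minimising an affine function over a closed interval places the optimum at an endpoint selected by the sign of $A_i$. When $\gamma_i>\gamma_i^{T2}$ the slope is positive, so the minimiser is the left endpoint $\max\{\hat{\beta}_i^{case1},\tilde{\beta_i}\}$; when $\gamma_i<\gamma_i^{T2}$ the slope is negative, so the minimiser is the right endpoint $1$, which is precisely the claimed expression. I expect the main obstacle to be the feasibility bookkeeping rather than any optimisation step: one must verify $D_i>0$ under the stated positivity of the parameters, justify the equivalence between the branch condition and $\beta_i\ge\hat{\beta}_i^{case1}$ through the monotonicity and crossing of $\delta_p^i$ and $\delta_{op}^i$, and confirm that the interval $\left[\max\{\hat{\beta}_i^{case1},\tilde{\beta_i}\},\,1\right]$ is nonempty (using $\tilde{\beta_i}\in[0,1]$ and $\hat{\beta}_i^{case1}\le 1$) so that the claimed endpoint minimisers are indeed attainable.
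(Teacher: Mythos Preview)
Your proposal is correct and follows exactly the route the paper intends: the paper omits the proof entirely, stating only that it ``is similar to the proof in Lemma~3,'' and your argument is precisely that mirror---substitute $\delta_{op}^i$ in place of $\delta_p^i$ in the active branch of (\ref{total_cost_case3}), identify the affine slope in $\beta_i$, rewrite it as $D_i(\gamma_i-\gamma_i^{T2})$, and minimise over the feasible interval $\left[\max\{\hat{\beta}_i^{case1},\tilde{\beta_i}\},1\right]$ determined by the branch condition and $\mathbf{C1}$. Your added bookkeeping (positivity of $D_i$, the monotone-crossing justification for $\beta_i\ge\hat{\beta}_i^{case1}$, nonemptiness of the interval) fills in details the paper leaves implicit even for Lemma~3.
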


\begin{proof}
The proof is similar to the proof in Lemma 3 and omitted to save space.
\end{proof}

In order to make the optimal solutions given in Lemma 3 and Lemma 4 more clear and easier to follow, the following Corollary 1 is presented.
\newtheorem{corollary}{\textit{Corollary}}
\begin{corollary}
Once $\theta$, $\rho_i$, $\tilde{\epsilon_L}$ and $\tilde{\epsilon_H}$ are given, then $\gamma_i^{T1} > \gamma_i^{T2}$ is obtained.
According to Lemma 3 and Lemma 4, the optimal offloading ratio in case 3 can be concluded as
\begin{equation}
\label{Optimal_beta_case3}
\beta_i^{case3*}\!=\! \left\{\begin{matrix}
1, & {\rm If \ } \gamma_i < \gamma_i^{T2}, \\ 
{\rm max}\left \{ \hat{\beta}_i^{case1}, \ \tilde{\beta_i} \right \}, & {\rm If \ } \gamma_i^{T2}<\gamma_i < \gamma_i^{T1}, \\ 
{\rm min}\left \{ \hat{\beta}_i^{case1}, \ \tilde{\beta_i} \right \}, & {\rm If \ } \gamma_i > \gamma_i^{T1}.
\end{matrix}\right.
\end{equation}
\end{corollary}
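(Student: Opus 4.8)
The plan is to prove the two assertions of Corollary~1 separately: first the threshold ordering $\gamma_i^{T1} > \gamma_i^{T2}$, and then the three-way characterisation of $\beta_i^{case3*}$ obtained by splicing Lemma~3 and Lemma~4. Throughout I would introduce the local abbreviations $a_i \triangleq P_t^i\frac{s_i}{R_i} + P_I^i\frac{c_i}{f_i}$ and $d_i \triangleq \frac{s_i}{R_i} + \frac{c_i}{f_i}$, so that $\gamma_i^{T1}$ has denominator $a_i$ and $\gamma_i^{T2}$ has denominator $\theta d_i + (1-\theta)a_i$; since $0<\theta<1$ and all physical quantities are positive, both denominators are strictly positive. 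The inequality $\gamma_i^{T1}>\gamma_i^{T2}$ is therefore equivalent (cross-multiplication being valid whenever both denominators are positive, with no sign assumption on the numerators) to the single polynomial inequality obtained by multiplying out. Expanding it, the two terms carrying $(1-\theta)\xi_i(\tilde{\epsilon_L}-\tilde{\epsilon_H})a_i$ cancel exactly, leaving $\frac{\theta^2\rho_i\tilde{\epsilon_L}d_i}{1-\theta} + \theta\rho_i\tilde{\epsilon_L}a_i + \theta\xi_i(\tilde{\epsilon_L}-\tilde{\epsilon_H})d_i + \theta\rho_i\tilde{\epsilon_H}a_i$, a sum of manifestly positive terms because $\tilde{\epsilon_L}>\tilde{\epsilon_H}$ (as noted in the binary-offloading analysis). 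Hence the strict ordering holds unconditionally, which is the first claim.

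For the second claim I would exploit that the cost ${\cal O}_{total}^{P_{case3}}$ in (\ref{total_cost_case3}) is piecewise-linear in $\beta_i$ with a single breakpoint at $\beta_i=\hat{\beta}_i^{case1}$: by Lemma~1 this is exactly the value at which $\delta_p^i=\delta_{op}^i$, so the branch $\delta_p^i\ge\delta_{op}^i$ is the interval $[\tilde{\beta_i},\hat{\beta}_i^{case1}]$ and the branch $\delta_p^i\le\delta_{op}^i$ is $[\hat{\beta}_i^{case1},1]$. Differentiating each linear piece (using (\ref{sum_energy_case2}) for the $\varepsilon_l^i+\varepsilon_o^i$ part), the slope of the left piece is increasing in $\gamma_i$ and changes sign exactly at $\gamma_i=\gamma_i^{T1}$ (this is the computation behind Lemma~3), while the slope of the right piece changes sign exactly at $\gamma_i=\gamma_i^{T2}$ (behind Lemma~4). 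Invoking $\gamma_i^{T1}>\gamma_i^{T2}$ from the first part, the $\gamma_i$-axis partitions into three regimes and the global minimiser is read off by monotonicity: when $\gamma_i<\gamma_i^{T2}$ both pieces decrease, the cost is monotone decreasing on $[\tilde{\beta_i},1]$, so $\beta_i^{case3*}=1$; when $\gamma_i^{T2}<\gamma_i<\gamma_i^{T1}$ the left piece decreases and the right piece increases, producing a valley at the breakpoint which, after clipping to the feasibility bound $\tilde{\beta_i}$ from $\mathbf{C1}$, gives $\beta_i^{case3*}=\max\{\hat{\beta}_i^{case1},\tilde{\beta_i}\}$; and when $\gamma_i>\gamma_i^{T1}$ both pieces increase, so the minimiser is the leftmost feasible point, which combining the branch-optima of Lemmas~3 and~4 reads $\min\{\hat{\beta}_i^{case1},\tilde{\beta_i}\}$. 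Collecting the three regimes yields the stated formula.

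The cancellation in the first step is routine; the main obstacle will be the bookkeeping in the second step, namely keeping straight the interaction among the breakpoint $\hat{\beta}_i^{case1}$, the feasibility lower bound $\tilde{\beta_i}$, and the two sign-thresholds $\gamma_i^{T1},\gamma_i^{T2}$, so that the restricted optima delivered by Lemmas~3 and~4 on the two branches glue continuously across the breakpoint into a single minimiser in each $\gamma_i$ regime. In particular, care is needed in the regime $\gamma_i>\gamma_i^{T1}$ to confirm that comparing the two branch-optima, rather than naively selecting one branch, produces the reported $\min\{\hat{\beta}_i^{case1},\tilde{\beta_i}\}$.
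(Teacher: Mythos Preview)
Your proposal is correct. The route differs from the paper's mainly in the first claim: the paper proves $\gamma_i^{T1}>\gamma_i^{T2}$ by dividing numerator and denominator of $\gamma_i^{T2}$ by $1-\theta$, after which $\gamma_i^{T2}$ has the same $a_i$-part in the denominator as $\gamma_i^{T1}$ plus an extra positive term $\tfrac{\theta}{1-\theta}d_i$, and its numerator is $\xi_i(\tilde{\epsilon_L}-\tilde{\epsilon_H})-\tfrac{\theta\rho_i\tilde{\epsilon_H}}{1-\theta}$, visibly smaller than the numerator $\xi_i(\tilde{\epsilon_L}-\tilde{\epsilon_H})+\tfrac{\theta\rho_i\tilde{\epsilon_L}}{1-\theta}$ of $\gamma_i^{T1}$; the inequality then follows by inspection (and remains true if the numerator of $\gamma_i^{T2}$ is negative, since $\gamma_i^{T1}>0$). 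Your cross-multiplication argument reaches the same conclusion but with more algebra; the paper's normalisation avoids the cancellation step entirely. For the second claim the paper is terse, simply invoking Lemmas~3 and~4 and noting that the established ordering partitions the $\gamma_i$-axis into three sections; your piecewise-linear analysis across the breakpoint $\hat{\beta}_i^{case1}$ makes the gluing of the two branch optima explicit and is a useful amplification of what the paper leaves implicit.
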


\begin{proof}
Dividing the numerator and denominator of $\gamma_i^{T2}$ by $1-\theta$, we can achieve  
\begin{equation}
\gamma_i^{T2} \!=\! \cfrac{\xi_i \left( \tilde{\epsilon_L} \!-\! \tilde{\epsilon_H} \right)-\frac{\theta \rho_i \tilde{\epsilon_H}}{\left(1-\theta \right)}}{\frac{\theta}{\left(1-\theta \right)} \left(  \frac{s_i}{R_i}\!+\! \frac{c_i}{f_i} \right) + \left( P_t^i  \frac{s_i}{R_i}\!+\! P^i_I \frac{c_i}{f_i} \right)}. \notag
\end{equation}

By comparing $\gamma_i^{T2}$ with $\gamma_i^{T1}$, it is observed that $\gamma_i^{T2}<\gamma_i^{T1}$ holds. Therefore, the value range of $\gamma_i$ can be divided into three sections, as indicated in (\ref{Optimal_beta_case3}).
\end{proof}

\begin{remark}
It is observed that $\tilde{\beta_i} \le \beta_i\le \hat{\beta}_i^{case1}$ is obtained in Lemma 3 when $\delta_p^i \geq \delta_{op}^i$ and $A>0$. Note that the derived $\beta_i^{case3*}$ is constraint but global optimal. Once $\beta_i^{case3*}$ cannot meet the criteria $\mathbf{C1}$ (e.g., when $\tilde{\beta_i} > \hat{\beta}_i^{case1}$), this means that there exists no optimal solution. Suppose that the optimal offloading ratio exists leading to   $\tilde{\beta_i} < \hat{\beta}_i^{case1}$, then (\ref{Optimal_beta_case3}) can be further rewritten as
\begin{equation}
\label{Optimal_beta_case3_new}
\beta_i^{case3*}\!=\! \left\{\begin{matrix}
1, & {\rm If \ } \gamma_i < \gamma_i^{T2}, \ (\rm Con. A) \\ 
\hat{\beta}_i^{case1}, & {\rm If \ } \gamma_i^{T2}<\gamma_i < \gamma_i^{T1}, \ (\rm Con. B) \\ 
\tilde{\beta_i}, & {\rm If \ } \gamma_i > \gamma_i^{T1}, \ (\rm Con. C)
\end{matrix}\right..
\end{equation}
\end{remark}

%\begin{figure}[t]
%  \captionsetup{font={small }}
%\centerline{ \includegraphics[width=3.52in, height=1.42in]{NN_framework_quality.pdf}}
%\caption{A special case of the proposed HMTD framework }
%\label{NN_framework_quality}
%\end{figure}

\subsection{A Special Case}
 To gain more insights into the proposed DL tasks offloading framework, we further investigate a specific scenario where \textit{the UAVs are able to distribute the DL tasks according to the input data quality}. Suppose that the memory of the UAV is adequate so as to be able to cache enough data frames. With the help of preprocessing at the UAV, the video frames can be roughly divided into two categories: ``Good" or ``Bad".  In order to minimize the inference errors while keeps the delay at a low level, \textit{it is straightforward that the video frames with ``Bad" quality are offloaded to the MES to improve inference accuracy while the ``Good" frames are processed locally to save the bandwidth}.
  As a result, $\eta=0$ holds for the local processing while $\eta=1$ holds for the offloading processing. Therefore, in the special case, we have $\tilde{\epsilon_L}=\epsilon_L$ and $\tilde{\epsilon_H}=\epsilon_H$. Based on (\ref{Optimal_beta_case3_new}), the optimal offloading ratio in the special case ($\beta_i^{s*}$) can be calculated accordingly.
  
\begin{remark}
In this special case, since the proportion of ``Bad" frames captured in the memory of the UAV equals to the DL tasks partition ratio, i.e., $\eta^{s}=\beta_i^{s*}$, it is prone to capture enough frames in the memory to meet the condition above and then offload ``Bad" frames to the MES. 
\end{remark}
% \begin{figure}[t]
%  \captionsetup{font={footnotesize }}
%\centerline{ \includegraphics[width=3.4in, height=2.75in]{experimental_setup.pdf}}
%\caption{Experimental setup example for studies on the HMTD framework}
%   \label{implementation}
%\end{figure}

%\begin{figure}[t]
%\centering
%  \captionsetup{font={small }}
%\subfigure[]{
%\includegraphics[width=3.36in,height=2.6in]{experimental_setup.pdf}}
%\hspace{-0.07in}
%\subfigure[]{
%\includegraphics[width=1.26in,height=1.68in]{system_GPU1.pdf}}
%\hspace{-0.1in}
%\subfigure[]{
%\includegraphics[width=2.1in,height=1.68in]{system_GPU2.pdf}}
%\caption{Experimental setup example for studies on the HMTD framework is shown in (a). The NVIDIA DGX-1 is detailed in (b) and the NVIDIA Jetson TX2 is detailed
%in (c)} 
%\label{implementation}
%\end{figure}

\section{Implementation and Numerical Results}
\label{results}
In this section we first present an intuitive implementation example to illustrate the inference delay and inference accuracy of the TX2 and DGX-1, respectively. Then, the numerical results are presented to demonstrate the performance of the proposed HMTD framework and investigate the impact of the critical parameters.

%{\subsection{Continuous Inference Error Rate Mapping Function}
%The mapping relationship between the DL model ($\cal D$), the input data quality ($\cal Q$), and the inference error rate ($\epsilon$) that is given in Table~\ref{mapping} seems to be roughly in practice. To solve this problem, we can trace this mapping relationship according to the experimental results with the collected images, i.e., $\epsilon \!=\! g(\mathcal {Q,D})$. Specifically, we can collect the images data and then evaluate the data quality $\cal Q$ with the PSNR metric. Besides, $\cal D$ is normalized (${\cal D} \in [0,1]$ ) to evaluate the capability of the DL model, and a larger value $\cal D$ indicates a stronger DL model. Then we can perform the testing and calculate the inference error rate with varying combinations of $\cal D$ and $\cal Q$. According to the testing results, we could use curve-fitting technique to fit the inference error rate mapping relationship and determine the related coefficients correspondingly~\cite{YB_TWC}. With the fitted continuous mapping function $\epsilon \!=\! g(\mathcal {Q,D})$, the inference error rate $\epsilon$ can be inferred once given $\cal D$ and $\cal Q$ are given.}

\subsection{Intuitive Implementation Example}
In the intuitive experimental setup example, the camera first captures the video frames, which are cached at the memory to be further processed. A GPU cluster NVIDIA DGX-1\cite{NVIDIA-1} is considered as an MES bearing the higher-level layers of the pre-trained CNN model. The NVIDIA Jetson TX2 \cite{NVIDIA-2} with TensorFlow\cite{tensorflow} can be considered as an UAV in the experimental example, where the lower-level layers are implemented. Note that both of the higher-level layers and lower-level layers belong to a pre-trained CNN model. After the offline training, the CNN model can learn the features of the new input images and help the UAV to detect the target and make the detection region tuning. For the video frames, we use the ImageNet\cite{imagenet} as the dataset due to the similarity between visual target tracking and object detection, where $80 \%$ of the dataset is used for offline training and the remaining $20 \%$ is used for testing. For each inference, we apply $50$ images and measure the processing time. In our testing results, the inference time and inference accuracy of higher layers running on NVIDIA DGX-1 is about $120$ frame per second (fps) and $90 \%$ ($10 \%$ inference error rate), respectively. The inference time and inference accuracy of the lower layers running on the NVIDIA Jetson TX2 is about $20$ fps and $80 \%$ ($20 \%$ inference error rate), respectively. It can be observed that the NVIDIA DGX-1 performs six times as fast as the NVIDIA Jetson TX2. Although the reference values of parameters are obtained from well-known ML benchmarks to guide our simulations, our simulations are not limited by the reference values, and instead cover the entire range of parameters, as detailed below.

%To find a mapping function $g({\cal Q, D})$ to represent the inference error with respect to the neural network model and data quality in practice, we could collect the images data with different data quality levels and performed testing via the pre-trained deep learning model (or layers) accordingly. The output corresponds to the accuracy (or confidence) of the DL model with considered input data with a specific quality~\cite{quality_on_DL01}. Then the methods for function approximation can be used to obtain the relationship between  $\epsilon$ and $\cal Q$ for a given DL model ($\cal D$). To sum up, we could obtain numerical evaluation of the mapping function $g(\mathcal {Q, D})$ from empirical data offline and then perform table lookup and interpolation to obtain inference error rate online.

%\begin{figure}[tbh]
%  \captionsetup{font={small }}
%\centerline{ \includegraphics[width=2.2in, height=1.5in]{simulation_model.pdf}}
%\caption{The simulation model }
%\label{simulation}
%\end{figure}

 \begin{table}[t]  
% increase table row spacing, adjust to taste
  \captionsetup{font={small}} 
\caption{\\ \scshape Critical Parameters and Values } 
\label{Parameters}
\small  
\centering  
\setlength{\abovecaptionskip}{-3pt} 
\setlength{\belowcaptionskip}{-15pt}
\begin{tabular}{|l | l|}  
\hline
\textbf{Parameters} & \textbf{Value} \\
\hline 
 Total number of UAVs ($n$) & $2$ \\
\hline
 DL tasks size ($s$) & $1$ Mbits\\
\hline
UAV's preference coefficient on delay ($\theta$) & $0.5$ \\
 \hline
``Bad" data probability ($\eta$) & $0.5$ \\
 \hline
Data size scale coefficient ($\gamma$) & $7$ \\
\hline 
 CPU frequency of UAV ($f_l$) &  $1$ GHz \\
 \hline 
 CPU frequency of MES ($F$) &  $10$ GHz \\
 \hline
  Inference error rate of lower-level layers ($\epsilon_L$) & $0.3$ \\
  \hline 
  Inference error rate of higher-level layers ($\epsilon_H$) & $0.2$ \\
  \hline 
  Tasks dropping penalty on delay ($\rho$) & $8$ sec \\
  \hline 
Energy efficiency parameter ($\kappa$) & $1\times10^{-28}$ \\%
 \hline  
 UAV's transmission power ($P_t$) & $10$ W \\
  \hline 
 UAV's idle power ($P_I$) & $5$ W \\
 \hline
 White noise power ($\chi^2$) & $7.9\times10^{-13}$ \\
 %Weights of delay and energy cost ($\alpha$, $\beta$) & $[0.0 - 1.0]$ \\
  \hline
\end{tabular}  
\end{table}

\subsection{Simulation Settings}
In the following, numerical results are provided to demonstrate the performance of the proposed HMTD framework. We consider a testing scenario where the UAV tracking a person who is with relatively low moving speed. In our simulation model, suppose that the UAV flies at a fixed height of $100 \ m$ and the horizontal distance between UAV and MES is $20 \ m$, which keeps as a constant in the simulations due to the relatively low speed of the target.
 %as illustrated in Fig.~\ref{simulation}.
 The channel bandwidth ($B$) is set as $10$ MHz and the channel power gain is set to be $h_0=-50$ dB at the reference distance of $1\ m$\cite{UAV_MEC_2}. Besides, the maximum inference threshold ($\epsilon_T$) is calculated as $\epsilon_T\!=\!\tilde{\epsilon_L}\!-\!e$, where $e$ is set to be $0.1$ in the simulation. Some critical simulation parameters are given in Table~\ref{Parameters}\footnote{If there is no special instruction, the parameters in the simulations shall be set according to this table.}.
 
The attainable performance of the visual target tracking in this paper are characterized both by the total weighted-sum cost that introduced by computing and communication and by the error rate of inferring the target. These performance metrics are evaluated for our proposed optimal offloading strategies (i.e., `binary offloading (BO'), `partial offloading (PO)' and the `PO under special case'), with two benchmark offloading approaches: 1) `Totally Local (TL)', where all the DL tasks are executed locally, e.g., when the wireless channel between UAV and MES is not available, and 2) `Totally Offloading (TO)', denoting that all the DL tasks are offloaded to the MES to improve the inference accuracy. By conducting the simulations, we aim to answer the following questions: 1) Which offloading strategy should be selected for optimal offloading in the visual target tracking with the constraint of the inference error rate? 2) Validation of the optimization of the critical variable (e.g., the offloading ratio $\beta$) to the system performance.  3) What is the impact of different critical parameters (such as $\theta$, $\eta$, $F$, $n$, $\epsilon_T$, etc.) on the total weighted-sum cost and average inference error rate? 

\begin{figure}[t]
\centering
 \captionsetup{font={footnotesize }}
\subfigure[]{
\includegraphics[width=2.2in,height=1.68in]{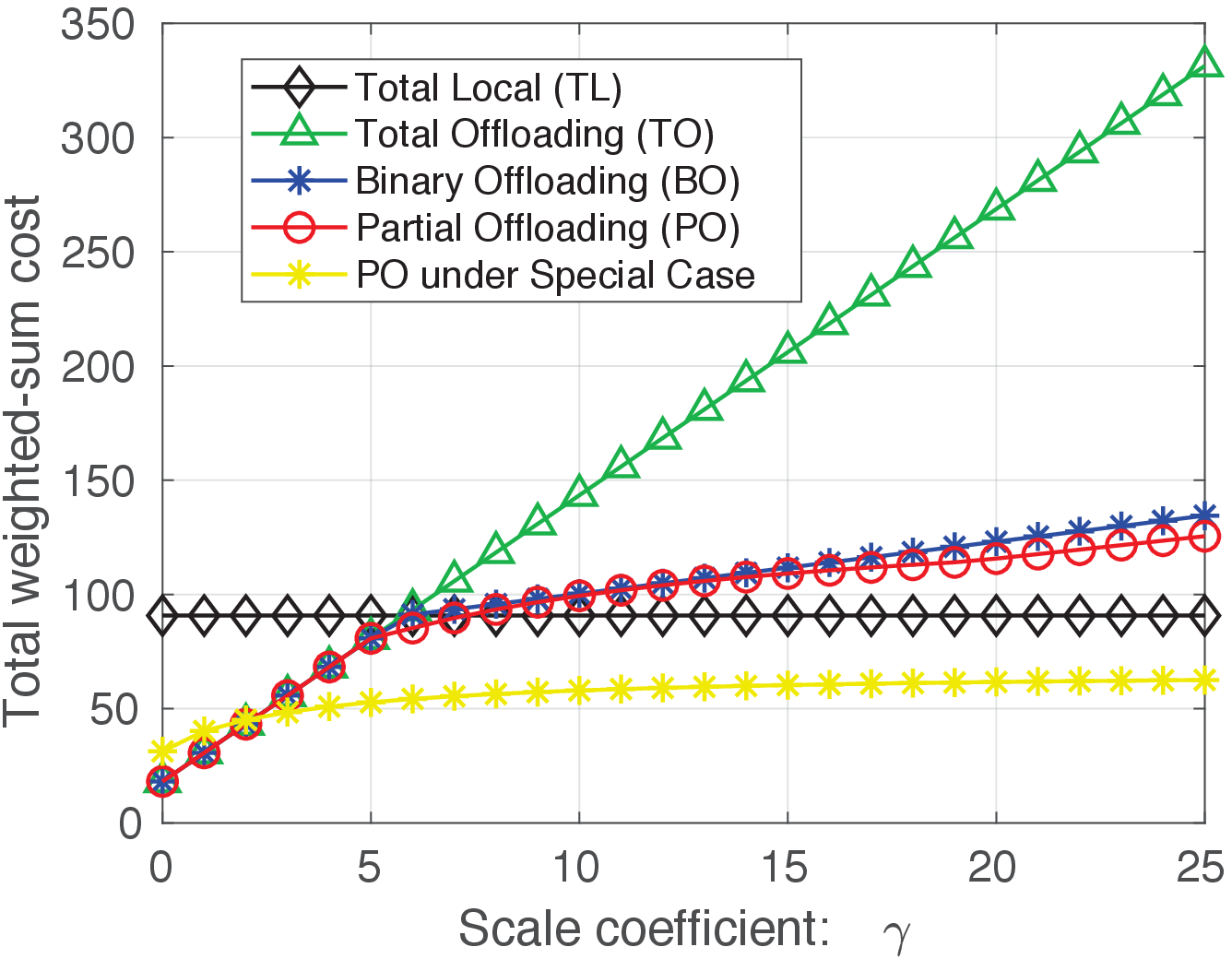}}
\hspace{-0.1in}
\subfigure[]{
\includegraphics[width=2.2in,height=1.68in]{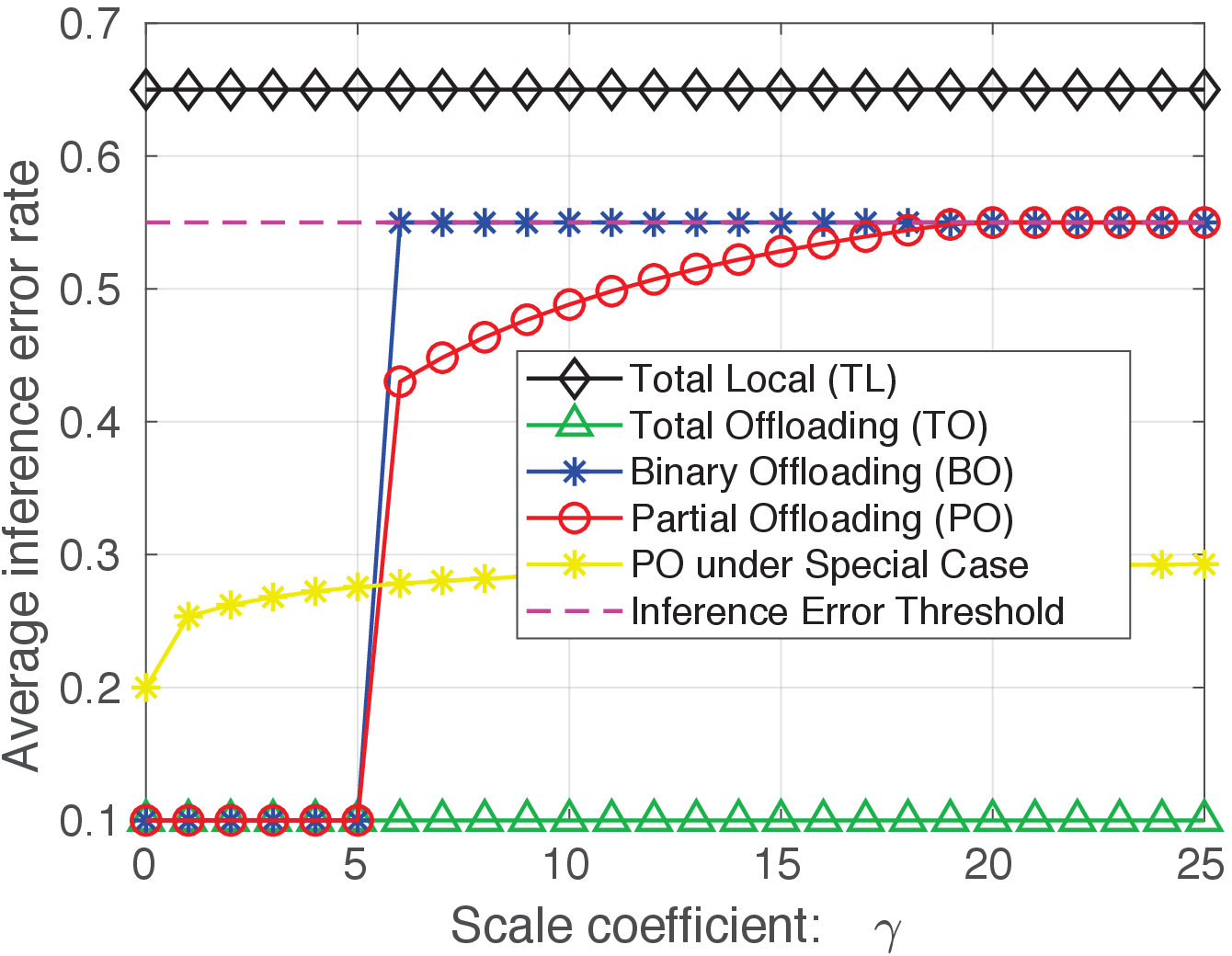}}
\caption{Data size scale coefficient ($\gamma$) versus total weighted-sum cost and average inference error rate are given in (a) and (b), respectively. The results of the proposed optimal BO and PO strategies are obtained from (\ref{binary_optimal_p}) and Corollary 1, respectively. The parameters are listed in Table~\ref{Parameters}, where $N=3$.} 
\label{gamma_impact}
\end{figure}

\subsection{Offloading Option Selection}
One of the four offloading options (i.e. non-offloading, binary offloading, partial offloading and full offloading) is prone to be selected for the visual target tracking based computation offloading scenarios to minimize the total weighted-sum cost, whilst meeting the inference error rate constraint. In this subsection, we aim to investigate the selection of the four offloading strategies under various scale coefficient $\gamma$.

Figs.~\ref{gamma_impact}(a)-(b) depict the total weighted-sum cost and the average inference error rate of the four offloading strategies in the proposed HMTD framework under various values of $\gamma$\footnote{Here $\gamma$ refers to the scale coefficient of the data size output from the higher-level layers at the UAV. The value of $\gamma$ may vary according to the design of the deep learning model. Taking CNN as an example, the value of $\gamma$ mainly depends on the number of filters. That is to say, $\gamma < 1$ can be achieved when pooling layers are used while $\gamma \gg 1$ may hold due to the deployment of filters.}, as shown in Fig.~\ref{gamma_impact}(a) and Fig.~\ref{gamma_impact}(b), respectively.
It can be observed that TO, BO and PO are prone to be selected when $\gamma$ is of a small value (e.g., $\gamma \le 5$ in the figure), which corresponds to the optimal condition (i.e., Con. A) derived in (\ref{binary_optimal_p}) and (\ref{Optimal_beta_case3_new}) in Section~\ref{binary_offloading_framework} and Section~\ref{partial_offloading_framwork}, respectively. When $\gamma$ increases (e.g., $6 \le \gamma \le 18$ in the figure), it is observed that both of the cost and inference error become larger with totally offloading the DL tasks to the MES. This is because the size of the intermediate data output from the higher-level layers becomes larger, which introducing a larger wireless transmission delay. In this situation, it is a better way to process some of the tasks locally and offload the remaining part to the MES (i.e., partial offloading strategy, corresponding to the optimal condition, Con. B derived in (\ref{Optimal_beta_case3_new})), until reaching the Con. B in (\ref{binary_optimal_p}) and Con. C in (\ref{Optimal_beta_case3_new}), respectively. Currently, BO and PO achieve similar performance. Moreover, it can be seen that the PO under the special case outperforms other candidate schemes due to the achieved smaller inference error rate, which benefits from the ability to distribute DL tasks based on the data quality. 

\subsection{Validation of Corollary 1}
Figs.~\ref{validation}(a)-(d) validate our analysis on the optimal offloading ratio of the partial offloading derived in Corollary 1, where the data size scale coefficient is respectively set to be $0.7$ and $30$, corresponding to two typical DL model design.
It can be seen that increasing $\beta$ results in a descending trend of total cost in Fig.~\ref{validation}(a) but a rising trend in Fig.~\ref{validation}(c). This is because when $\gamma$ is small, offloading data to the MES can save the energy consumption of UAVs without introducing too much communication delay. Therefore, offloading much data to the MES (i.e., with a larger value of $\beta$) is a good choice to decrease the cost. However, when a higher $\gamma$ is invoked, the size of intermediate data from the UAVs becomes larger. As a result, much more delay will be introduced by wireless communication, which becomes a bottleneck to the target tracking system. In this situation, processing more data locally (i.e., with a smaller value of $beta$) could be a better choice. Furthermore, it is observed that the optimal offloading ratio is $1$ and $0.2$, with $\gamma=0.7$ and $\gamma=30$, respectively. This observation is consistent with (\ref{Optimal_beta_case3_new}) derived in Corollary 1.

\begin{figure}[t]
\centering
 \captionsetup{font={footnotesize }}
\subfigure[]{
\includegraphics[width=1.65in,height=1.35in]{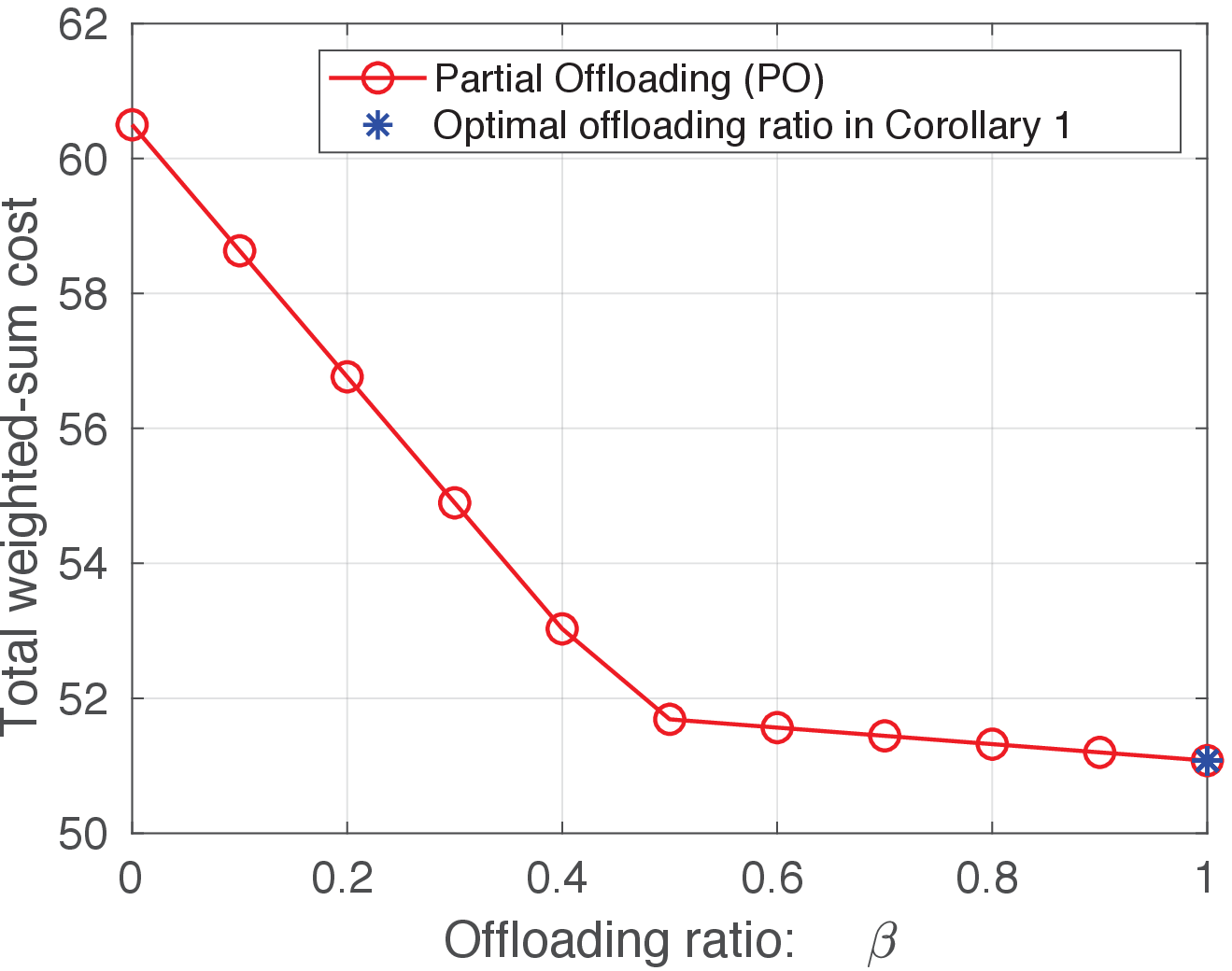}}
\hspace{-0.1in}
\subfigure[]{
\includegraphics[width=1.65in,height=1.35in]{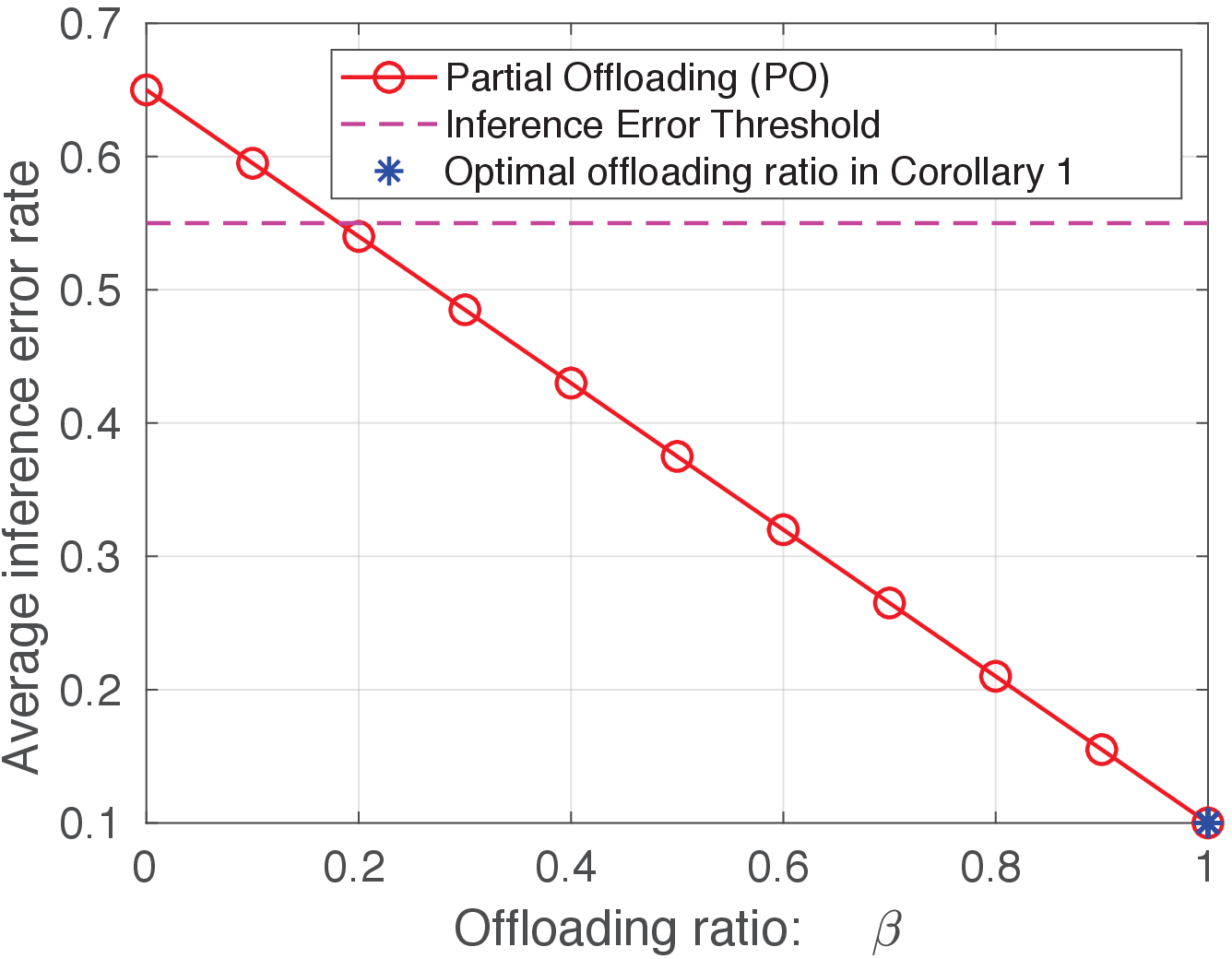}}
%\hspace{-0.1in}
\subfigure[]{
\includegraphics[width=1.65in,height=1.35in]{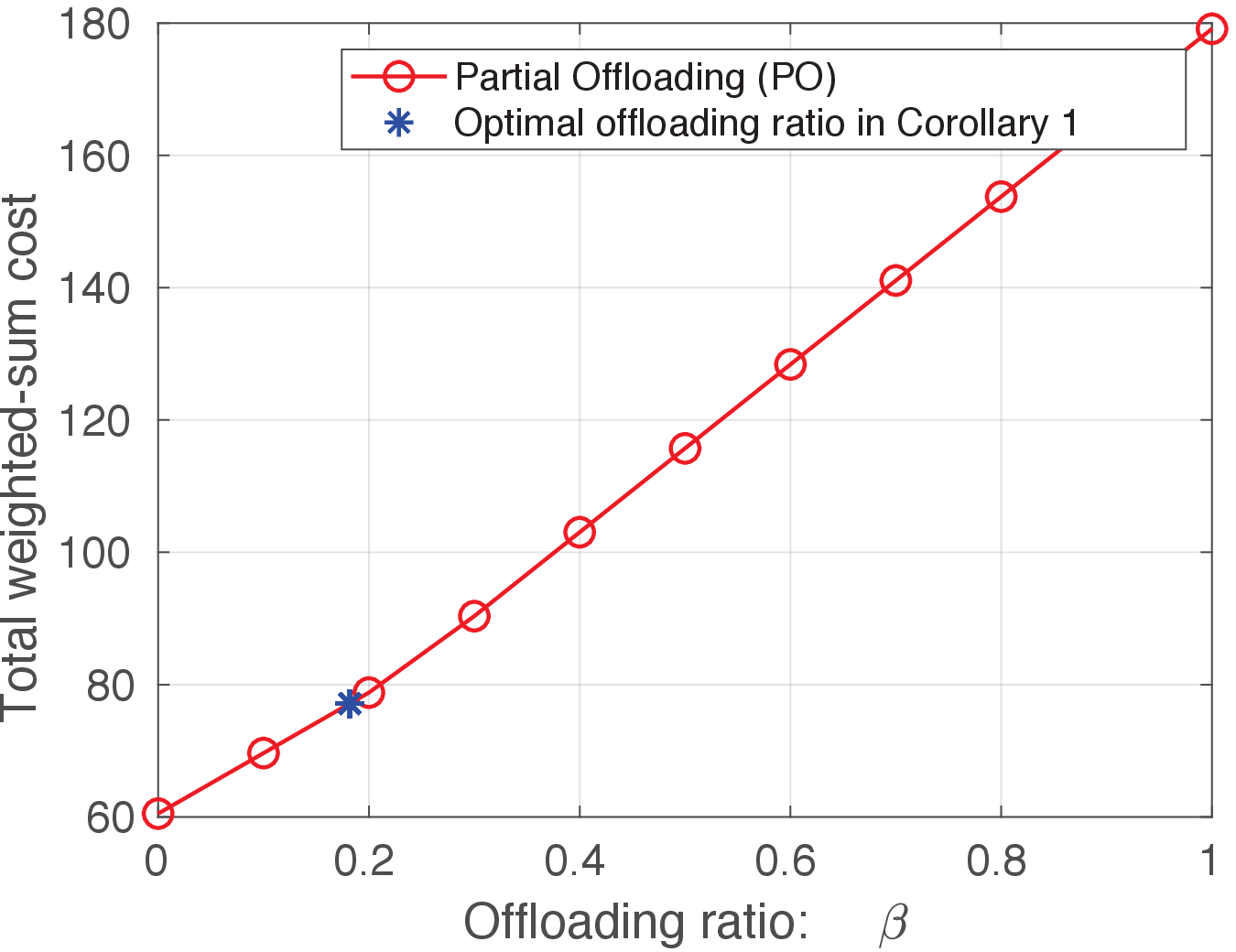}}
%\hspace{-0.1in}
\subfigure[]{
\includegraphics[width=1.65in,height=1.35in]{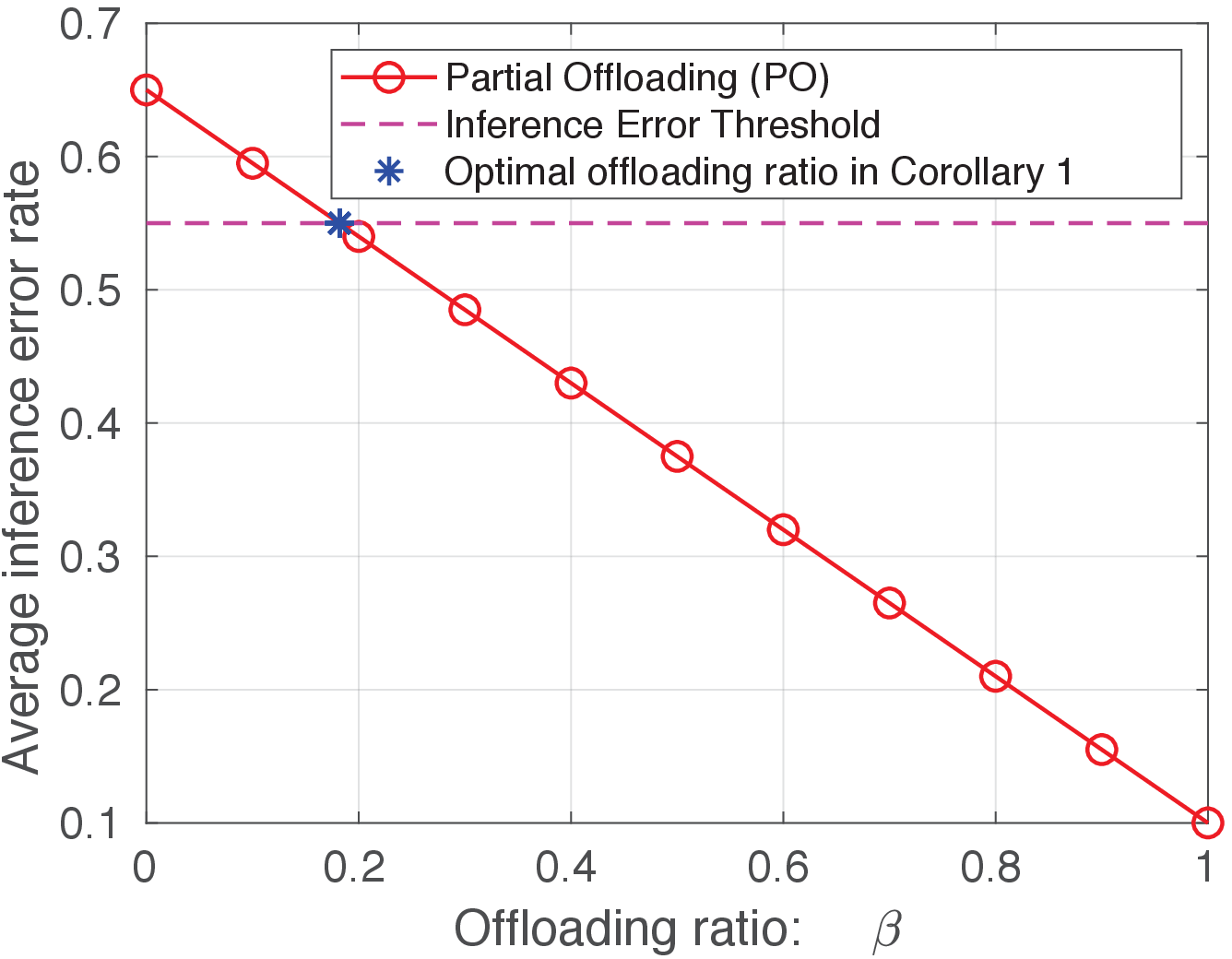}}
\caption{Offloading ratio ($\beta$) versus total weighted-sum cost and average inference error rate with $\gamma=0.7$ in (a)-(b), and with $\gamma=30$ in (c)-(d), respectively. The simulation parameters are listed in Table~\ref{Parameters}.} 
\label{validation}
\end{figure}

\begin{figure}[t]
\centering
 \captionsetup{font={footnotesize }}
\subfigure[]{
\includegraphics[width=1.6in,height=1.35in]{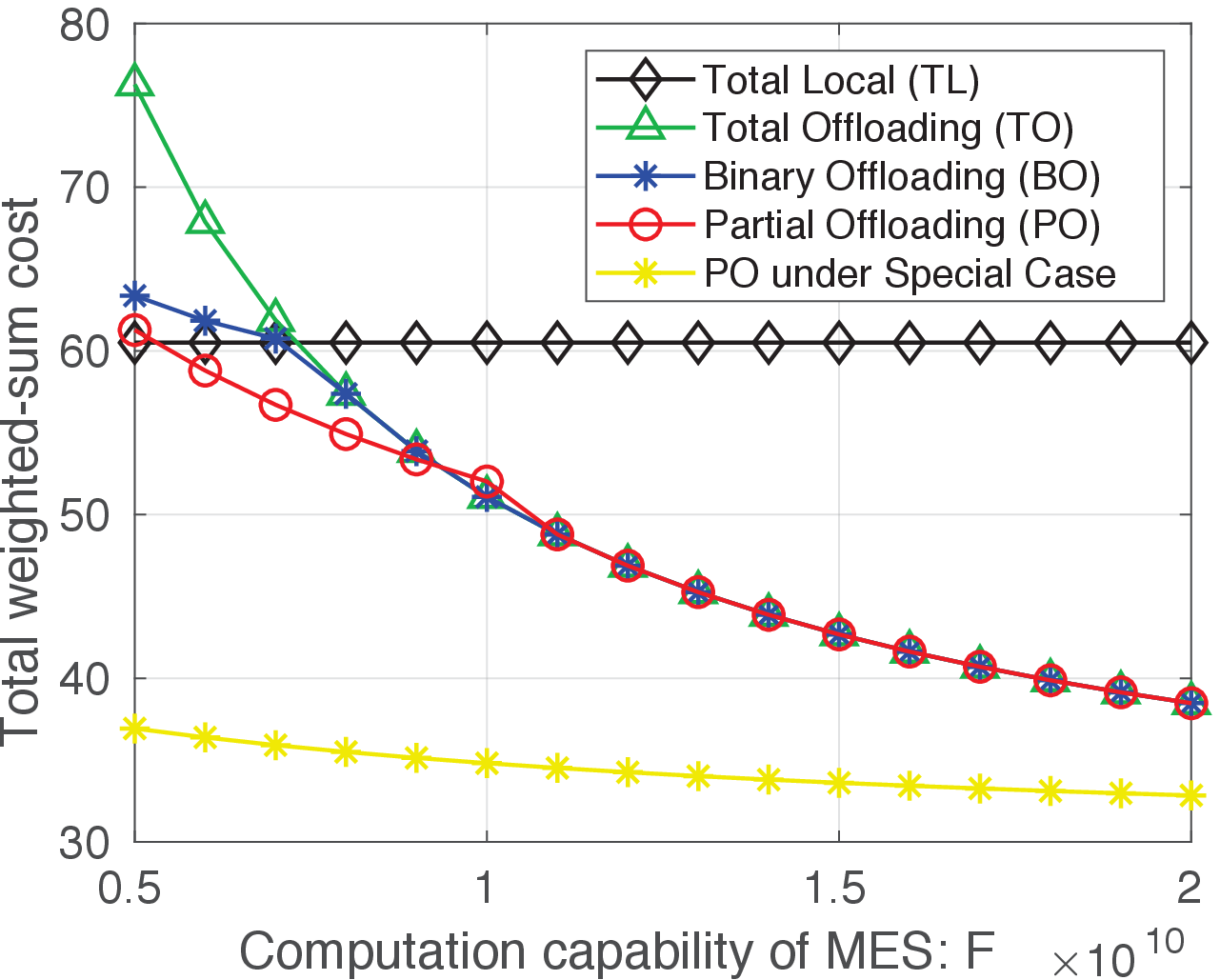}}
\hspace{-0.1in}
\subfigure[]{
\includegraphics[width=1.6in,height=1.35in]{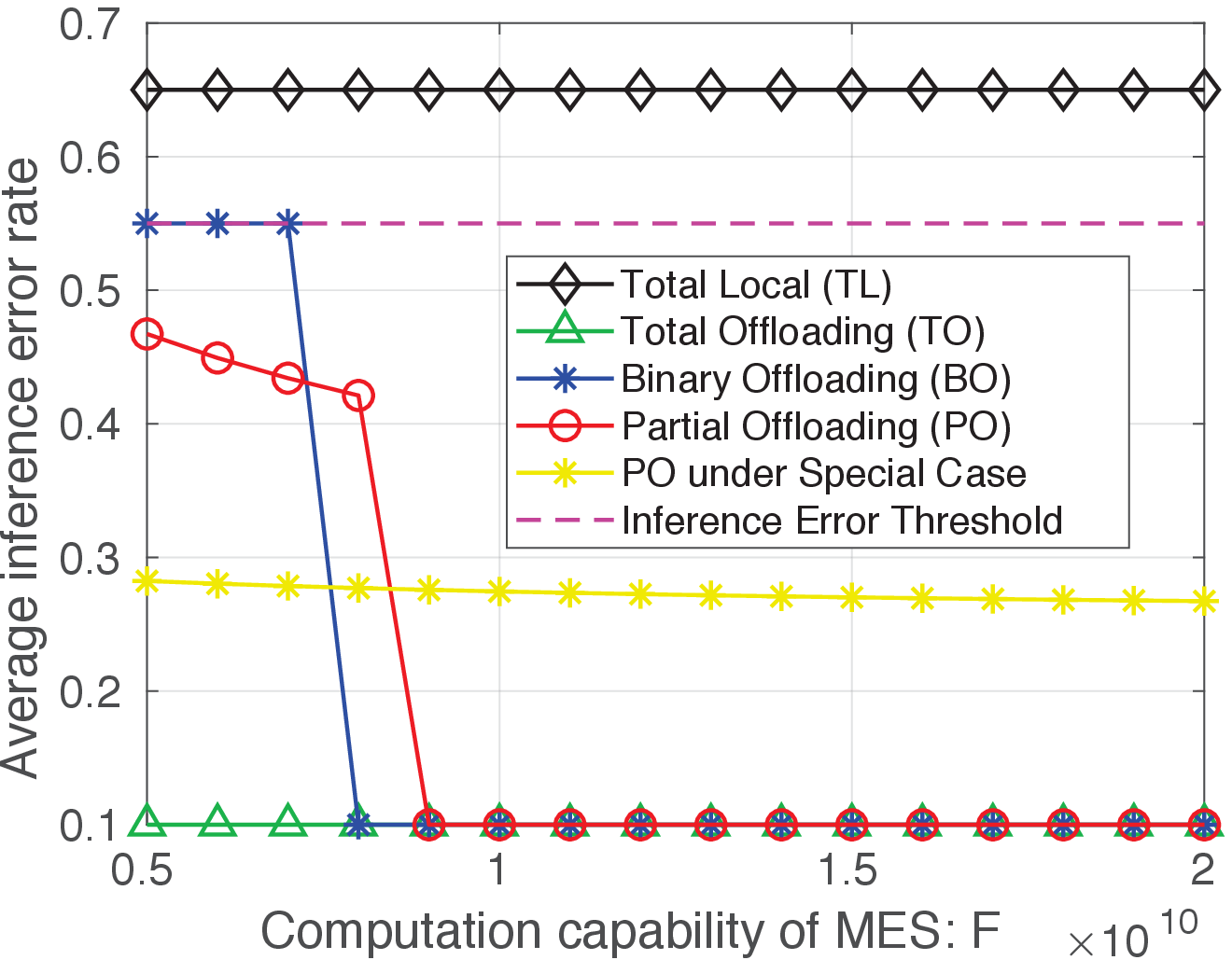}}
\hspace{-0.1in}
\subfigure[]{
\includegraphics[width=1.6in,height=1.35in]{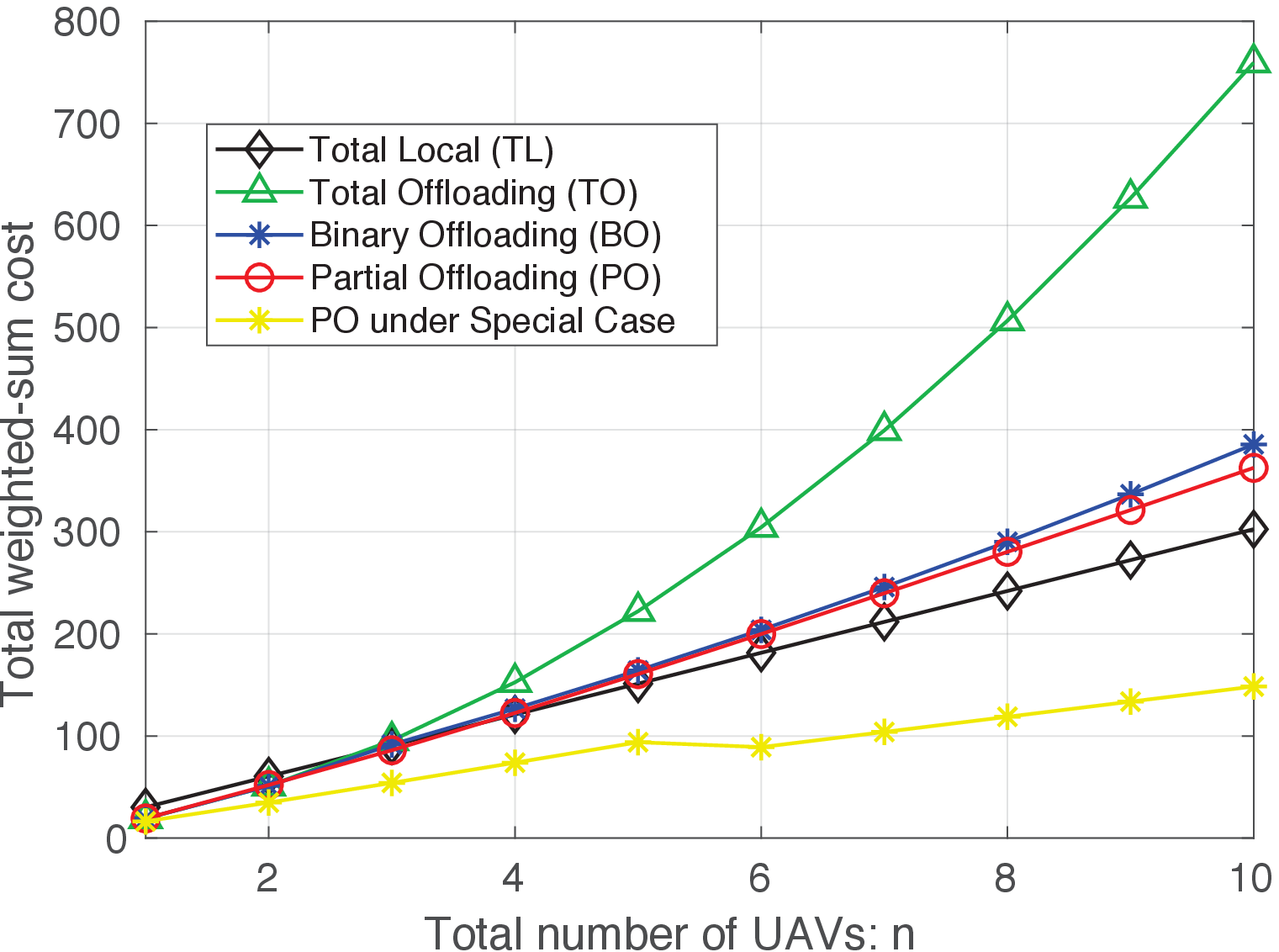}}
\hspace{-0.1in}
\subfigure[]{
\includegraphics[width=1.6in,height=1.35in]{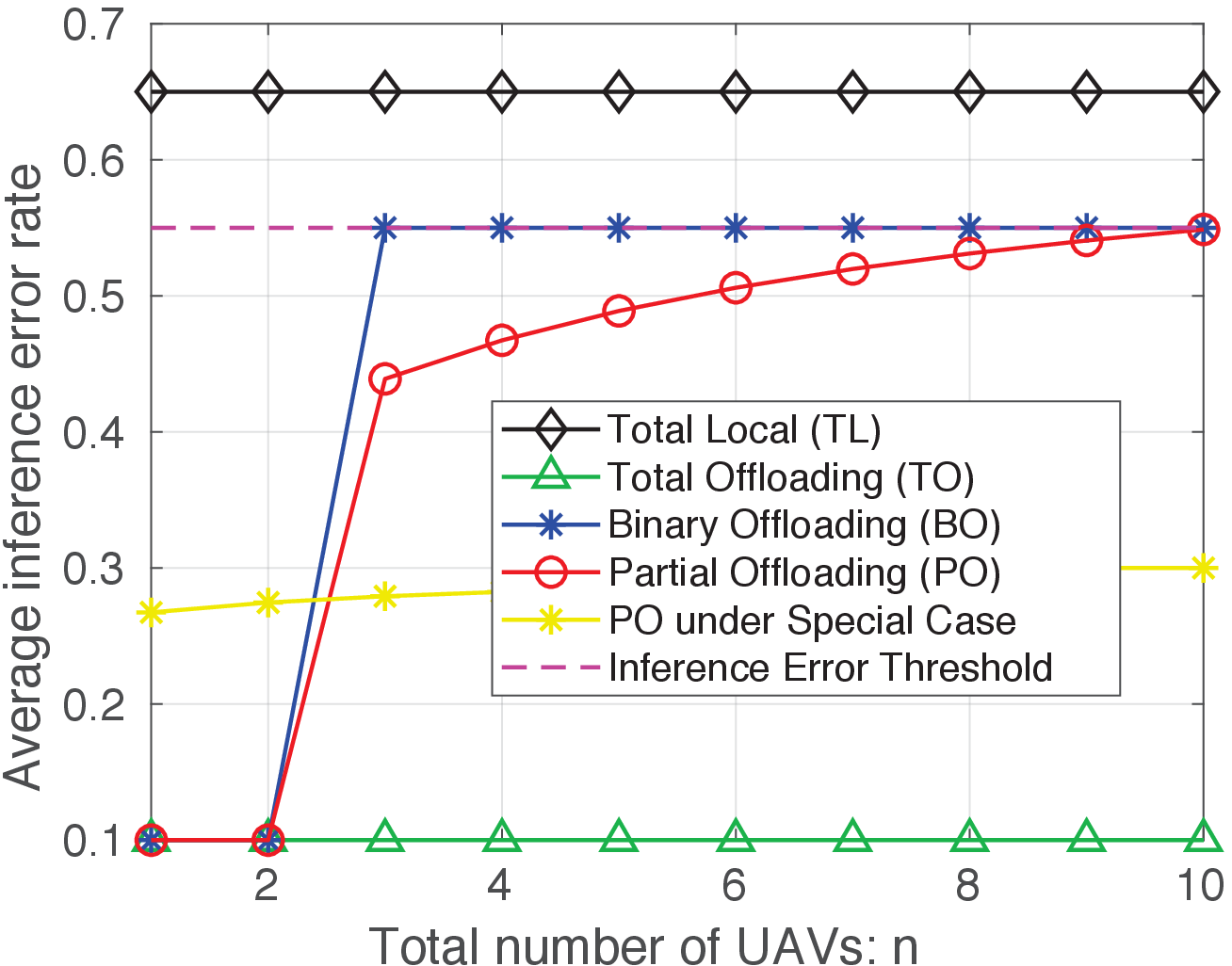}}
\caption{Computation capability of MES ($F$) and total number of UAVs ($n$) versus total weighted-sum cost and average inference error rate are given in (a)-(b) and (c)-(d), respectively. The results of the proposed optimal BO and PO strategies are obtained from (\ref{binary_optimal_p}) and Corollary 1, respectively. The simulation parameters are listed in Table~\ref{Parameters}.}
\label{sim_01}
\end{figure}

\subsection{Impact of the computation capability of MES ($F$) and of the total number of UAVs ($n$)}
Figs.~\ref{sim_01}(c)-(f) depict the impact of $F$ and $n$ on the total weighted-sum cost and average inference error rate, respectively.
As $F$ increases, which indicates that the computation capability of the MES is improved, we can observe from Figs.~\ref{sim_01}(c)-(d) that except for the TL scheme, the cost and inference error rate of the other four schemes is gradually decreased. When $F$ becomes large enough, e.g., $F \ge 10$ GHz, the BO and PO strategies switch into the totally offloading mode (i.e., $\mu^*=\beta^*=1$ is selected), which outperforms the TL significantly. 
Figs.~\ref{sim_01}(e)-(f) evaluate the total weighted-sum cost and average inference error rate for various values of the total number of UAVs ($n$).
The observations are illustrated as follows. Firstly, the advantage of offloading is granted when $n$ is of a low value, e.g., $n \le 4$, by the fact that the cost introduced by wireless communication between UAVs and MES is not too large. Secondly, the increase of $n$ is capable of drastically increasing the total cost of TO when $n$ becomes larger, whereas the inference error rate of TO always keeps at the smallest. Thirdly, PO outperforms BO within the $n$ range of $3$ to $8$, especially the inference error rate, which can be explained with the aid of fine-grained selection of $\beta$ in Corollary 1. Explicitly, it is observed in Fig.~\ref{sim_01}(f) that when $n$ reaches $3$, the achieved inference error rate of PO gradually approaches to the inference error rate threshold $\epsilon_T$, whereas the BO has reached at $\epsilon_T$ already. Furthermore, The advantage of
PO over BO is marginal when we set L$n$ as a large value since the allocated wireless bandwidth becomes less upon increasing $n$.

\begin{figure}[t]
\centering
 \captionsetup{font={footnotesize }}
\subfigure[]{
\includegraphics[width=1.6in,height=1.35in]{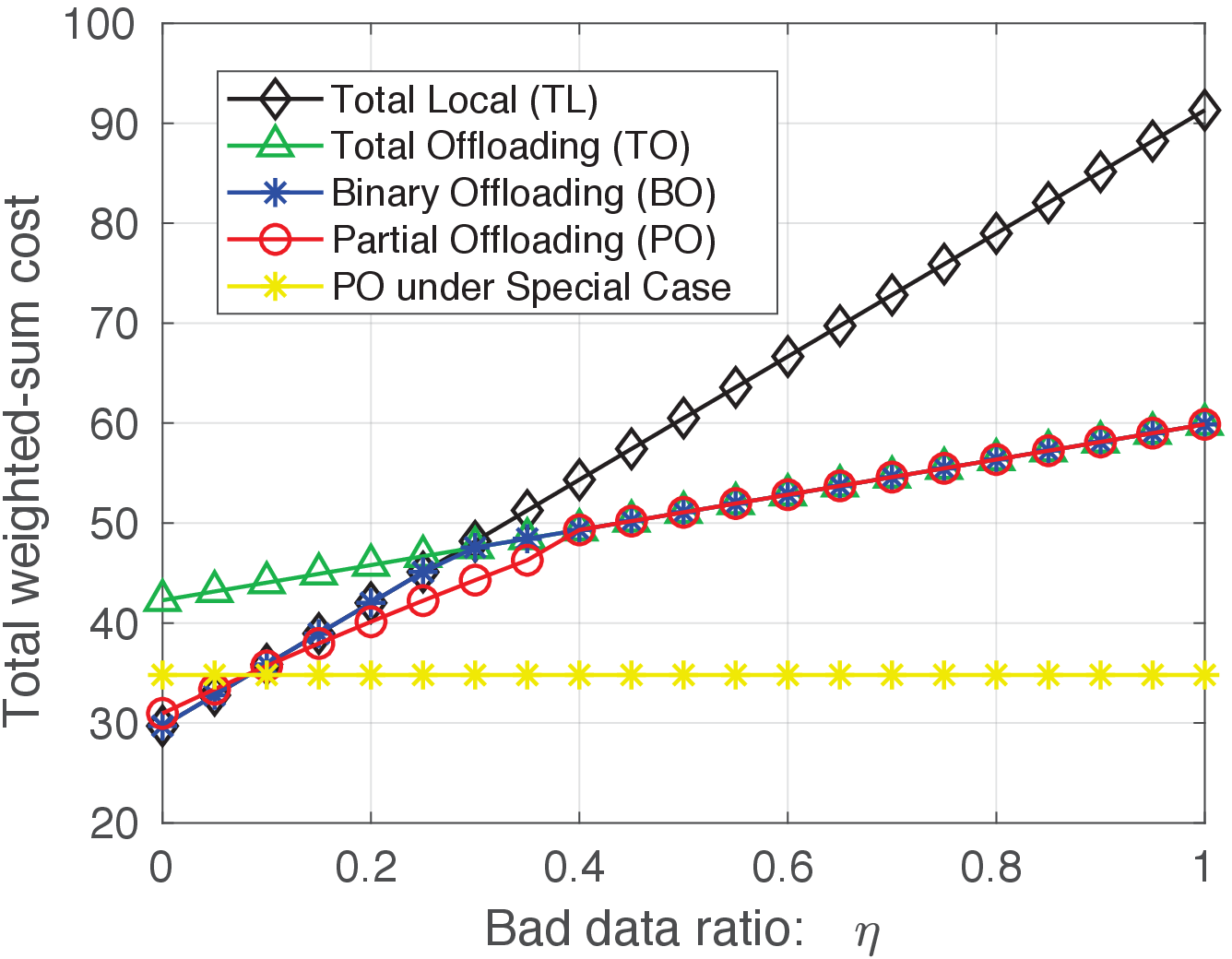}}
\hspace{-0.1in}
\subfigure[]{
\includegraphics[width=1.6in,height=1.35in]{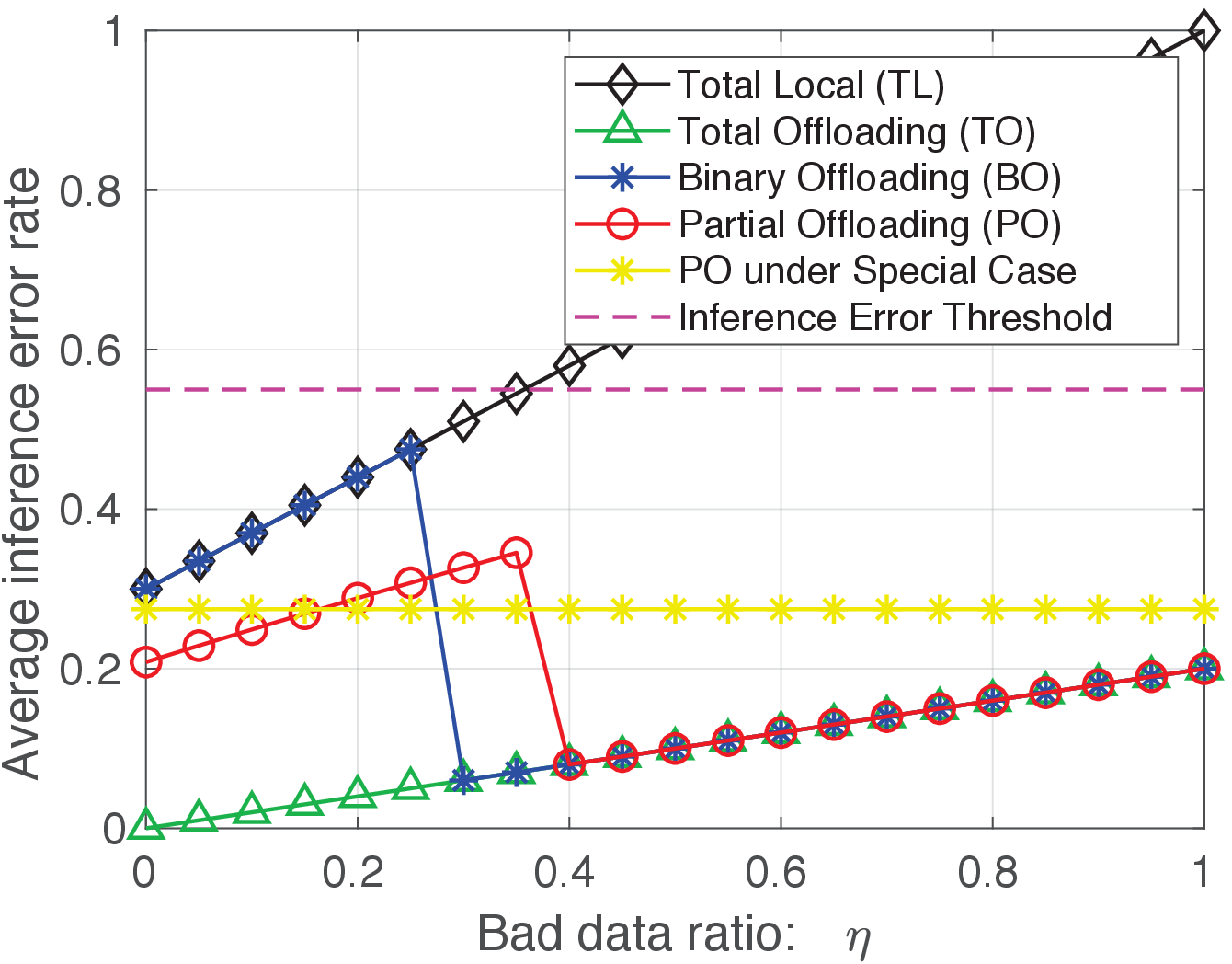}}
\subfigure[]{
\includegraphics[width=1.6in,height=1.35in]{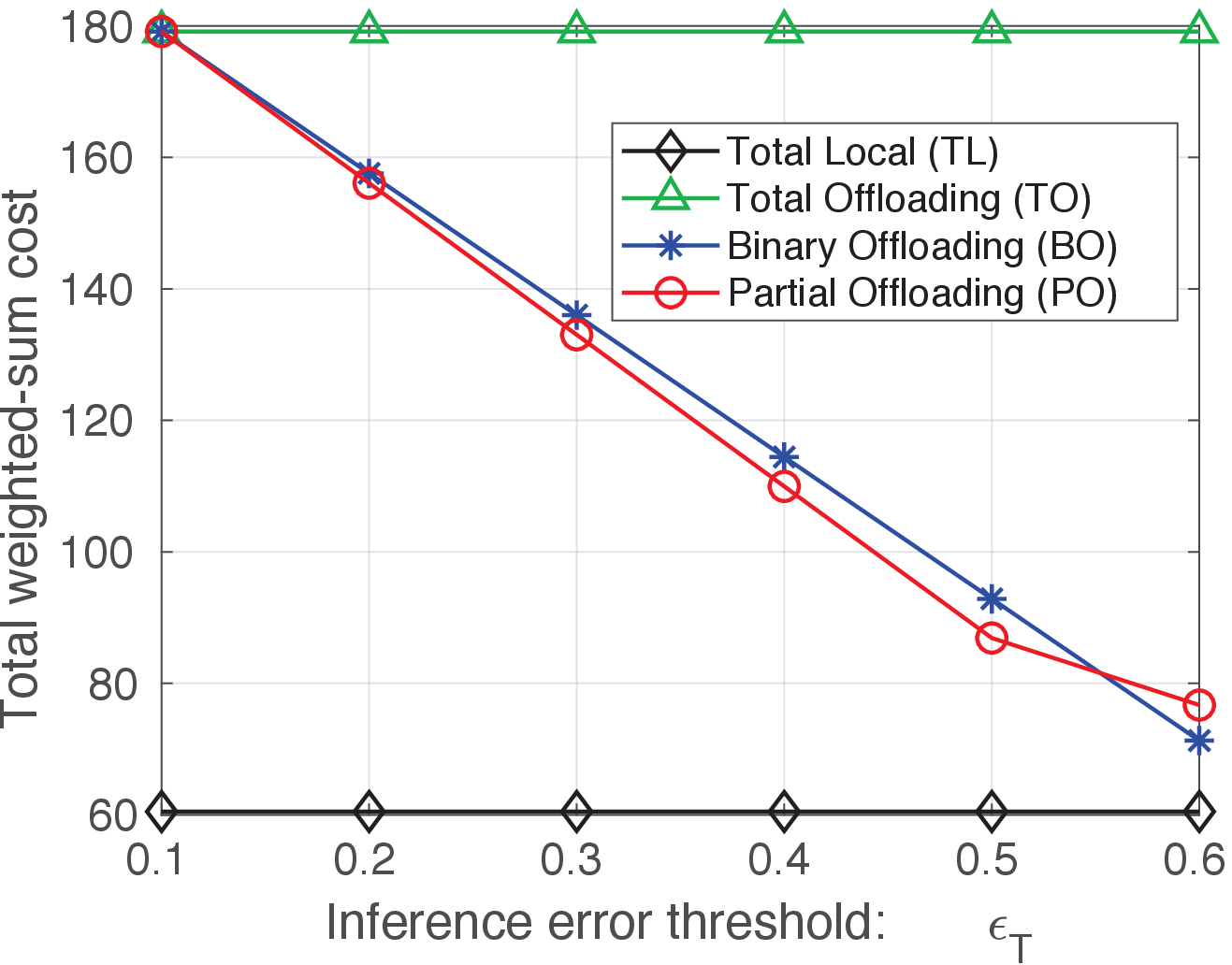}}
\hspace{-0.1in}
\subfigure[]{
\includegraphics[width=1.6in,height=1.35in]{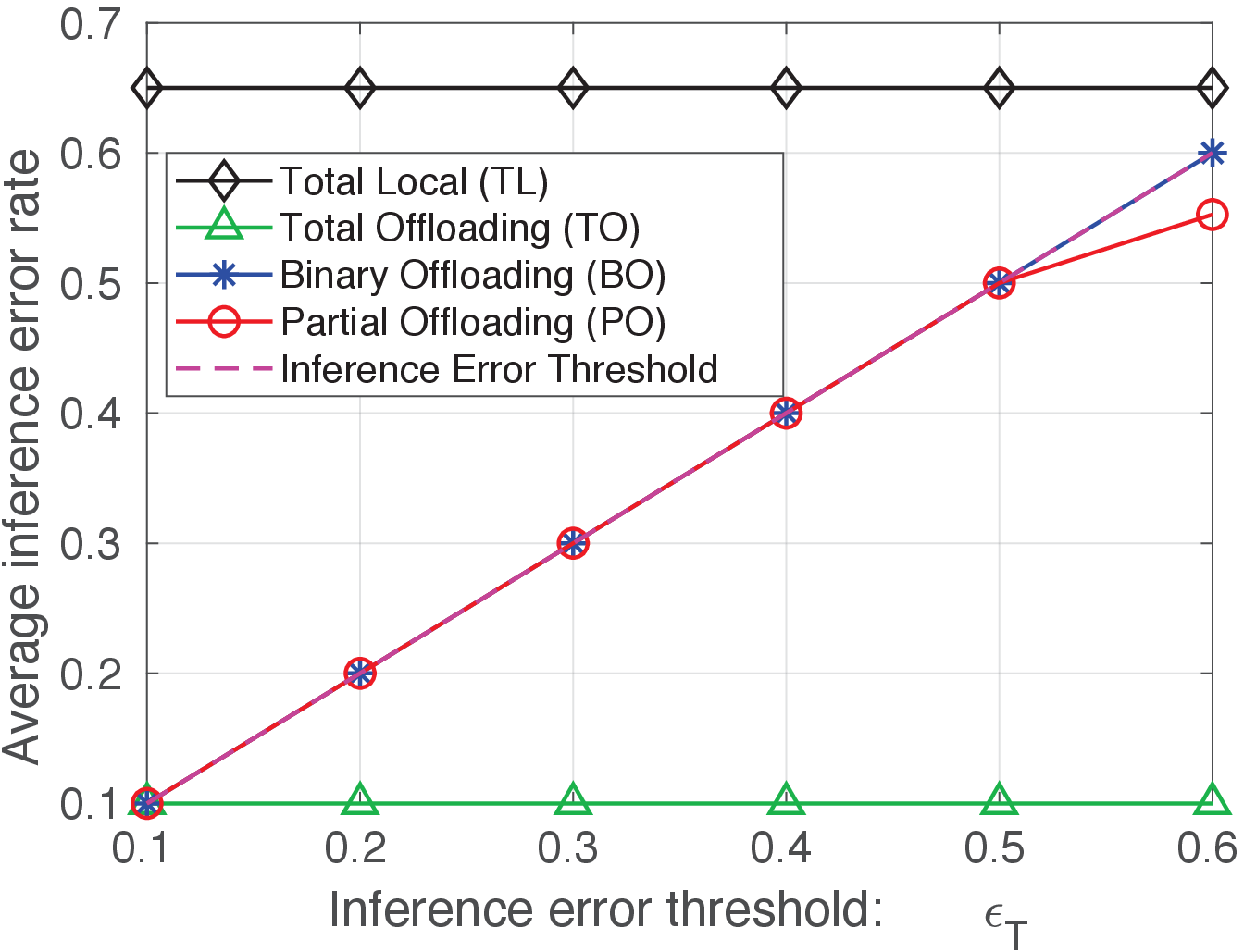}}
\caption{Bad data ratio ($\eta$) and inference error rate threshold ($\epsilon_T$) versus total weighted-sum cost and average inference error rate are given in (a)-(b) and (c)-(d), respectively. The results of the proposed optimal BO and PO strategies are obtained from (\ref{binary_optimal_p}) and Corollary 1, respectively. The simulation parameters are listed in Table~\ref{Parameters}, where we set $\gamma=30$ in (c) and (d).}
\label{sim_02}
\end{figure}

\subsection{Impact of the bad data ratio ($\eta)$ and of the inference error threshold ($\epsilon_T$)}
Figs.~\ref{sim_02}(a)-(b) characterize the impact of bad data rate on the weighted-sum cost and inference error rate using different offloading strategies.
It can be observed that increasing $\eta$ results in both increased cost and inference error rate of all the offloading strategies except the ``PO under special case", as illustrated in Fig.~\ref{sim_02}(a) and Fig.~\ref{sim_02}(b). When $\eta$ is small, e.g., $\eta<0.4$, the local computing is more competitive due to the low achieved cost but high inference accuracy. With the increase of $\eta$, the advantage of TO, BO, and PO schemes is an explicit benefit of exploiting the advantage of offloading when $\eta$ is large, e.g., $\eta \ge 0.4$. Furthermore, we can see that ``PO under special case" always keeps unchanged, which is because that the inference error rate of UAV and MES do not change with $\eta$ leading to a static optimal offloading ratio.

Let us now depict the achieved weighted-sum cost and inference error rate for different values of the inference error threshold ($\epsilon_T$) in Figs.~\ref{sim_02}(c)-(d). It can be observed
that the increase of $\epsilon_T$ results in the reduction of total cost and in the increase of the inference error rate for both BO and PO strategies. This is because the increase of $\epsilon_T$ can lead to the reduction of the necessity of offloading, hence further reduces the total cost, whilst satisfying the inference error rate condition, as verified in Fig.~\ref{sim_02}(d). Moreover, we can see that both the cost and inference error rate of TL and TO strategies keeps unchanged, which is because the way to process the DL tasks is not related to $\epsilon_T$, i.e., the DL tasks are processed totally at UAVs in TL while the DL tasks are totally offloaded to the MES in TO.

\section{Conclusion} \label{conclusion}
In this paper, deep learning enabled visual target tracking on UAVs is considered. Due to limited computing resources and tight energy budget of small UAVs, a novel deployment of trained convolutional neural network (CNN) model for target tracking is applied where the lower layers of the CNN are deployed on the UAV, while the corresponding higher layers are deployed at the MEC server. This setup fulfills the need for timely processing of the video images while taking into account the practical constraints. When the image quality is good, the lower layers of the CNN are able to provide enough features for the tracking performance to be acceptable and only local computations on the UAV are carried out. On the other hand, bad image quality would call for further processing through the higher layers of the CNN at the MEC server. In this context, a novel offloading framework is proposed in this paper to address the tradeoff between delay and energy consumption while taking into account many constraints in reality such as varying image quality, communications bandwidth between UAV and the MEC server, as well as resource sharing among multiple UAVs. The derived analytical results allow us to obtain important insights on MEC supported UAV tracking under a realistic environment. Furthermore, most of the analysis and observations will remain valid and they are not subject to the exact formula used in our simulations that for demonstration purpose only.
 
\appendices
\section{Proof of Lemma 3}
According to (\ref{total_cost_case3}), when $\delta_p^i \geq \delta_{op}^i$ holds, we can achieve
\begin{equation} \label{cost_case3_tmp1}
{\cal O}_{total}^{P_{case3}}  \!=\! \sum_{i=1}^{n} \!\left(\! \theta \left( \frac{c_i}{f_l^i} + \delta_p^i \right) \!+\! \left ( 1\!-\!\theta \right ) \left ( \varepsilon_{l}^{i}\!+\!\varepsilon_{o}^{i} \right ) \!\right).
\end{equation}

Substituting $\delta_p^i = \left(1-\beta_i \right) \rho_i \tilde{\epsilon_L}$ into (\ref{cost_case3_tmp1}), we have
\begin{equation} \label{cost_case3_tmp2}
{\cal O}_{total}^{P_{case3}}  \!=\! \sum_{i=1}^{n} \left( \beta_i A + B \right),
\end{equation}
where 
\begin{equation}
A=\left(1-\theta \right) \left( \gamma_i \left (P_t^i  \frac{s_i}{R_i}\!+\! P^i_I \frac{c_i}{f_i} \right ) - \xi_i \left( \tilde{\epsilon_L} \!-\! \tilde{\epsilon_H} \right) \right) - \theta \rho_i \tilde{\epsilon_L}, \notag
\end{equation}
and
\begin{equation}
B=\theta \left(\frac{c_i}{f_i}+\rho_i \tilde{\epsilon_L} \right) + \left(1-\theta \right) \kappa \left ( f_{l}^{i} \right )^2 c_i. \notag
\end{equation}

\begin{itemize}
\item If $A > 0$ holds, i.e., $\gamma_i > \gamma_i^{T1}$ is obtained, then the weighted-sum cost monotonously increases with $\beta_i$. Since $\beta_i\ge \tilde{\beta_i}$ holds meeting the constraint $\mathbf{C1}$ and $\beta_i\le \hat{\beta}_i^{case1}$ is obtained by solving $\delta_p^i \geq \delta_{op}^i$, the optimal offloading ratio is achieved as $\beta_i^{case3*}={\rm min}\left \{ \hat{\beta}_i^{case1}, \ \tilde{\beta_i} \right \}$.
\item  On the contrary, when $A \le 0$, i.e., $\gamma_i < \gamma_i^{T1}$ holds, then the weighted-sum cost monotonously decreases with $\beta_i$. In this case, we can obtain $\beta_i^{case3*}={\rm min}\left \{ \hat{\beta}_i^{case1},  1 \right \}=\hat{\beta}_i^{case1}$.  
\end{itemize}

To this end, we have proved Lemma 3.
%
%% you can choose not to have a title for an appendix
%% if you want by leaving the argument blank
%\section{}
%Appendix two text goes here.

% use section* for acknowledgment
%\section*{Acknowledgment}
%This research work is supported in part by the U.S. Office of the Under Secretary of Defense for Research and Engineering (OUSD(R\&E)) under agreement number FA8750-15-2-0119. The U.S. Government is authorized to reproduce and distribute reprints for governmental purposes notwithstanding any copyright notation thereon. The views and conclusions contained herein are those of the authors and should not be interpreted as necessarily representing the official policies or endorsements, either expressed or implied, of the Office of the Under Secretary of Defense for Research and Engineering (OUSD(R\&E)) or the U.S. Government.

% Can use something like this to put references on a page
% by themselves when using endfloat and the captionsoff option.
\ifCLASSOPTIONcaptionsoff
  \newpage
\fi

% biography section
% 
% If you have an EPS/PDF photo (graphicx package needed) extra braces are
% needed around the contents of the optional argument to biography to prevent
% the LaTeX parser from getting confused when it sees the complicated
% \includegraphics command within an optional argument. (You could create
% your own custom macro containing the \includegraphics command to make things
% simpler here.)
%\begin{IEEEbiography}[{\includegraphics[width=1in,height=1.25in,clip,keepaspectratio]{mshell}}]{Michael Shell}
% or if you just want to reserve a space for a photo:

%\begin{IEEEbiography}{Michael Shell}
%Biography text here.
%\end{IEEEbiography}
%
%% if you will not have a photo at all:
%\begin{IEEEbiographynophoto}{John Doe}
%Biography text here.
%\end{IEEEbiographynophoto}
%
%
%\begin{IEEEbiographynophoto}{Jane Doe}
%Biography text here.
%\end{IEEEbiographynophoto}

% You can push biographies down or up by placing
% a \vfill before or after them. The appropriate
% use of \vfill depends on what kind of text is
% on the last page and whether or not the columns
% are being equalized.

%\vfill

% Can be used to pull up biographies so that the bottom of the last one
% is flush with the other column.
%\enlargethispage{-5in}

% that's all folks
\end{spacing}
\end{document}